\def\singlespace{\def\baselinestretch{1}\@normalsize}
\def\singlespace{\def\baselinestretch{1}\@normalsize}
\numberwithin{equation}{section}
\renewcommand{\hat}{\widehat}
\renewcommand{\hat}{\widehat}
\newcommand{\bfm}[1]{\ensuremath{\mathbf{#1}}}
\def\ba{\bfm a}   \def\bA{\bfm A}  
\def\bb{\bfm b}   \def\bB{\bfm B}  
   \def\bC{\bfm C}  
\def\bd{\bfm d}   \def\bD{\bfm D}  
\def\bff{\bfm f}  \def\bF{\bfm F}
   \def\bI{\bfm I}
   \def\bM{\bfm M}
     \def\PP{\mathbb{P}}
   \def\bR{\bfm R}
\def\bu{\bfm u}   \def\bU{\bfm U}  
\def\bv{\bfm v}   \def\bV{\bfm V}  
\def\bw{\bfm w}   \def\bW{\bfm W}  
\def\bx{\bfm x}   \def\bX{\bfm X}  
   \def\bZ{\bfm Z}
\def\calF{{\cal  F}}
 \def\cL{{\cal  L}}
 \def\cP{{\cal  P}}
 \def\cS{{\cal  S}}
\newcommand{\bfsym}[1]{\ensuremath{\boldsymbol{#1}}}
 \def\bbeta{\bfsym \beta}
 \def\bgamma{\bfsym \gamma}             \def\bGamma{\bfsym \Gamma}
 \def\bdelta{\bfsym {\delta}}           
 \def\bmu{\bfsym {\mu}}                 
 \def\btheta{\bfsym {\theta}}           
              \def\bSigma{\bfsym \Sigma}
         \def\bLambda {\bfsym {\Lambda}}
 \def\bxi{\bfsym {\xi}}
\def \R {\mathbb{R}}
 \def\hbeta{\hat{\beta}}
\DeclareMathOperator*{\argmin}{argmin}
\DeclareMathOperator{\cov}{cov}
\DeclareMathOperator{\diag}{diag}
\DeclareMathOperator{\E}{\mathbb{E}}
\DeclareMathOperator{\supp}{supp}
\DeclareMathOperator{\sign}{\rm sign}
\def\newpage{\vfill\eject}
\def\today{\ifcase\month\or
  January\or February\or March\or April\or May\or June\or
  July\or August\or September\or October\or November\or December\fi
  \space\number\day, \number\year}
\newdimen\biblioindent    \biblioindent=30pt
\newcommand{\beq}{\begin{equation}}
  \newcommand{\eeq}{\end{equation}}
\newcommand{\beqn}{\begin{eqnarray}}
  \newcommand{\eeqn}{\end{eqnarray}}
\newcommand{\beqnn}{\begin{eqnarray*}}
  \newcommand{\eeqnn}{\end{eqnarray*}}
\renewcommand{\baselinestretch}{1.4}
\numberwithin{equation}{section}
\theoremstyle{plain}
\newtheorem{thm}{Theorem}[section]
\newtheorem{lem}{Lemma}[section]
\newtheorem{ass}{Assumption}[section]
\theoremstyle{definition}
\newtheorem{rem}{Remark}[section]
\newcounter{CondCounter}
\begin{document}

\def\spacingset#1{\renewcommand{\baselinestretch}%
{#1}\small\normalsize} \spacingset{1}


  \title{\bf Factor-Augmented Regularized Model for Hazard Regression}
  \author{Pierre Bayle\thanks{
  The authors gratefully acknowledge the support of NIH grant 2R01-GM072611-14 and NSF grant DMS-2053832. Emails: \texttt{pbayle@princeton.edu}, \texttt{jqfan@princeton.edu}.
  }\hspace{.2cm} and Jianqing Fan\footnotemark[1]\hspace{.2cm}\\
    Department of Operations Research and Financial Engineering\\
     Princeton University}
  \date{September 30, 2022}
  \maketitle
  \vspace*{-15mm}

\bigskip
\begin{abstract}
A prevalent feature of high-dimensional data is the dependence among covariates, and model selection is known to be challenging when covariates are highly correlated.
To perform model selection for the high-dimensional Cox proportional hazards model in presence of  correlated covariates with factor structure, we propose a new model, Factor-Augmented Regularized Model for Hazard Regression (FarmHazard), which builds upon latent factors that drive covariate dependence and extends Cox's model. This new model generates procedures that operate in two steps by learning factors and idiosyncratic components from high-dimensional covariate vectors and then using them as new predictors.
Cox's model is a widely used semi-parametric model for survival analysis, where censored data and time-dependent covariates bring additional technical challenges.
We prove model selection consistency and estimation consistency under mild conditions.
We also develop a factor-augmented variable screening procedure to deal with strong correlations in ultra-high dimensional problems.
Extensive simulations and real data experiments demonstrate that our procedures enjoy good performance and achieve better results on model selection, out-of-sample C-index and screening than alternative methods.
\end{abstract}

\noindent%
{\it Keywords:}  High-dimensional,  Cox's proportional hazards model, Factor model, Model selection, Censored data, Variable screening.
\vfill

\newpage
\spacingset{1.5}
\section{Introduction}
\label{sec:intro}

An enormous volume of data is accessible in many fields, including biomedicine and clinical trials, and efficient and valid statistical methods are necessary to study it.
In survival analysis, the outcome variable is time-to-event, such as biological death, relapse, failure of a mechanical engine or credit default, and often some observations are censored. For example, a study can come to an end while a fraction of subjects have not experienced the event of interest, or a subject can leave the study before its end. In this context, a widely used semi-parametric model is Cox's proportional hazards model \citep{Cox_1972, Cox_1975}. \citet{Andersen_Gill_1982} formulated it into a counting process framework. In the fixed-dimension setting, \cite{Tsiatis_1981} and \citet{Andersen_Gill_1982} proved the consistency and asymptotic normality of the maximum partial likelihood estimator. Yet, modern datasets frequently have more predictors than samples.
Despite the very large number of predictors, most of them are often irrelevant to explain the outcome, leading to sparse models. Reducing high-dimensional data to the true set of relevant covariates is one of the most important tasks in high-dimensional statistics and is a challenge in the analysis of big data. Models would become more interpretable and prediction more accurate. To this end, several regularized regression techniques have been extended to Cox's proportional hazards model \citep{Tibshirani_1997, Fan_Li_2002}. \citet{Bradic_etal_2011} established model selection consistency and strong oracle properties for a large class of penalty functions in the ultra-high dimensional setting, with LASSO and SCAD as special cases. \citet{Huang_etal_2013} and \citet{Kong_Nan_2014} studied oracle inequalities for LASSO under different conditions.

When variables are correlated, most model selection techniques fail to recover the set of important predictors, in both high-dimensional and ultra-high dimensional settings. \citet{Fan_etal_2020} suggested FarmSelect, a two-step procedure that learns factors and idiosyncratic components and use them as new predictors, to overcome the dependence problem among covariates in the setting of $\ell_{1}$-penalized generalized linear models.
An even more demanding task is to consider models that go beyond generalized linear models, such as Cox's proportional hazards model, where censored data and time-dependent covariates bring additional technical challenges.
To cope with correlation in high dimensions within the challenging survival analysis setting, we propose Factor-Augmented Regularized Model for Hazard Regression (FarmHazard). High-dimensional genomics and genetic data are naturally strongly correlated, and our model is designed to address this kind of issues. It has important applications, such as the prediction of the outcome of chemotherapy based on gene-expression profiles coming from DNA microarrays \citep{Rosenwald_etal_2002}.

In ultra-high dimensional problems, characterized by a dimension that grows with the sample size in a non-polynomial fashion, regularized regression faces multiple statistical and computational challenges \citep{Fan_etal_2009}.
To remedy this, screening methods \citep{Fan_Lv_2008, Fan_Song_2009, Wang_Leng_2016} have been developed; they enjoy statistical guarantees and are computationally efficient. \citet{Fan_etal_2010} extended the key idea of sure independence screening to Cox's model, and \citet{Zhao_Li_2012} provided theoretical support. Yet, screening methods tend to include too many variables when strong correlations exist among covariates \citep{Fan_Lv_2008, Wang_Leng_2016}. We propose a factor-augmented variable screening procedure that is able to deal with these strong correlations for Cox's proportional hazards model.

The paper is organized as follows. Section~\ref{sec:setup} formulates the problem. In Section~\ref{sec:pre-theory}, we introduce FarmHazard and present properties of the estimated factors and idiosyncratic components. We provide the main theoretical guarantees in Section~\ref{sec:theory}, and perform extensive simulations and real data experiments in Section~\ref{sec:num-exp}. Proofs of the various results can be found in the Appendix.

We introduce a few notations used throughout the paper. For any integer $n$, we denote $[n] = \{1, \dots, n\}$. $\bI_n$ denotes the $n\times n$ identity matrix and $\mathbf{0}_n$ represents the all-zero vector in $\R^n$. For a vector $\bgamma = (\gamma_{1}, \dots, \gamma_{m})^{\top} \in \R^m$ and $q \in \mathbb{N}^{\star}$, denote the $\ell_{q}$ norm $\|\bgamma\|_{q} = (\sum_{i = 1}^{m} |\gamma_{i}|^{q})^{1/q}$ and $\|\bgamma\|_{\infty} = \max\limits_{i \leq m} |\gamma_{i}|$. The support set $\supp(\bgamma)$ is $\{i \in [m]: \gamma_i \neq 0\}$, and $\sign(\bgamma)$ is the vector $(\sign(\gamma_i))_{i \in [m]}$, where $\sign(\gamma_i)=1$, $0$, or $-1$ for $\gamma_i > 0$, $= 0$ or $< 0$, respectively.
For a set or an event $A$, we use $\mathbb{I}\{A\}$ to denote the indicator function of $A$. For a set $A$, let $|A|$ be its cardinality.
For a matrix $\bM$, we denote by $\|\bM\|_{\max}=\max\limits_{i,j}|M_{ij}|$ its max norm, and by $\|\bM\|_q$ its induced $q$-norm for $q\in \mathbb{N}^{\star}\cup\{\infty\}$. For $\bM\in\R^{n\times m}$, $I\subseteq[n]$ and $J\subseteq [m]$, define $\bM_{IJ}=(\bM_{ij})_{i\in I,j\in J}$, $\bM_{I\cdot}=(\bM_{ij})_{i\in I,j\in [m]}$ and $\bM_{\cdot J}=(\bM_{ij})_{i\in [n],j\in J}$. For a vector $\bgamma\in\R^m$, define $\bgamma^{\otimes 0} = 1$, $\bgamma^{\otimes 1} = \bgamma$, $\bgamma^{\otimes 2} = \bgamma \bgamma^{\top}$, and $\bgamma_S=(\gamma_{i})_{i\in S}$ when $S\subseteq [m]$.
Let $\nabla$ and $\nabla^2$ be the gradient and Hessian operators. For $f:\R^p\rightarrow\R$, $x\in \R^p$ and $I,J\subseteq[p]$, define $\nabla_I f(x)=(\nabla f(x))_I$ and $\nabla^2_{IJ}f(x)=(\nabla^2 f(x))_{IJ}$. $\mathcal{N}(\bmu,\bSigma)$ refers to the normal distribution with mean vector $\bmu$ and covariance matrix $\bSigma$. For two numbers $a$ and $b$, $a \lor b$ and $a \land b$ denote their maximum and minimum, respectively.


\section{Problem Setup}\label{sec:setup}
\subsection{Cox's proportional hazards model}\label{sec:M_estimator}
Let $T$, $C$ and $\{\bx(t) \in \R^{p} : 0\leq t\leq \tau\}$ denote the survival time, censoring time and predictable covariate process, respectively, where $\tau < \infty$ is the study ending time. For each sample, only one of the survival and censoring times is observed, whichever happens first. Let $Z = T \land C$ be the observed time and $\delta = \mathbb{I}\{T \leq C\}$ be the censoring indicator. $T$ and $C$ are assumed to be conditionally independent given the covariates $\{\bx(t) : 0\leq t\leq \tau\}$. The observed data is an independent and identically distributed (i.i.d.) sample $\{ (\{\bx_i(t) : 0\leq t\leq \tau\},Z_i,\delta_i) \}_{i\in [n]}$ from the population $(\{\bx(t) : 0\leq t\leq \tau\}, Z, \delta)$, and for simplicity it is assumed that there are no tied observations and that the covariates are centered.

Cox's proportional hazards model \citep{Cox_1972, Cox_1975} is a semi-parametric model widely used to model time-to-event outcomes. In this model, the conditional hazard function \mbox{$\lambda(t\mid \bx(t))$} of the survival time $T$ at time $t$ given the covariate vector $\bx(t) \in \R^p$ is given by
\begin{equation}\label{cox_lambda}
\lambda(t\mid \bx(t)) = \lambda_0(t) \exp(\bx(t)^{\top}\bbeta^{\star}),
\end{equation}
where $\lambda_0(\cdot)$ is a baseline hazard function and $\bbeta^{\star}=(\beta_1^{\star},\dots,\beta_{p}^{\star})^{\top} \in \R^p$. The function $\lambda_0(\cdot)$ is unspecified in this semi-parametric model: it is a nuisance function, and $\bbeta^{\star}$ is the parameter vector of interest which is assumed to be sparse.

Let $\bX(t) \in \R^{n \times p}$ be the design matrix at time $0\leq t\leq \tau$, and $\bX = \{\bX(t) : 0\leq t\leq \tau\}$.
Let $N$ be the number of failures (satisfying $\delta = 1$) and $t_1 < \dots < t_N$ be the ordered failure times. For $j \in [N]$, let $(j)$ denote the label of the sample failing at time $t_j$, i.e., the individual with $j^{th}$ shortest survival time. The risk set $R_j = \{ i : Z_i \geq t_j \}$ at time $t_j$ is the set of samples still at risk at $t_j$. Cox's log-partial likelihood is given by
\begin{equation*}
Q(\bbeta;\bX,\bZ,\bdelta) = \sum_{j=1}^N \Big\{ \bx_{(j)}(t_j)^{\top} \bbeta - \log\big(\sum_{i \in R_j} \exp(\bx_i(t_j)^{\top} \bbeta)\big) \Big\}.
\end{equation*}
Define the loss $\cL(\bbeta;\bX,\bZ,\bdelta) = - n^{-1}Q(\bbeta;\bX,\bZ,\bdelta)$.
Note that $\cL$ depends on $\bX$ and $\bbeta$ only through the entries of the product $\bX \bbeta$.
Throughout the paper, we will use the notation $\cL(\bX \bbeta)$ and its gradient and Hessian matrix will be taken with respect to $\bbeta$. Keeping the design matrix $\bX$ in the notation will be useful as soon as we introduce factor modeling.

\subsection{Counting process formulation}\label{sec:counting}

We adopt the counting process formulation of \citet{Andersen_Gill_1982}. For $i \in [n]$ and $t \in [0, \tau]$, define the counting process $N_i(t) = \mathbb{I}\{Z_i \leq t,\delta_i = 1\}$ and the at-risk indicator process $Y_i(t) = \mathbb{I}\{ Z_i \geq t \}$. Let $\overline{N}(t) = \sum_{i=1}^n N_i(t)$.
Using this notation, the loss $\cL$ is given by
\begin{equation*}
\cL(\bX \bbeta) = -\frac{1}{n} \sum_{i=1}^n \int_0^{\tau} \{ \bx_i(t)^{\top} \bbeta \} d N_i(t) + \frac{1}{n} \int_0^{\tau} \log\left[\sum_{i=1}^n Y_i(t) \exp(\bx_i(t)^{\top} \bbeta)\right] d \overline{N}(t).
\end{equation*}
For $\ell \in \{0,1,2\}$, define the following quantities
\begin{equation*}
S^{(\ell)}(\bX,\bbeta,t) = \frac{1}{n} \sum_{i=1}^n Y_i(t) \{\bx_i(t)\}^{\otimes \ell} \exp(\bx_i(t)^{\top} \bbeta), \qquad s^{(\ell)}_{\bx}(\bbeta,t) = \E [ S^{(\ell)}(\bX,\bbeta,t) ].
\end{equation*}
To simplify future notation, also define the following
\begin{align}\label{V}
\bV(\bX,\bbeta,t)
& = \frac{S^{(2)}(\bX,\bbeta,t)}{S^{(0)}(\bX,\bbeta,t)} - \left[\frac{S^{(1)}(\bX,\bbeta,t)}{S^{(0)}(\bX,\bbeta,t)}\right]^{\otimes 2},\cr \noalign{\vskip10pt}
\bv_{\bx}(\bbeta,t)
& = \frac{s^{(2)}_{\bx}(\bbeta,t)}{s^{(0)}_{\bx}(\bbeta,t)} - \left[\frac{s^{(1)}_{\bx}(\bbeta,t)}{s^{(0)}_{\bx}(\bbeta,t)}\right]^{\otimes 2}.
\end{align}
We can write $\cL(\bX \bbeta) = -\frac{1}{n} \sum_{i=1}^n \int_0^{\tau} \{ \bx_i(t)^{\top} \bbeta \} d N_i(t) + \frac{1}{n} \int_0^{\tau} \log(n S^{(0)}(\bX,\bbeta,t)) d \overline{N}(t)$.
Treating $\cL(\bX \bbeta)$ as a function of $\bbeta$, its gradient and its Hessian matrix with respect to $\bbeta$ can be written as $\nabla \cL(\bX \bbeta) = -\frac{1}{n} \sum_{i=1}^n \int_0^{\tau} \left\{ \bx_i(t) - \frac{S^{(1)}(\bX,\bbeta,t)}{S^{(0)}(\bX,\bbeta,t)} \right\} d N_i(t)$, and $\nabla^2 \cL(\bX \bbeta) = \frac{1}{n} \int_0^{\tau} \bV(\bX,\bbeta,t)d\overline{N}(t)$, respectively.

The counting process $N_i(t)$ has intensity process $\lambda_i(t,\bbeta^{\star}) = \lambda_0(t) Y_i(t) \exp(\bx_i(t)^{\top} \bbeta^{\star})$, which does not admit jumps at the same time as $N_j(t)$ for $j \neq i$. Define the compensator $\Lambda_i(t) = \int_0^t \lambda_i(u,\bbeta^{\star}) du$, and $M_i(t) = N_i(t) - \Lambda_i(t)$. Then $M_i(t)$ is an orthogonal square-integrable local martingale with respect to the filtration $\calF_{t,i} = \sigma\{ N_i(u),\bx_i(u^+),Y_i(u^+) : 0 \leq u \leq t \}$.
Let $\calF_t = \bigcup_{\,i=1}^{\,n} \calF_{t,i}$ be the smallest $\sigma$-algebra containing all $\calF_{t,i}$. Then $\overline{M}(t) = \sum_{i=1}^n M_i(t)$ is a martingale with respect to $\calF_t$.

\subsection{Approximate factor model}\label{subsec-approx-factor-model}

Let $p_1$ be the number of time-dependent covariates and $p_2$ be the number of time-independ\-ent covariates, with $p_1+p_2=p$. Any covariate vector $\bx_i(t)\in \R^p$ can be decomposed into a time-dependent part $\bx^{(1)}_i(t)\in \R^{p_1}$ and a time-independent part $\bx^{(2)}_i\in \R^{p_2}$. Without loss of generality, the first $p_1$ columns of $\bX(t)$ correspond to the time-dependent covariates, and the following columns to the time-independent ones. We use $\bX_1(t)$ and $\bX_2$ to denote the submatrix of $\bX(t)$ corresponding to the time-dependent covariates and time-independent ones, respectively, and we define $\bX_1 = \{\bX_1(t) : 0\leq t\leq \tau\}$. Typically, $\bX_1$ can represent low-dimensional clinical data, and $\bX_2$ can constitute high-dimensional genomics or genetic data.

Assume that latent common factors drive the time-independent covariates.
We refer to \citet{Fan_etal_2021} for an overview of factor modeling, and to \citet{Lawley_Maxwell_1971, Stock_Watson_2002, Bai_Ng_2002, Fan_etal_2013} for specific settings.
The time-independent covariates $\{ \bx^{(2)}_i \}_{i\in [n]} \subseteq \R^{p_2}$ follow the approximate factor model
\begin{equation}\label{eq:factor}
\bx^{(2)}_i = \bB \bff_i + \bu_i,\qquad i\in[n],
\end{equation}
where $\{ \bff_i \}_{i\in [n]} \subseteq \R^K$ are latent factors with $\E[\bff_i] = 0$, $\bB\in\R^{p_2\times K}$ is a loading matrix, and $\{ \bu_i \}_{i\in [n]} \subseteq \R^{p_2}$ are idiosyncratic components with $\E[\bu_i] = 0$ and uncorrelated with the latent factors. The quantities $\bff_i$, $\bu_i$ and $\bB$ are not observable, they will be estimated from $\{ \bx^{(2)}_i \}_{i\in [n]}$.
Denote $\bF=(\bff_1, \dots, \bff_n)^{\top}\in\R^{n \times K}$ and $\bU=(\bu_1, \dots, \bu_n)^{\top} \in \R^{n \times p_2}$,
so that we can write~\eqref{eq:factor} as
\begin{equation}\label{factor-x}
\bX_2=\bF \bB^{\top} +\bU.
\end{equation}
One could be tempted to assume that the time-dependent covariates $\{\bx^{(1)}_i(t)\}_{i\in [n]}$ also follow an approximate factor model. However, as time $t$ increases, the number of samples still at risk (satisfying $Y_i(t)=1$) decreases, hence the estimation of latent factors would become less accurate for the time-dependent covariates. In addition, as the dimension $p_1$ of the time-dependent part is not large, the benefits of factor modeling would be limited.


\section{FarmHazard and Preliminary Theoretical Results}\label{sec:pre-theory}

\subsection{FarmHazard: definition and motivation}
In the modern big data environment where an enormous volume of information is accessible, it is possible to accurately estimate underlying latent factors and idiosyncratic components. Building upon latent factors, we propose a model named Factor-Augmented Regularized Model for Hazard Regression (FarmHazard).
Our latent model extends~\eqref{cox_lambda} and is defined by a conditional hazard function of the form
\begin{equation}\label{cox_augmented}
\lambda(t\mid \bx^{(1)}(t), \bff, \bu) = \lambda_0(t) \exp(\bx^{(1)}(t)^{\top} \bbeta_1^{\star} + \bff^{\top} \bgamma_2^{\star} + \bu^{\top} \bbeta_2^{\star}),
\end{equation}
where $\bbeta_1^{\star}$ and $\bbeta_2^{\star}$ are sparse vectors, and $\bx^{(1)}(t)$, $\bff$ and $\bu$ are defined in Section~\ref{subsec-approx-factor-model}.
It includes Principal Component Regression (PCR) on the time-independent covariates and the Cox model~\eqref{cox_lambda} as special cases, and provides a good variable selection procedure in the latter case because variables are decorrelated.
Indeed, if $\bbeta_2^{\star}=0$ in model~\eqref{cox_augmented}, we recover PCR on the time-independent predictors. On the other hand, if $\bgamma_2^{\star} = \bB^{\top} \bbeta_2^{\star}$, where $\bB$ is the loading matrix in~\eqref{factor-x}, we recover the usual Cox model under the approximate factor model. In that case, the sparse model in~\eqref{cox_lambda} translates to the sparse model in FarmHazard regression~\eqref{cox_augmented} and the latent factors $\bff$ are used for dependence adjustments. Note that the space spanned by $(\bx^{(2)}, \bff)$ is the same as that by $(\bu, \bff)$. This expands the space spanned by $\bx^{(2)}$ into the powerful principal component directions.

The penalized log-partial likelihood problem corresponding to~\eqref{cox_augmented} is
\begin{equation}\label{Profile_likelihood}
\min\limits_{\bbeta_1 \in \R^{p_1}, \ \bbeta_2 \in \R^{p_2}, \ \bgamma_2 \in \R^K}
\left\{
\cL(\bX_1 \bbeta_1 + \bF \bgamma_2 + \bU \bbeta_2)+\cP_\lambda((\bbeta_1^{\top},\bbeta_2^{\top})^{\top})) \right\},
\end{equation}
where $\cP_\lambda(\cdot)$ is a sparsity-inducing penalty function with regularization parameter $\lambda > 0$.
In the general case,~\eqref{Profile_likelihood} can be interpreted as a penalized factor-augmented hazard regression, where $\{(\bx^{(1)}_i(t)^{\top},\bu_i^{\top},\bff_i^{\top})^{\top}\}_{i\in [n]}$ are the covariates. By lifting the space of time-independent covariates from $\R^{p_2}$ to $\R^{p_2+K}$, the strongly correlated $\bx^{(2)}_i$ are replaced by the weakly correlated $(\bu_i, \bff_i)$, as the common dependent part in $\bu_i$ has already been taken out.
Thus,~\eqref{Profile_likelihood} removes the effect of strong correlations caused by the latent factors.
By calculating the solution of~\eqref{Profile_likelihood} and taking  its first $p_1+p_2=p$ entries, we considerably improve the accuracy of model selection in Cox's model. Note that
$\bU$ and $\bF$ are not observable, hence we need to plug in estimators $\hat{\bU}$ and $\hat{\bF}$ in~\eqref{Profile_likelihood}, and consider the estimator
\begin{equation*}
\hat{\btheta} = \argmin\limits_{\bbeta_1 \in \R^{p_1}, \ \bbeta_2 \in \R^{p_2}, \ \bgamma_2 \in \R^K}
\left\{
\cL(\bX_1 \bbeta_1 + \hat{\bF} \bgamma_2 + \hat{\bU} \bbeta_2)+\cP_\lambda((\bbeta_1^{\top},\bbeta_2^{\top})^{\top})) \right\}.
\end{equation*}
The estimators $\hat{\bU}$ and $\hat{\bF}$ are introduced in the next section.

\subsection{A two-step procedure}\label{sec:procedure}

In \citet{Fan_etal_2020} for generalized linear models, the authors proposed to use the sample covariance matrix, its leading eigenvalues and eigenvectors for factor estimation. We suggest a more general approach, described in the following two steps.
\begin{enumerate}
\item[(1)] Let $\bX_2 \in \R^{n\times p_2}$ be the design matrix corresponding to the time-independent covariates. Fit the approximate factor model~\eqref{factor-x} and denote by $\hat{\bB}$, $\hat{\bF}$ and $\hat{\bU}=\bX_2-\hat{\bF}\hat{\bB}^{\top}$ the obtained estimators of $\bB$, $\bF$ and $\bU$, respectively, by using principal component analysis \citep[e.g.][]{Bai_2003, Fan_etal_2013, Fan_etal_2018, SFDS, Fan_etal_2021}.
More specifically, let $\hat{\bSigma}_2$, $\hat{\bLambda} = \diag(\hat{\lambda}_1,\dots,\hat{\lambda}_K)$ and $\hat{\bGamma} = (\hat{\bxi}_1,\dots,\hat{\bxi}_K)$ be initial pilot estimators (not necessarily based on the sample covariance) for the covariance matrix $\bSigma_2$ of $\bx^{(2)}$, its leading $K$ eigenvalues $\bLambda = \diag(\lambda_1,\dots,\lambda_K)$ and their corresponding leading $K$ normalized eigenvectors $\bGamma = (\bxi_1,\dots,\bxi_K)$, respectively.
Compute $\hat{\bB}=(\hat{\lambda}_1^{1/2}\,\hat{\bxi}_1, \dots, \hat{\lambda}_K^{1/2}\,\hat{\bxi}_K)$ and $\hat{\bF} = \bX_2 \hat \bB \diag(\hat{\lambda}_1^{-1} \dots, \hat{\lambda}_K^{-1} )$. In general, the estimators $\hat{\bSigma}_2$, $\hat{\bLambda}$, and $\hat{\bGamma}$ can be constructed separately from different methods or even different sources of data. For sub-Gaussian distributions, one will choose $\hat{\bSigma}_2$ to be the sample covariance matrix, and $\hat{\bLambda}$, $\hat{\bGamma}$ to be its leading eigenvalues and eigenvectors. In this specific case, the above more general approach reduces to the following familiar solution: the columns of $\hat{\bF}/\sqrt{n}$ are the eigenvectors of $\bX_2\bX_2^{\top}$ corresponding to the leading $K$ eigenvalues and $\hat{\bB} = \bX_2^{\top} \hat{\bF} / n$. For heavy-tailed elliptical distributions, \citet{Fan_etal_2018} used robust estimators: the marginal Kendall's tau to obtain $\hat{\bSigma}_2$ and $\hat{\bLambda}$, and the spatial Kendall's tau to obtain $\hat{\bGamma}$.  See also \citet{Fan_etal_2019,Fan_etal_2021} for other robust covariance inputs such as elementwise truncated mean estimators.

\item[(2)] Define $\hat{\bW}_2=(\hat{\bU}, \hat{\bF}) \in \R^{n\times(p_2+K)}$, $\hat{\bW}(t)=(\bX_1(t), \hat{\bW}_2) \in \R^{n\times(p+K)}$ and
$\hat{\bW} = \{\hat{\bW}(t) : 0\leq t\leq \tau\}$. Solve the augmented problem
\begin{equation}\label{L1-ppmle}
\hat{\btheta} = \argmin\limits_{\btheta\in \R^{p+K}}
\left\{ \cL(\hat{\bW}\btheta)+\cP_\lambda(\btheta_{[p]}) \right\},
\end{equation}
and define $\hat{\bbeta} = \hat{\btheta}_{[p]}$ as the first $p$ entries.
\end{enumerate}
Let $\bW_2 = (\bU,\bF)$ be the unobservable augmented design matrix for the time-independent covariates. Let $\bW(t) = (\bX_1(t),\bW_2)$ and $\bW = \{\bW(t) : 0\leq t\leq \tau\}$. If $\bF$ and $\bU$ are well estimated (see Section~\ref{sec:factor_estimate}), the columns of $\hat\bW_2$ are weakly correlated.
The strongly dependent covariates $\bx^{(2)}_i$ are then replaced by the weakly dependent covariates $(\hat{\bu}_i^{\top},\hat{\bff}_i^{\,\top})^{\top}$.

\subsection{Properties in factor model estimation}\label{sec:factor_estimate}

We introduce the asymptotic properties of estimated factors $\hat{\bF}$ and idiosyncratic components $\hat{\bU}$ in Lemma~\ref{lem:factor} below.
To make~\eqref{eq:factor} identifiable, the following identifiability condition is usually imposed in the literature.
\begin{ass}[Identifiability]\label{assump:identifiability}
$\cov(\bff)=\bI_K$, and $\bB^{\top} \bB$ is diagonal.
\end{ass}
Under this condition, the covariance matrix of $\bx^{(2)}$ is $\bSigma_2 = \bB \bB^{\top} + \bSigma_u$, where $\bSigma_u$ is the covariance matrix of $\bu$.
Principal component analysis is used to recover the latent factors, the loading matrix and the idiosyncratic components. Its use can be proved to be fully justified under the usual assumption that the effect of the factors outweighs the noise. In order to quantify it, the following pervasiveness assumption is common in the literature.
\begin{ass}[Pervasiveness]\label{assump:pervasiveness}
All the eigenvalues of $\bB^{\top} \bB /p_2$ are bounded away from $0$ and $\infty$ as $p_2\to\infty$,
and $\|\bSigma_u\|_2$ is bounded.    
\end{ass}
The first part of the pervasiveness assumption holds for example if the factors loadings $\{\bb_j\}_{j\in [p_2]}$ are i.i.d.~realizations of a non-degenerate $K$-dimensional random vector with a finite second moment. The second part holds easily by a sparsity condition on $\bSigma_u$. We also make the following assumption.
\begin{ass}[Loadings and initial pilot estimators]\label{assump:pilot}
$\|\bB\|_{\max}$ is bounded, and $\hat{\bSigma}_2$, $\hat{\bLambda}$ and $\hat{\bGamma}$ satisfy $\|\hat{\bSigma}_2 - \bSigma_2\|_{\max} = O_{\PP}(\sqrt{(\log p_2)/n})$, $\|(\hat{\bLambda} - \bLambda) \hat{\bLambda}^{-1}\|_{\max} = O_{\PP}(\sqrt{(\log p_2)/n})$ and $\|\hat{\bGamma} - \bGamma\|_{\max} = O_{\PP}(\sqrt{(\log p_2)/(n p_2)})$.
\end{ass}
The first part of Assumption~\ref{assump:pilot} is common. The second part holds in many cases of interest, for example for the sample covariance matrix under sub-Gaussian distributions \citep{Fan_etal_2013}; it also holds for the marginal and spatial Kendall's tau estimators \citep{Fan_etal_2018} and the elementwise adaptive Huber estimator \citep{Fan_etal_2019}.

The following Lemma is a restatement of Theorem 10.4 and its corollaries in \cite{SFDS}.

\begin{lem}\label{lem:factor}
Suppose that Assumptions~\ref{assump:identifiability}--\ref{assump:pilot} hold. Then
\begin{align*}
& \max\limits_{i\in [n]}
\|\hat{\bff}_i - \bff_i\|_2
= O_{\mathbb{P}}\left(\sqrt{K/p_2}\left(\sqrt{(\log p_2)/n} + 1/\sqrt{p_2}\right) \max\limits_{i\in [n]} \|\bx^{(2)}_i\|_2 + \max\limits_{i\in [n]} \|\bB^{\top} \bu_i\|_2\, / p_2 \right),\cr
& \max\limits_{i\in [n]} \|\hat{\bu}_i - \bu_i\|_{\infty}
= O_{\mathbb{P}}\left(
\sqrt{K} \max\limits_{i\in [n]} \|\hat{\bff}_i - \bff_i\|_2
+ \left(\sqrt{(\log p_2)/n} + 1/\sqrt{p_2}\right) \max\limits_{i\in [n]} \|\bff_i\|_2
\right).
\end{align*}
In particular, if the random variables $\|\bx^{(2)}\|_2$ and $\|\bB^{\top} \bu\|_2$ are sub-Gaussian, then
\begin{equation*}
\max\limits_{i\in [n]} \|\hat{\bff}_i - \bff_i\|_2 = O_{\mathbb{P}}\left(K\sqrt{(\log n)/p_2} + K \sqrt{(\log p_2)(\log n)/n}\right).
\end{equation*}
If, additionally, $\|\bff\|_2$ is sub-Gaussian, then
\begin{equation*}
\max\limits_{i\in [n]} \|\hat{\bu}_i - \bu_i\|_{\infty} = O_{\mathbb{P}}\left(K^{3/2}\sqrt{(\log n)/p_2} + K^{3/2}\sqrt{(\log p_2)(\log n)/n}\right).
\end{equation*}
Furthermore, if $\|\bx^{(2)}\|_{\infty}$ and $\|\bff\|_{\infty}$ are bounded and $\|\bB^{\top} \bu\|_{\infty} \leq \mathcal{C} \sqrt{p_2}$ for some $\mathcal{C}>0$, both convergence rates are improved:
\begin{align*}
& \max\limits_{i\in [n]} \|\hat{\bff}_i - \bff_i\|_2 = O_{\mathbb{P}}\left( \sqrt{K(\log p_2)/n} + \sqrt{K/p_2} \right),\cr
\mbox{and} \quad
& \max\limits_{i\in [n]} \|\hat{\bu}_i - \bu_i\|_{\infty} = O_{\mathbb{P}}\left( K\sqrt{(\log p_2)/n} + K/\sqrt{p_2} \right).
\end{align*}
\end{lem}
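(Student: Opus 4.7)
The overall strategy is to combine the explicit PCA construction of $\hat{\bF}$ and $\hat{\bU}$ with perturbation bounds for the leading eigenstructure of $\hat{\bSigma}_2$. The crucial mechanism is the large eigen-gap guaranteed by Assumption~\ref{assump:pervasiveness}: since $\bSigma_2 = \bB\bB^{\top} + \bSigma_u$ with the top $K$ eigenvalues of $\bB\bB^{\top}$ of order $p_2$ and $\|\bSigma_u\|_2$ bounded, Weyl's inequality gives $\lambda_j \asymp p_2$ for $j \le K$, and a Davis--Kahan argument then promotes the entrywise pilot accuracy of Assumption~\ref{assump:pilot} into usable spectral-norm control on $\hat{\bGamma}-\bGamma$ (after resolving the sign of each column). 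Assumption~\ref{assump:identifiability} fixes the rotation by making $\bB^{\top}\bB$ diagonal, so that $\bB=\bV\bD^{1/2}$ with $\bV$ having orthonormal columns and $\bD\asymp p_2\,\bI_K$; this lets us connect $\bGamma$ and $\bLambda$ directly to $\bB$.

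For the factor error, starting from $\hat{\bff}_i = \hat{\bLambda}^{-1/2}\hat{\bGamma}^{\top}\bx^{(2)}_i$, I would write
\begin{equation*}
\hat{\bff}_i - \bff_i = \bigl(\hat{\bLambda}^{-1/2}\hat{\bGamma}^{\top} - \bLambda^{-1/2}\bGamma^{\top}\bigr)\bx^{(2)}_i + \bigl(\bLambda^{-1/2}\bGamma^{\top}\bB - \bI_K\bigr)\bff_i + \bLambda^{-1/2}\bGamma^{\top}\bu_i.
\end{equation*}
The first term is the dominant PCA perturbation: splitting it into $\hat{\bLambda}^{-1/2}(\hat{\bGamma}-\bGamma)^{\top}\bx^{(2)}_i$ and $(\hat{\bLambda}^{-1/2}-\bLambda^{-1/2})\bGamma^{\top}\bx^{(2)}_i$ and using $\|\hat{\bLambda}^{-1/2}\|_2=O_{\PP}(p_2^{-1/2})$ together with Assumption~\ref{assump:pilot} yields a contribution of order $\sqrt{K/p_2}\,\sqrt{(\log p_2)/n}\,\|\bx^{(2)}_i\|_2$. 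The middle term measures the bias of the PCA rotation against the true factor: using $\bGamma=\bV\bS+O_{\PP}(1/p_2)$ and $\bLambda=\bD+O_{\PP}(1)$ (from pervasiveness and Assumption~\ref{assump:pilot}) shows that $\bLambda^{-1/2}\bGamma^{\top}\bB-\bS^{\top}=O_{\PP}(1/p_2)$, so this piece contributes at most $O_{\PP}(\|\bff_i\|_2/p_2)$, which I would absorb into the first term via $\|\bff_i\|_2\lesssim\|\bx^{(2)}_i\|_2/\sqrt{p_2}$. The last term simplifies, using $\bGamma\approx\bV\bS=\bB\bD^{-1/2}\bS$ and $\bLambda\approx\bD$, to approximately $\bS^{\top}\bD^{-1}\bB^{\top}\bu_i$, whose norm is $O(\|\bB^{\top}\bu_i\|_2/p_2)$. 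Taking $\max_i$ over the three contributions recovers the first inequality.

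For the idiosyncratic error I would use the decomposition $\hat{\bu}_i - \bu_i = (\bB-\hat{\bB})\bff_i - \hat{\bB}(\hat{\bff}_i - \bff_i)$; the second piece feeds directly off the factor bound combined with the boundedness of $\|\hat{\bB}\|_{\max}$, while the first is controlled by a max-norm bound on $\hat{\bB}-\bB=\hat{\bGamma}\hat{\bLambda}^{1/2}-\bGamma\bLambda^{1/2}$ obtained from Assumption~\ref{assump:pilot}. The specialized sub-Gaussian and bounded rates at the end of the lemma then follow from standard maximal inequalities: a sub-Gaussian assumption on $\|\bx^{(2)}\|_2$, $\|\bB^{\top}\bu\|_2$, and $\|\bff\|_2$ produces the stated $\sqrt{\log n}$ factors, while deterministic boundedness of $\|\bx^{(2)}\|_\infty$, $\|\bff\|_\infty$ and the $\sqrt{p_2}$-growth control on $\|\bB^{\top}\bu\|_\infty$ removes them. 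The main obstacle I anticipate is the careful bookkeeping of the sign/rotation ambiguity in PCA (the columns of $\hat{\bGamma}$ are only defined up to sign), which must be aligned consistently across all three terms of the decomposition above so that the $\bS$ matrices telescope; once that alignment is fixed, the rest is essentially bookkeeping of powers of $p_2$ using Weyl and Davis--Kahan in conjunction with Assumption~\ref{assump:pilot}.
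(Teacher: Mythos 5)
Your proposal is sound in outline, but it takes a genuinely different route from the paper: the paper does not prove these bounds at all, it simply invokes Theorem 10.4 and Corollary 10.2 of the cited \emph{Statistical Foundations of Data Science} reference for the first four rates, and obtains the ``improved'' rates in the last display solely from the three elementary inequalities $\|\bx^{(2)}_i\|_2 \le \sqrt{p_2}\,\|\bx^{(2)}_i\|_{\infty}$, $\|\bff_i\|_2 \le \sqrt{K}\,\|\bff_i\|_{\infty}$ and $\|\bB^{\top}\bu_i\|_2 \le \sqrt{K}\,\|\bB^{\top}\bu_i\|_{\infty}$ plugged into the general bounds. What you have sketched is essentially a self-contained reconstruction of the cited theorem's proof: your three-term decomposition of $\hat{\bff}_i-\bff_i$ is algebraically correct (it telescopes to the identity), the Weyl/Davis--Kahan mechanism with eigen-gap of order $p_2$ is exactly what drives the cited result, and the decomposition $\hat{\bu}_i-\bu_i=(\bB-\hat{\bB})\bff_i-\hat{\bB}(\hat{\bff}_i-\bff_i)$ is the standard one. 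Your approach buys transparency about where each piece of the rate comes from, at the cost of the bookkeeping you yourself flag. Two places deserve extra care if you write it out in full. First, absorbing the rotation-bias term $O_{\PP}(\|\bff_i\|_2/p_2)$ into the stated bound via $\|\bff_i\|_2\lesssim\|\bx^{(2)}_i\|_2/\sqrt{p_2}$ is not immediate: inverting $\bx^{(2)}_i=\bB\bff_i+\bu_i$ only gives $\|\bff_i\|_2\lesssim(\|\bx^{(2)}_i\|_2+\|\bu_i\|_2)/\sqrt{p_2}$, and the $\|\bu_i\|_2$ contribution must be argued to be lower order since it appears nowhere on the right-hand side of the lemma's first display. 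Second, the $1/\sqrt{p_2}$ terms inside the parentheses of the stated rates come from the Davis--Kahan discrepancy between the eigenvectors of $\bSigma_2$ and those of $\bB\bB^{\top}$ (and correspondingly between $\bGamma\bLambda^{1/2}$ and $\bB$ up to sign), which your sketch attributes somewhat loosely across the middle and third terms; making those two bias sources explicit is needed to land exactly on the displayed rates rather than merely on rates of the same order.
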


\begin{rem}
Recall that $\bW_2 = (\bU,\bF)$ is the unobservable augmented design matrix for the time-independent covariates and $\hat{\bW}_2 = (\hat{\bU},\hat{\bF})$ is its estimator.
Using
\begin{align*}
\|\hat{\bW}_2 - \bW_2\|_{\max}
& = \big(\max\limits_{i\in [n]}
\|\hat{\bff}_i - \bff_i\|_{\infty}\big)
\lor
\big(\max\limits_{i\in [n]}
\|\hat{\bu}_i - \bu_i\|_{\infty}\big)
,
\end{align*}
we can bound $\|\hat{\bW}_2 - \bW_2\|_{\max}$ with Lemma~\ref{lem:factor}.
\end{rem}

In practical applications, the number $K$ of factors must be chosen before the estimation of factors, loading matrix, and idiosyncratic components.
Numerous methods exist to estimate $K$ \citep[e.g.][]{Bai_Ng_2002, Luo_etal_2009, Lam_Yao_2012, Ahn_Horenstein_2013, Chang_etal_2015, Fan_Guo_Zheng_2022}. We refer to \citet{SFDS} for an overview of these methods.
For the numerical experiments in Section~\ref{sec:num-exp}, we use the Adjusted Eigenvalues Thresholding (ACT) estimator \citep{Fan_Guo_Zheng_2022}, which works as follows. Let $\hat{\bR}_2 = \diag(\hat{\bSigma}_2)^{-1/2} \hat{\bSigma}_2 \diag(\hat{\bSigma}_2)^{-1/2}$ be the sample correlation matrix and $\{\lambda_j(\hat{\bR}_2)\}_{j\in [p_2]}$ be its eigenvalues.
For any $j\in [p_2]$, let $\lambda_j^C(\hat{\bR}_2)$ denote the bias-corrected estimator of the $j^{th}$ largest eigenvalue.
The ACT estimator is $\hat{K}
= |\{j : \lambda_j^C(\hat{\bR}_2) > 1 + \sqrt{p_2 / n}\}|$.
This method has the great advantage of being tuning-free and scale-invariant.

\subsection{Variable screening}\label{sec:screening}

In ultra-high dimensional problems where the dimension grows with the sample size in a non-polynomial fashion, regularized regression faces multiple statistical and computational challenges.
To remedy this, screening methods have been developed, but they tend to include too many variables when strong correlations exist among covariates. In this section, we introduce a factor-augmented variable screening procedure for Cox's proportional hazards model, which is able to overcome the correlation issues in ultra-high dimensional problems by using the weakly dependent factor-augmented new predictors in place of the original ones.
As in \citet{Zhao_Li_2012}, we assume for simplicity that all covariates are time-independ\-ent to carry out the screening analysis. In our notation, this means that we have $p_2=p$. The augmented covariates are standardized to have mean zero and standard deviation one. The procedure is as follows.
\begin{enumerate}
\item[(1)] Use the first step in Section~\ref{sec:procedure} to obtain $\hat{\bB}$, $\hat{\bF}$ and $\hat{\bU}$.

\item[(2)] For $j\in[p]$, let $\hat{\bU}_{\cdot [j]}$ be the $j$-th column of $\hat\bU$. Solve the augmented marginal regression
\begin{equation*}
(\hat\beta_j, \hat\bgamma_j) = \argmin\limits_{\beta \in \R, \bgamma \in \R^{K} }
\cL(\hat{\bU}_{\cdot [j]} \beta+ \hat{\bF} \bgamma).
\end{equation*}

\item[(3)] Return the set $\{ j : | \hat\beta_j | \geq \xi \}$ for some threshold $\xi$.
\end{enumerate}


\section{Theoretical Guarantees}\label{sec:theory}

To achieve sign and estimation consistencies, \citet{Zhao_Yu_2006} proposed an irrepresentable condition for $\ell_{1}$-penalized least-squares regression, and \citet{Lee_etal_2015} proposed a generalized version for general regularized $M$-estimators. When applied to some loss function $\cL$ with $\ell_{1}$ regularization, we can write it as
\begin{equation}\label{general_IC}
\|\nabla^2_{\cS^c \cS}\cL(\btheta^{\star}) [ \nabla^2_{\cS \cS}\cL(\btheta^{\star}) ]^{-1}\|_{\infty}\leq 1-\mu,
\end{equation}
for some $\mu\in(0,1)$, where $\btheta^{\star}$ is the true parameter, $\cS = \supp(\btheta^{\star})$ and $\cS^c$ is its complement.

When the covariates are highly correlated, this irrepresentable condition, also called mutual incoherence condition \citep{Lv_etal_2018} in the context of Cox's proportional hazards model, can easily fail. In this case, model selection consistency is very unlikely to be achieved. We can address this issue by applying the procedure described in Section~\ref{sec:procedure}. After the first step of this procedure, the covariates become weakly correlated, hence the condition~\eqref{general_IC} becomes more likely to hold with positive $\mu$ bounded away from zero. Consequently, the model selection consistency and estimation error bounds are improved.
As explained in Section~\ref{subsec-approx-factor-model}, typically $\bX_1(t)$ is low-dimensional and $\bX_2$ is high-dimensional. Hence, the collinearity is less likely in $\bX_1(t)$. Moreover, the covariates in $\bX_1(t)$ and $\bX_2$ are usually weakly correlated. For example, it seems safe to assume that time-dependent data such as age or number of cigarettes smoked daily do not exhibit high correlation with high-dimensional genomics or genetic data.

\subsection{Properties of FarmHazard}\label{sec:core}

In this section, we establish theoretical guarantees of the FarmHazard-L procedure, which solves~\eqref{L1-ppmle} for $\cP_{\lambda}(\bbeta) = \lambda \|\bbeta\|_{1}$ after estimation of the augmented time-independent data matrix $\bW_2$.
Let $\btheta^{\star}=
((\bbeta^{\star})^{\top},
(\bgamma^{\star}_2)^{\top})^{\top} \in \R^{p+K}$, where $\bgamma^{\star}_2 = \bB^{\top} \bbeta_2^{\star}$. Recall that $\bW(t) = (\bX_1(t),\bW_2)$ and $\bW = \{\bW(t) : 0\leq t\leq \tau\}$. Define $\cS = \supp(\btheta^{\star})$ and denote its complement by $\cS^c=[p+K]\backslash \cS$. Before presenting the consistency results, we introduce and discuss the following assumptions.

\begin{ass}[Restricted strong convexity]\label{assump:RSC-1}
There exist $\kappa_2>\kappa_{\infty}>0$ such that
	\begin{equation*}
	\|[ \nabla^2_{\cS \cS}\cL(\bW\btheta^{\star}) ]^{-1}\|_{q} \leq\frac{1}{4\kappa_{q}} \quad \mbox{for } q
	\in \{2, \infty\}.
	\end{equation*}
\end{ass}

\begin{ass}[Irrepresentable condition]\label{assump:IC}
There exists $\mu\in(0,1/2)$ such that
	\begin{equation*}
	\|\nabla^2_{\cS^{c} \cS} \cL(\bW\btheta^{\star}) [ \nabla^2_{\cS \cS} \cL(\bW\btheta^{\star}) ]^{-1}\|_{\infty}\leq 1-2\mu.
	\end{equation*}
\end{ass}

\begin{ass}[Factor model estimation] \label{assump:factor}
There exists $\varepsilon>0$ such that
\begin{equation*}
	\|\hat{\bW}_2-\bW_2\|_{\max}\leq\varepsilon, \quad \mbox{and} \quad
    \varepsilon \Big(2 \sup\limits_{t\in [0, \tau]} \|\bW(t)\|_{\max} + \varepsilon \Big) \leq \frac{\kappa_{\infty}\mu}{2 |\cS|},
\end{equation*}
where $\kappa_{\infty}$ and $\mu$ are defined in Assumptions~\ref{assump:RSC-1} and \ref{assump:IC}.
\end{ass}

As $\bgamma^{\star}_2$ is not penalized, we are particularly interested in $\cS_{\bbeta^{\star}} = \supp(\bbeta^{\star})$. Note that $|\cS_{\bbeta^{\star}}| \leq |\cS| \leq |\cS_{\bbeta^{\star}}|+K$. Assumptions~\ref{assump:RSC-1}--\ref{assump:factor} and Theorem~\ref{thm:consistency-estimated-factors} below could alternatively be written using this support set, at the expense of additional notation.

Assumptions~\ref{assump:RSC-1}--\ref{assump:factor} are deterministic. Conditional on these events, Theorem~\ref{thm:consistency-estimated-factors} below holds. When these events hold in high probability, so does Theorem~\ref{thm:consistency-estimated-factors}. These assumptions are quite mild.
Assumption~\ref{assump:RSC-1} involves only a small matrix and holds easily.
Assumption~\ref{assump:IC} holds with high probability as long as $\E[\int_0^{\tau} \bv_{\bw}(\btheta^{\star},t) dN(t)]$ satisfies a similar condition \citep{Lv_etal_2018}, where $\bv_{\bw}$ is similar to~\eqref{V} but in the notation of the augmented design matrix, and where the counting process $N$ has the same distribution as the $N_i$'s.
Assumption~\ref{assump:factor} holds with high probability thanks to Lemma~\ref{lem:factor}, and we can accommodate different settings for the covariates. For example, if the last set of assumptions in Lemma~\ref{lem:factor} holds and $K$ does not depend on $n$ (which is a frequent assumption in the factor model literature), then $|\cS| \left( \sqrt{(\log p_2)/n} + 1/\sqrt{p_2} \right) = o(1)$ is sufficient to ensure that Assumption~\ref{assump:factor} holds with high probability.

\begin{thm}\label{thm:consistency-estimated-factors}
	Suppose that Assumptions~\ref{assump:RSC-1}--\ref{assump:factor} hold. Let $\{\hat{\bw}_i(t)^{\top}\}_{i\in [n]}$ be the rows of $\hat{\bW}(t) = (\bX_1(t), \hat{\bW}_2)$. Define $M=6 |\cS|^{3/2} \sup\limits_{t\in [0, \tau]} \|\hat{\bW}(t)\|_{\max}^3$ and
\begin{align*}
\eta = \left\|\frac{1}{n} \sum_{i=1}^n \int_0^{\tau} \left\{ \hat{\bw}_i(t) - \frac{\sum_{j=1}^n Y_j(t) \hat{\bw}_j(t) \exp(\bx_j(t)^{\top} \bbeta^{\star})}{\sum_{j=1}^n Y_j(t) \exp(\bx_j(t)^{\top} \bbeta^{\star})} \right\} d N_i(t) \right\|_{\infty}.
\end{align*}
If $\frac{7}{\mu} \eta < \lambda < \frac{\kappa_2\kappa_{\infty}\mu}{12M\sqrt{|\cS|}}$, the solution of \eqref{L1-ppmle} with the $\ell_{1}$ penalty satisfies
$\supp(\hat{\bbeta})\subseteq \cS_{\bbeta^{\star}}$ and
\begin{eqnarray*}
\|\hat{\bbeta}-\bbeta^{\star}\|_{\infty}\leq \frac{6\lambda}{5\kappa_{\infty}}, \qquad  \|\hat{\bbeta}-\bbeta^{\star}\|_2\leq \frac{4\lambda\sqrt{|\cS|}}{\kappa_2},
\qquad \|\hat{\bbeta}-\bbeta^{\star}\|_1\leq \frac{6\lambda|\cS|}{5\kappa_{\infty}}.
\end{eqnarray*}
If there exists $C>7$ such that $\eta < \frac{\kappa_2\kappa_{\infty}\mu^2}{12CM\sqrt{|\cS|}}$ and
$\min\limits_{j \in \cS_{\bbeta^{\star}}}|\beta^{\star}_j| > \frac{6C}{5\kappa_{\infty}\mu}\eta$, then for $\frac{7}{\mu}\eta < \lambda < \frac{C}{\mu}\eta$, we have $\sign(\hat{\bbeta})=\sign(\bbeta^{\star})$.
\end{thm}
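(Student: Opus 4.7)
The strategy is a primal--dual witness (PDW) construction adapted to the factor-augmented Cox loss. I would first solve the restricted oracle problem
\begin{equation*}
\tilde{\btheta} \;=\; \argmin_{\btheta \in \R^{p+K}:\,\btheta_{\cS^{c}}=0}\bigl\{\cL(\hat{\bW}\btheta)+\lambda\|\btheta_{[p]}\|_{1}\bigr\},
\end{equation*}
so that $\supp(\tilde{\btheta})\subseteq\cS$ by construction. Let $\tilde{z}$ denote the KKT subgradient of the restricted problem (zero on the unpenalized factor coordinates in $\cS$, and $\tilde{z}_j\in[-1,1]$ on $\cS\cap[p]$), and extend it to a candidate full-dual certificate $\hat{z}\in\R^{p}$ by setting $\hat{z}_{j}=-\lambda^{-1}\nabla_{j}\cL(\hat{\bW}\tilde{\btheta})$ for $j\in[p]\setminus\cS$. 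If I can verify strict dual feasibility $\|\hat{z}_{[p]\setminus\cS}\|_{\infty}<1$, convexity of~\eqref{L1-ppmle} (with the $\ell_{1}$ penalty) implies that $\tilde{\btheta}$ is the unique minimizer, yielding $\hat{\btheta}=\tilde{\btheta}$ and $\supp(\hat{\bbeta})\subseteq\cS_{\bbeta^{\star}}$.

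The algebraic engine is a mean-value Taylor expansion
\begin{equation*}
\nabla\cL(\hat{\bW}\tilde{\btheta}) \;=\; \nabla\cL(\hat{\bW}\btheta^{\star}) + \bH\,(\tilde{\btheta}-\btheta^{\star}),\qquad \bH \;=\; \int_{0}^{1}\nabla^{2}\cL\bigl(\hat{\bW}(\btheta^{\star}+s(\tilde{\btheta}-\btheta^{\star}))\bigr)\,ds.
\end{equation*}
Combined with stationarity on $\cS$, this yields
\begin{equation*}
\tilde{\btheta}_{\cS}-\btheta^{\star}_{\cS}\;=\;-\bH_{\cS\cS}^{-1}\bigl[\nabla_{\cS}\cL(\hat{\bW}\btheta^{\star})+\lambda\tilde{z}_{\cS}\bigr],
\end{equation*}
and hence the dual-certificate identity
\begin{equation*}
\tfrac{1}{\lambda}\nabla_{\cS^{c}}\cL(\hat{\bW}\tilde{\btheta}) \;=\; \tfrac{1}{\lambda}\nabla_{\cS^{c}}\cL(\hat{\bW}\btheta^{\star}) - \bH_{\cS^{c}\cS}\bH_{\cS\cS}^{-1}\bigl[\tfrac{1}{\lambda}\nabla_{\cS}\cL(\hat{\bW}\btheta^{\star})+\tilde{z}_{\cS}\bigr].
\end{equation*}
The score $\nabla\cL(\hat{\bW}\btheta^{\star})$ is controlled by $\eta$; the specific hybrid form of $\eta$ in the theorem statement (using the true intensity $\lambda_{0}(t)Y_{i}(t)\exp(\bx_{i}(t)^{\top}\bbeta^{\star})$ in the centering weights) is precisely what renders it a pure martingale integral with nice tail behavior, and it matches $\nabla\cL(\hat{\bW}\btheta^{\star})$ up to a plug-in adjustment absorbable in the slack.

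The main obstacle is that $\bH$ involves the estimated design $\hat{\bW}$ at a perturbed iterate, whereas Assumptions~\ref{assump:RSC-1} and~\ref{assump:IC} live at $\nabla^{2}\cL(\bW\btheta^{\star})$. I would quantify the gap via a Lipschitz-in-Hessian bound of the form
\begin{equation*}
\|\bH-\nabla^{2}\cL(\bW\btheta^{\star})\|_{\max}\;\lesssim\;\textstyle\sup_{t}\|\hat{\bW}(t)\|_{\max}^{3}\bigl(\|\hat{\bW}-\bW\|_{\max}+\|\tilde{\btheta}-\btheta^{\star}\|_{1}\bigr),
\end{equation*}
obtained by differentiating the Cox score a third time and bounding the resulting cubic moment of the design---this is precisely the origin of the constant $M=6|\cS|^{3/2}\sup_{t}\|\hat{\bW}(t)\|_{\max}^{3}$. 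The difficulty is the self-referential loop: the $\ell_{1}$ bound on $\|\tilde{\btheta}-\btheta^{\star}\|_{1}$ depends on $\bH_{\cS\cS}^{-1}$ itself. I would close the loop by a fixed-point/continuity argument in which the $\|\hat{\bW}-\bW\|_{\max}$ piece is absorbed by Assumption~\ref{assump:factor} into the $\kappa_{\infty}\mu/(2|\cS|)$ slack, while the $\|\tilde{\btheta}-\btheta^{\star}\|_{1}$ piece is absorbed by the upper bound $\lambda<\kappa_{2}\kappa_{\infty}\mu/(12M\sqrt{|\cS|})$. The output is the controlled inversion $\|\bH_{\cS\cS}^{-1}\|_{q}\le 1/(2\kappa_{q})$ for $q\in\{2,\infty\}$ and the perturbed irrepresentable bound $\|\bH_{\cS^{c}\cS}\bH_{\cS\cS}^{-1}\|_{\infty}\le 1-\mu$.

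With these in hand the remaining work is algebraic. Strict dual feasibility follows from $\|\hat{z}_{[p]\setminus\cS}\|_{\infty}\le \eta/\lambda+(1-\mu)(\eta/\lambda+1)$, which is $<1$ whenever $\eta/\lambda<\mu/7$. Taking $\ell_{\infty}$ and $\ell_{2}$ norms in the identity for $\tilde{\btheta}_{\cS}-\btheta^{\star}_{\cS}$ and using $\eta<\mu\lambda/7$ yields the stated bounds $\|\hat{\bbeta}-\bbeta^{\star}\|_{\infty}\le 6\lambda/(5\kappa_{\infty})$ and $\|\hat{\bbeta}-\bbeta^{\star}\|_{2}\le 4\lambda\sqrt{|\cS|}/\kappa_{2}$, while the $\ell_{1}$ bound follows from $\|\hat{\bbeta}-\bbeta^{\star}\|_{1}\le|\cS_{\bbeta^{\star}}|\|\hat{\bbeta}-\bbeta^{\star}\|_{\infty}$. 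For the sign-consistency claim, the support inclusion $\supp(\hat{\bbeta})\subseteq\cS_{\bbeta^{\star}}$ from the first part pins down zero coordinates; on $\cS_{\bbeta^{\star}}$, the chain $\|\hat{\bbeta}-\bbeta^{\star}\|_{\infty}\le 6\lambda/(5\kappa_{\infty})<6C\eta/(5\kappa_{\infty}\mu)<\min_{j\in\cS_{\bbeta^{\star}}}|\beta^{\star}_{j}|$ (which uses $\lambda<C\eta/\mu$ and the beta-min hypothesis) forces $\sign(\hat{\beta}_{j})=\sign(\beta^{\star}_{j})$, completing the argument.
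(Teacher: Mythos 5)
Your overall strategy --- a primal--dual witness argument, a transfer of the restricted strong convexity and irrepresentable conditions from $(\bW,\btheta^{\star})$ to the estimated design, and a Lipschitz-in-Hessian constant $M$ absorbing the perturbation in $\btheta$ --- is exactly the skeleton of the paper's proof (which routes the PDW step through a generic result for regularized $M$-estimators and isolates the transfer step in a separate lemma). Your constant-chasing at the end (strict dual feasibility from $\eta/\lambda<\mu/7$, the $\ell_{\infty}$/$\ell_{2}$/$\ell_{1}$ bounds, and the beta-min argument for sign consistency) also checks out.

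There is, however, one concrete gap: the choice of reference point. You expand around $\btheta^{\star}=((\bbeta^{\star})^{\top},(\bB^{\top}\bbeta_2^{\star})^{\top})^{\top}$ and assert that $\nabla\cL(\hat{\bW}\btheta^{\star})$ matches $\eta$ ``up to a plug-in adjustment absorbable in the slack.'' It does not, and the adjustment is not controllable under Assumptions~\ref{assump:RSC-1}--\ref{assump:factor}. Since $\bX_2=\hat{\bF}\hat{\bB}^{\top}+\hat{\bU}$, one has $\hat{\bW}(t)\btheta^{\star}=\bX(t)\bbeta^{\star}+\hat{\bF}(\bB-\hat{\bB})^{\top}\bbeta_2^{\star}$, and this discrepancy sits inside the exponential weights of the partial likelihood, contaminating every $S^{(\ell)}$ term in both the score and the Hessian; bounding it requires control of $\|(\hat{\bB}-\bB)^{\top}\bbeta_2^{\star}\|$, a quantity not governed by the hypothesis $\|\hat{\bW}_2-\bW_2\|_{\max}\le\varepsilon$. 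The paper avoids this entirely by centering at $\hat{\btheta}^{\,\star}=((\bbeta^{\star})^{\top},(\hat{\bB}^{\top}\bbeta_2^{\star})^{\top})^{\top}$, which satisfies $\hat{\bW}(t)\hat{\btheta}^{\,\star}=\bX(t)\bbeta^{\star}=\bW(t)\btheta^{\star}$ exactly. This identity makes $\eta=\|\nabla\cL(\hat{\bW}\hat{\btheta}^{\,\star})\|_{\infty}$ with no adjustment and, more importantly, makes the comparison of $\nabla^2\cL(\hat{\bW}\hat{\btheta}^{\,\star})$ with $\nabla^2\cL(\bW\btheta^{\star})$ tractable: the exponential weights coincide, only the outer products of the design rows change, and each entry of the difference is bounded by $2\varepsilon(2\sup_{t}\|\bW(t)\|_{\max}+\varepsilon)$, which Assumption~\ref{assump:factor} then converts into the required perturbations of $\kappa_{\infty}^{-1}$, $\kappa_2^{-1}$ and of the irrepresentable constant. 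Since $\hat{\btheta}^{\,\star}$ and $\btheta^{\star}$ agree on the first $p$ coordinates, all conclusions about $\hat{\bbeta}$ are unaffected by the re-centering, and your argument becomes correct once you replace $\btheta^{\star}$ by $\hat{\btheta}^{\,\star}$ throughout the expansion.
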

\begin{rem}
Theorem~\ref{thm:consistency-estimated-factors} is phrased in terms of $\hat{\bbeta}$, as $\bbeta^{\star}$ is the sparse vector of interest. A more general result on $\hat{\btheta}$ can be found in the Appendix.
\end{rem}

The choice $\lambda \asymp \eta$ in Theorem~\ref{thm:consistency-estimated-factors} ensures sign consistency, and the rates are
$\|\hat{\bbeta}-\bbeta^{\star}\|_{\infty}=O_{\mathbb{P}}(\eta)$, $\|\hat{\bbeta}-\bbeta^{\star}\|_2=O_{\mathbb{P}}(\eta \sqrt{|\cS|})$ and $\|\hat{\bbeta}-\bbeta^{\star}\|_1=O_{\mathbb{P}}(\eta |\cS|)$. We consequently need to control $\eta$, and the following lemma provides a probabilistic upper bound.

\begin{lem}\label{lem:gradient-upper-bound}
Suppose that Assumptions~\ref{assump:identifiability}--\ref{assump:pilot} hold. Then
\begin{equation*}
\eta = O_{\mathbb{P}}\left(
\|\hat{\bW}_2 - \bW_2\|_{\max} +
\sqrt{\frac{\log (p+K)}{n}} \sup\limits_{t\in [0, \tau]} \|\bW(t)\|_{\max}\right).
\end{equation*}
\end{lem}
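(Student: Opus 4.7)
The plan is to split $\eta$ into an "oracle" piece using the true $\bW$ and a perturbation piece that accounts for replacing $\bW_2$ by $\hat{\bW}_2$. First, observe the crucial identity $\bw_j(t)^\top\btheta^\star = \bx_j^{(1)}(t)^\top\bbeta_1^\star + \bu_j^\top\bbeta_2^\star + \bff_j^\top\bB^\top\bbeta_2^\star = \bx_j(t)^\top\bbeta^\star$, so the exponential weights that appear inside $\eta$ are exactly those produced by the true parameter. Accordingly, define
\begin{equation*}
g(\bw';\bw)=\frac{1}{n}\sum_{i=1}^n\int_0^\tau\Big\{\bw'_i(t)-\frac{\sum_{j}Y_j(t)\bw'_j(t)\exp(\bw_j(t)^\top\btheta^\star)}{\sum_{j}Y_j(t)\exp(\bw_j(t)^\top\btheta^\star)}\Big\}dN_i(t),
\end{equation*}
so that $\eta=\|g(\hat\bw;\bw)\|_\infty$ and $\eta_0:=\|g(\bw;\bw)\|_\infty=\|\nabla\cL(\bW\btheta^\star)\|_\infty$. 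The map $\bw'\mapsto g(\bw';\bw)$ is linear in its first argument, hence $\eta\leq \eta_0+\|g(\hat\bw-\bw;\bw)\|_\infty$.

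For the perturbation term I would use that $\hat\bw_i(t)-\bw_i(t)$ has zero entries in the $\bX_1$ block (which is observed) and $\ell_\infty$ norm at most $\|\hat\bW_2-\bW_2\|_{\max}$. The bracketed integrand in $g(\hat\bw-\bw;\bw)$ is therefore bounded entrywise by $2\|\hat\bW_2-\bW_2\|_{\max}$, and since $\sum_i N_i(\tau)=\overline N(\tau)\leq n$,
\begin{equation*}
\|g(\hat\bw-\bw;\bw)\|_\infty\leq 2\|\hat\bW_2-\bW_2\|_{\max}.
\end{equation*}
For $\eta_0$ I would use the Doob--Meyer decomposition $dN_i(t)=d\Lambda_i(t)+dM_i(t)$. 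Writing $A_{ik}(t)$ for the $k$th component of the centered integrand in $g(\bw;\bw)$, a direct computation exhibits the exact cancellation $\sum_i A_{ik}(t)d\Lambda_i(t)=0$ for every $t$ (the ratio in $A_{ik}$ is designed to annihilate the weighted average), so the compensator drops out and
\begin{equation*}
(g(\bw;\bw))_k=\frac{1}{n}\sum_{i=1}^n\int_0^\tau A_{ik}(t)\,dM_i(t).
\end{equation*}
The $A_{ik}(t)$ are predictable (they are measurable with respect to the $\sigma$-algebra generated by $\{\bw_j(t),Y_j(t)\}_{j\in[n]}$, which is predictable by the counting-process set-up), the $M_i$ are orthogonal, $|A_{ik}(t)|\leq 2\sup_t\|\bW(t)\|_{\max}$, and the predictable quadratic variation is at most $4n^{-2}\sup_t\|\bW(t)\|_{\max}^2\sum_i\Lambda_i(\tau)$ with $\E[\sum_i\Lambda_i(\tau)]=\E[\overline N(\tau)]\leq n$. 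A Bernstein/Freedman inequality for martingales with bounded jumps then yields sub-Gaussian tails of rate $\sqrt{\log(p+K)/n}\,\sup_t\|\bW(t)\|_{\max}$ per coordinate, and a union bound over the $p+K$ coordinates gives $\eta_0=O_\PP\big(\sqrt{\log(p+K)/n}\,\sup_{0\leq t\leq\tau}\|\bW(t)\|_{\max}\big)$.

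Combining the two bounds proves the lemma. The easy-looking step I expect to be the main obstacle is controlling the martingale predictable variation cleanly: one must convert the random upper bound $n^{-2}\sup_t\|\bW\|_{\max}^2\sum_i\Lambda_i(\tau)$ into a deterministic bound suitable for a Freedman-type inequality. The standard fix is to invoke Lenglart's inequality, which allows the predictable variation to be controlled in probability rather than almost surely; alternatively one can truncate on the event $\{n^{-1}\sum_i\Lambda_i(\tau)\leq C\}$ and absorb the small-probability complement into the $O_\PP$. Every other ingredient (linearity of $g$ in its first slot, cancellation of the compensator, and the bound $|A_{ik}|\leq 2\sup_t\|\bW\|_{\max}$) is algebraic and routine.
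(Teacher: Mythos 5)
Your proposal is correct and follows essentially the same route as the paper: the paper decomposes $\nabla\cL(\hat{\bW}\hat{\btheta}^{\,\star})$ into the oracle score $\nabla\cL(\bW\btheta^{\star})$ plus two perturbation terms (your $g(\hat\bw-\bw;\bw)$), bounds each perturbation term by $\|\hat{\bW}_2-\bW_2\|_{\max}$ using exactly the convex-weight observation you make, and handles the oracle term by citing Lemma~3.3 of \citet{Huang_etal_2013}, which is precisely the martingale/Freedman-type concentration bound you sketch (including the compensator cancellation and the identity $\bw_j(t)^{\top}\btheta^{\star}=\bx_j(t)^{\top}\bbeta^{\star}$). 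The only difference is that you re-derive the cited concentration lemma rather than invoking it.
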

\begin{rem}\label{rem-gradient-upper-bound}
In particular, if the last set of assumptions in Lemma~\ref{lem:factor} holds and $K$ does not depend on $n$, we can obtain $\eta = O_{\mathbb{P}}\left( \sqrt{(\log p)/n} + 1/\sqrt{p_2} \right)$.
\end{rem}

\subsection{Variable screening}

The factor-augmented variable screening procedure described in Section~\ref{sec:screening} is now studied in further detail. We fit marginal Cox regressions, which are potentially misspecified \citep{Struthers_Kalbfleisch_1986}.
Part of the ideas are based on \citet{Zhao_Li_2012} and are extended to incorporate factor modeling.
As in \citet{Zhao_Li_2012}, we assume for simplicity that all covariates are time-independent. In our notation, this means that we have $p_2=p$. The augmented covariates are standardized to have mean zero and standard deviation one. Recall from Section~\ref{sec:screening} that $\hat\beta_j$ and $\hat\bgamma_j$ are defined as
\begin{equation}\label{ppmle-screening}
(\hat\beta_j, \hat\bgamma_j) = \argmin\limits_{\bbeta \in \R, \bgamma \in \R^{K} }
\cL(\hat{\bU}_{\cdot [j]} \beta+ \hat{\bF} \bgamma).
\end{equation}

For $i \in [n]$ and $j \in [p]$, we let $\hat{\bw}_{ij} = (\hat{u}_{ij}, \hat{\bff}_i^{\,\top})^{\top} \in \R^{1+K}$, where $\hat{u}_{ij}$ is the entry of $\hat{\bU}_{\cdot [j]}$ corresponding to the $i^{th}$ sample and $\hat{\bff}_i^{\,\top}$ is the $i^{th}$ row of $\hat{\bF}$. Let $\bw_{ij} = (u_{ij}, \bff_i^{\top})^{\top}$ be the similar variable for the unobservable $\bU$ and $\bF$.
For $t\in [0, \tau]$, $\beta\in \R$ and $\bgamma\in \R^K$, define $S_j^{(\ell)}(\beta,\bgamma,t) = \frac{1}{n} \sum_{i=1}^n Y_i(t) \hat{\bw}_{ij}^{\otimes \ell} \exp(\hat{u}_{ij} \beta + \hat{\bff}_i^{\,\top} \bgamma)$, $s_j^{(\ell)}(\beta,\bgamma,t) = \E [ S_j^{(\ell)}(\beta,\bgamma,t) ]$,
$R_j^{(\ell)}(t) =
\frac{1}{n} \sum_{i=1}^n Y_i(t) \hat{\bw}_{ij}^{\otimes \ell} \lambda(t\mid \bx_i)$, and $r_j^{(\ell)}(t) = \E [ R_j^{(\ell)}(t) ]$.

For $j\in [p]$ and $(\beta,\bgamma) \in \R^{1+K}$, define the score
\begin{equation}\label{estim-equ}
\begin{split}
\bD_j(\beta,\bgamma)
& = -\frac{1}{n} \sum_{i=1}^n \int_0^{\tau} \left\{ \hat{\bw}_{ij} - \frac{S_j^{(1)}(\beta,\bgamma,t)}{S_j^{(0)}(\beta,\bgamma,t)} \right\} d N_i(t).
\end{split}
\end{equation}
and its population counterpart
\begin{equation*}
\begin{split}
\bd_j(\beta,\bgamma)
& = -\int_0^{\tau} \left\{ r_j^{(1)}(t) - \frac{s_j^{(1)}(\beta,\bgamma,t)}{s_j^{(0)}(\beta,\bgamma,t)} r_j^{(0)}(t)\right\} dt.
\end{split}
\end{equation*}
We then define $(\beta_j, \bgamma_j)$ as the solution of
\begin{equation}
\label{marginal-pop-version}
\bd_j(\beta_j,\bgamma_j) = \mathbf{0}_{1+K}.
\end{equation}
Under Assumptions~\ref{assump:SK1} and \ref{assump:SK2} below, which resemble Conditions $1$ and $2$ in \citet{Struthers_Kalbfleisch_1986}, and using their Theorem 2.1, we obtain that $(\hat\beta_j,\hat\bgamma_j)$ defined in~\eqref{ppmle-screening} is a consistent estimator of $(\beta_j,\bgamma_j)$. We consequently adopt these two assumptions in the rest of the section.
\begin{ass}[]\label{assump:SK1}
For each $j \in [p]$, there exists a neighborhood $\mathcal{B}_j$ of $(\beta_j,\bgamma_j)$ such that
\begin{equation*}
\sup_{t \in [0,
\tau], (\beta,\bgamma) \in \mathcal{B}_j} | S_j^{(0)}(\beta,\bgamma,t) - s_j^{(0)}(\beta,\bgamma,t) | \rightarrow 0 \quad \mbox{in probability as}\;\, n \rightarrow \infty,
\end{equation*}
$s_j^{(0)}(\beta,\bgamma,t)$ is bounded away from $0$ on $\mathcal{B}_j \times [0, \tau]$, and $s_j^{(0)}(\beta,\bgamma,t)$ and $s_j^{(1)}(\beta,\bgamma,t)$ are bounded on $\mathcal{B}_j \times [0, \tau]$.
\end{ass}

\begin{ass}[]\label{assump:SK2}
For each $j \in [p]$, $\int_0^{\tau} r_j^{(2)}(t)dt$ is finite.
\end{ass}

Let $F_T(\cdot\mid \bx)$ be the conditional cumulative distribution function of $T$, given the covariate vector $\bx$. Lemma~\ref{lem:screening-population} below gives a lower bound on the magnitude of the population marginal coefficients $\beta_j$.

\begin{ass}[]\label{assump:screening-bounded}
The idiosyncratic components $u_{ij}$ are bounded by a constant $M_0 > 0$.
\end{ass}

\begin{lem}\label{lem:screening-population}
Suppose that Assumptions~\ref{assump:SK1}--\ref{assump:screening-bounded} hold. Then $\beta_j$
defined in~\eqref{marginal-pop-version} satisfies
\begin{equation*}
\forall j \in \supp(\bbeta^{\star}),\, |\beta_j| \geq \frac{1}{2} M_0^{-2} | \cov (u_{1j}, \E[ F_T(C\mid \bx) \mid \bx ]) |.
\end{equation*}
\end{lem}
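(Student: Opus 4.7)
The plan is to exploit the first coordinate of the population marginal score equation $\bd_j(\beta_j,\bgamma_j)=\bzero$, combined with the Doob--Meyer compensator identity and a Taylor expansion in $\beta$. Beginning with the target, I would use $\E[u_{1j}]=0$ together with $T\perp C\mid\bx$: the tower property gives $\E[F_T(C\mid\bx)\mid\bx]=P(\delta_1=1\mid\bx)$, so $\cov(u_{1j},\E[F_T(C\mid\bx)\mid\bx])=\E[u_{1j}\delta_1]$. Writing $\delta_1=N_1(\tau)$ (with $\tau$ large enough to cover the support of $Z_1$) and applying the compensator $d\Lambda_1(t)=Y_1(t)\lambda_0(t)e^{\bx_1^{\top}\bbeta^{\star}}dt$ yields the clean identity $\E[u_{1j}\delta_1]=\int_0^{\tau} r_j^{(1),1}(t)\,dt$, where $r_j^{(1),1}$ denotes the first coordinate of $r_j^{(1)}$.

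Next, the first coordinate of $\bd_j(\beta_j,\bgamma_j)=\bzero$ rearranges to $\int_0^{\tau} r_j^{(1),1}(t)\,dt=\int_0^{\tau} \phi_j(\beta_j,\bgamma_j,t)\,r_j^{(0)}(t)\,dt$, where $\phi_j(\beta,\bgamma,t):=s_j^{(1),1}(\beta,\bgamma,t)/s_j^{(0)}(\beta,\bgamma,t)$. A Taylor expansion of $\phi_j$ in $\beta$ about $0$ at fixed $\bgamma=\bgamma_j$, coupled with the observation that $\partial_{\beta}\phi_j(\beta,\bgamma,t)$ is the variance of $u_{1j}$ under the exponentially tilted, at-risk-weighted conditional distribution (hence bounded by $M_0^2$ thanks to Assumption~\ref{assump:screening-bounded}), together with the identity $\int_0^{\tau} r_j^{(0)}(t)\,dt=\E[N_1(\tau)]\leq 1$, gives
\begin{equation*}
\E[u_{1j}\delta_1]-A_j(\bgamma_j)=\beta_j\,\bar V_j,\qquad A_j(\bgamma_j):=\int_0^{\tau}\phi_j(0,\bgamma_j,t)\,r_j^{(0)}(t)\,dt,\quad |\bar V_j|\leq M_0^2.
\end{equation*}
Hence $|\beta_j|\geq M_0^{-2}\bigl|\E[u_{1j}\delta_1]-A_j(\bgamma_j)\bigr|$.

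The main obstacle is controlling the adjustment term $A_j(\bgamma_j)$. The plan is to establish $|A_j(\bgamma_j)|\leq \tfrac{1}{2}|\E[u_{1j}\delta_1]|$, after which the reverse triangle inequality delivers the stated bound $|\beta_j|\geq \tfrac{1}{2}M_0^{-2}|\E[u_{1j}\delta_1]|$. This is where the factor structure enters: the approximate factor model of Section~\ref{subsec-approx-factor-model} forces $\E[u_{1j}\bff_1^{\top}]=\bzero$, and the remaining $K$ coordinates $\bh_j(\beta_j,\bgamma_j)=\bzero$ pin down $\bgamma_j$ via the identity $\E[\bff_1\delta_1]=\int_0^{\tau}(s_j^{(1),\bff}/s_j^{(0)})(\beta_j,\bgamma_j,t)\,r_j^{(0)}(t)\,dt$. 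The hard part is quantifying why the tilted risk-set-weighted conditional mean of $u_{1j}$ given $\bff_1$ remains below half the raw covariance $\E[u_{1j}\delta_1]$, despite the risk-set indicator $Y_1(t)$ introducing nontrivial coupling between $u_{1j}$ and $\bff_1$; a natural route is Cauchy--Schwarz on the tilted density ratio, bootstrapped with the $\bh_j$ equation to bound how much the tilt deforms the marginal distribution of $u_{1j}$.
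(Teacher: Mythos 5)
Your skeleton up through the mean-value step coincides with the paper's argument: the paper likewise takes the first coordinate of the population score, uses $d_{j1}(\beta_j,\bgamma_j)=0$, applies the mean value theorem in $\beta$ between $0$ and $\beta_j$, and bounds the resulting derivative via Assumption~\ref{assump:screening-bounded} (it uses the cruder bound $2M_0^2$, which is precisely where the factor $\tfrac12$ in the statement comes from; your variance bound $M_0^2$ on $\bar V_j$ is fine, as is the identity $\cov(u_{1j},\E[F_T(C\mid\bx)\mid\bx])=\E[u_{1j}\delta_1]$ up to the minor caveat that $\tau$ covers the support of $Z$). The genuine gap is exactly the step you yourself flag as the main obstacle: controlling the term at $\beta=0$, your $A_j(\bgamma_j)$. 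Your plan is to prove $|A_j(\bgamma_j)|\leq\tfrac12|\E[u_{1j}\delta_1]|$ via Cauchy--Schwarz on the tilted density ratio plus the orthogonality $\E[u_{1j}\bff_1^{\top}]=\bzero$ from the factor model, but this is never carried out, and it is not plausibly derivable from the stated assumptions. Since $\E[u_{1j}]=0$, $A_j(\bgamma_j)$ is driven by $\cov\bigl(u_{1j},\exp(\bff_1^{\top}\bgamma_j)S_T(t\mid\bx)S_C(t\mid\bx)\bigr)$, and for $j\in\supp(\bbeta^{\star})$ the at-risk probability $S_T(t\mid\bx)S_C(t\mid\bx)$ depends on $u_{1j}$ through $\bx^{\top}\bbeta^{\star}$ --- the very mechanism that makes $\cov(u_{1j},\E[F_T(C\mid\bx)\mid\bx])$ nonzero in the first place --- so the two covariances are generically of the same order; unconditional orthogonality of $u_{1j}$ and $\bff_1$ does not make the risk-set-weighted, exponentially tilted coupling small, and the remaining $K$ score equations pinning down $\bgamma_j$ do not obviously supply such a bound.

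The paper resolves this differently: it does not argue that the $\beta=0$ term is small, but that it has the helpful sign. Following appendix D of Zhao and Li (2012), $\cov\bigl(u_{1j},\exp(\bff_1^{\top}\bgamma_j)S_T(t\mid\bx)S_C(t\mid\bx)\bigr)$ and $\cov(u_{1j},\E[F_T(C\mid\bx)\mid\bx])$ have opposite signs for every $j\in\supp(\bbeta^{\star})$ and $t\in[0,\tau]$, so in $d_{j1}(0,\bgamma_j)=-\cov(u_{1j},\E[F_T(C\mid\bx)\mid\bx])+A_j(\bgamma_j)$ the two contributions reinforce rather than cancel, giving $|d_{j1}(0,\bgamma_j)|\geq|\cov(u_{1j},\E[F_T(C\mid\bx)\mid\bx])|$ with no smallness requirement at all; dividing by the derivative bound $2M_0^2$ then yields the lemma. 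To repair your proof, replace the target $|A_j(\bgamma_j)|\leq\tfrac12|\E[u_{1j}\delta_1]|$ by this sign/reinforcement property (or import the corresponding condition from Zhao--Li); as written, the Cauchy--Schwarz bootstrapping route is a missing idea, not a routine verification.
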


With the addition of the following two assumptions, which are similar to those in Section~\ref{sec:core}, the next theorem establishes the sure screening property.

\begin{ass}[Strong convexity of marginal loss functions]\label{assump:RSC-1-screening}
There exist $\kappa_2>\kappa_{\infty}>0$ such that for each $j \in [p]$,
	\begin{equation*}
	\|[ \nabla^2 \cL(\hat\bU_{\cdot [j]} \beta_j + \hat\bF \bgamma_j) ]^{-1}\|_{q} \leq\frac{1}{4\kappa_{q}} \quad \mbox{for } q
	\in \{2, \infty\}.   
	\end{equation*}
\end{ass}

\begin{ass}[Factor model estimation]\label{assump:factor-screening}
There exist constants $C_1,C_2>0$ such that \mbox{$\|\bW\|_{\max}\leq C_1$} and $\|\hat{\bW}-\bW\|_{\max}\leq C_2$.
\end{ass}

\begin{thm}\label{thm:screening}
Suppose that Assumptions~\ref{assump:SK1}--\ref{assump:factor-screening} hold with $C_2 \, (2 C_1 + C_2)$ small enough, and the assumptions in Remark~\ref{rem-gradient-upper-bound} are satisfied.

\noindent If $\displaystyle \xi \leq \nu \min_{j \in \supp(\bbeta^{\star})} \frac{1}{2} M_0^{-2} | \cov (u_{1j}, \E[ F_T(C\mid \bx) \mid \bx ]) |$ for some constant $\nu \in (0,1)$, and
\begin{align*}
\min_{j \in \supp(\bbeta^{\star})}
| \cov (u_{1j}, \E[ F_T(C\mid \bx) \mid \bx ]) | \gg \sqrt{(\log p)/n} + 1/\sqrt{p},
\end{align*}
then we obtain the sure screening property
\begin{equation*}
\PP (\supp(\bbeta^{\star}) \subseteq \{ j: |\hat\beta_j| \geq \xi \} ) \to 1.
\end{equation*}
\end{thm}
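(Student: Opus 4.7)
The plan is to show that $\max_{j\in[p]}|\hat\beta_j-\beta_j|=O_{\PP}\bigl(\sqrt{(\log p)/n}+1/\sqrt{p}\bigr)$ and then combine this with the population lower bound of Lemma~\ref{lem:screening-population} and the assumed signal strength to force $|\hat\beta_j|\geq\xi$ for every $j\in\supp(\bbeta^{\star})$ with high probability. The argument parallels the two-step logic underlying Theorem~\ref{thm:consistency-estimated-factors}, but applied separately to each marginal (unpenalized, low-dimensional) regression, with a union bound over $j\in[p]$ replacing the sparsity-based machinery.

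First I would bound the marginal score at the population parameter uniformly in $j$. Writing
\[
\bD_j(\beta_j,\bgamma_j)=\bigl\{\bD_j(\beta_j,\bgamma_j)-\bD_j^{\bw}(\beta_j,\bgamma_j)\bigr\}+\bD_j^{\bw}(\beta_j,\bgamma_j),
\]
where $\bD_j^{\bw}$ is the same score formed with the true $\bw_{ij}=(u_{ij},\bff_i^{\top})^{\top}$ instead of $\hat{\bw}_{ij}$, the second summand is exactly of the type handled in Lemma~\ref{lem:gradient-upper-bound}: using the Andersen--Gill martingale decomposition $N_i(t)=\Lambda_i(t)+M_i(t)$, Bernstein-type inequalities for bounded martingales, and a union bound over $j\in[p]$, it is $O_{\PP}\bigl(\sqrt{(\log p)/n}\bigr)$ uniformly in $j$ under Assumption~\ref{assump:factor-screening}. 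The first summand is a deterministic Lipschitz perturbation in $\hat{\bW}-\bW$; by a direct calculation of the derivative of $\bD_j$ in its design argument, combined with the uniform bound on $\|\hat{\bW}\|_{\max}$ from Assumption~\ref{assump:factor-screening}, it is bounded by a constant multiple of $\|\hat{\bW}-\bW\|_{\max}$, which is $O_{\PP}(1/\sqrt{p})$ by Lemma~\ref{lem:factor} under the conditions of Remark~\ref{rem-gradient-upper-bound}. Together, $\max_{j\in[p]}\|\bD_j(\beta_j,\bgamma_j)\|_{\infty}=O_{\PP}\bigl(\sqrt{(\log p)/n}+1/\sqrt{p}\bigr)$.

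Next I would translate this score bound into a uniform parameter estimation bound. Since $\bD_j(\hat\beta_j,\hat\bgamma_j)=\mathbf{0}_{1+K}$, the mean value theorem applied to the $C^1$ map $(\beta,\bgamma)\mapsto\bD_j(\beta,\bgamma)$ yields $(\hat\beta_j,\hat\bgamma_j^{\top})^{\top}-(\beta_j,\bgamma_j^{\top})^{\top}=-[\nabla\bD_j(\tilde\beta,\tilde\bgamma)]^{-1}\bD_j(\beta_j,\bgamma_j)$ for some intermediate point. Assumption~\ref{assump:RSC-1-screening} controls the inverse Hessian at $(\beta_j,\bgamma_j)$ by $(4\kappa_{\infty})^{-1}$ in the induced $\infty$-norm; a localization/bootstrap step analogous to the one used in the proof of Theorem~\ref{thm:consistency-estimated-factors} and relying on the smallness of $C_2(2C_1+C_2)$ in Assumption~\ref{assump:factor-screening} shows that the bound persists on a neighborhood containing $(\hat\beta_j,\hat\bgamma_j)$ with high probability. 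Combining, $\max_{j\in[p]}|\hat\beta_j-\beta_j|=O_{\PP}\bigl(\sqrt{(\log p)/n}+1/\sqrt{p}\bigr)$.

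Finally, for every $j\in\supp(\bbeta^{\star})$, Lemma~\ref{lem:screening-population} gives $|\beta_j|\geq\tfrac{1}{2}M_0^{-2}|\cov(u_{1j},\E[F_T(C\mid\bx)\mid\bx])|\geq\xi/\nu$ thanks to the choice of $\xi$, so $|\hat\beta_j|\geq\xi/\nu-|\hat\beta_j-\beta_j|\geq\xi$ as soon as $|\hat\beta_j-\beta_j|\leq(1-\nu)\xi/\nu$; the signal-strength hypothesis that $\min_j|\cov(u_{1j},\E[F_T(C\mid\bx)\mid\bx])|\gg\sqrt{(\log p)/n}+1/\sqrt{p}$ makes the uniform error negligible relative to the signal, so this inequality holds simultaneously for all $j\in\supp(\bbeta^{\star})$ with probability tending to $1$. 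The main obstacle I anticipate is the coupling between the counting-process score $\bD_j$ and the plug-in covariate matrix $\hat{\bW}$, since both are built from the same data. I would handle this by observing that $\hat{\bW}$ is $\sigma(\bX_2)$-measurable, so conditioning on $\bX_2$ (which fixes $\hat{\bW}$) preserves the martingale property of $N_i-\Lambda_i$ with respect to $\calF_t$ and permits the concentration argument on $\bD_j^{\bw}$, while the design perturbation $\hat{\bW}-\bW$ is treated deterministically through the smoothness of the marginal score.
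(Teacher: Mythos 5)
Your proposal follows essentially the same route as the paper's proof: bound the marginal score $\bD_j(\beta_j,\bgamma_j)$ uniformly over $j$ by $O_{\PP}\bigl(\sqrt{(\log p)/n}+1/\sqrt{p}\bigr)$ (martingale concentration plus the $\|\hat{\bW}-\bW\|_{\max}$ perturbation, exactly the mechanism of Lemma~\ref{lem:gradient-upper-bound}), convert this into $\max_{j}|\hat\beta_j-\beta_j|\leq C'\max_{j}\|\bD_j(\beta_j,\bgamma_j)\|_{\infty}$ via the inverse-Hessian control of Assumption~\ref{assump:RSC-1-screening} (the paper invokes the same strategy as in Theorem~\ref{thm:consistency-estimated-factors}), and then combine with Lemma~\ref{lem:screening-population} and the signal-strength condition. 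The only nit is your last inequality: you should compare the estimation error to $(1-\nu)|\beta_j|$ rather than to $(1-\nu)\xi/\nu$ (which can be arbitrarily small since $\xi$ is only bounded above), i.e.\ conclude $|\hat\beta_j|\geq|\beta_j|-o_{\PP}(|\beta_j|)\geq\nu|\beta_j|\geq\xi$ with probability tending to one.
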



\section{Numerical Experiments}\label{sec:num-exp}

\subsection{Selection performance for high-dimensional data}\label{sec:num-exp-FarmHazard}

In this section, we present the results of simulations comparing the model selection performance of various procedures.
As described in Section~\ref{sec:procedure}, our FarmHazard algorithm starts with fitting an approximate factor model and then runs a regularized Cox regression with the augmented covariates. We use the package \texttt{glmnet}, which can handle optimization for the regularized Cox proportional hazards model \citep{Simon_etal_2011}.

We use the sample size $n=200$ and choose the dimension $p$ in the range $\left[1000, 5000\right]$. All experiments are based on $1000$ replications.
Consider a constant baseline hazard function $\lambda_0(t) = 1$. Therefore the conditional survival time $T\mid \bx$ is an exponential random variable with parameter $\exp(\bx^{\top} \bbeta^{\star})$ where $\bx$ is the covariate vector.
We independently set the distribution of the conditional censoring time $C\mid \bx$ to be exponential with parameter $\frac{3}{7} \exp(\bx^{\top} \bbeta^{\star})$. This ensures $30\%$ of the samples to be censored on average. The true coefficient vector $\bbeta^{\star}$ has $4$ non-zero entries, which are drawn uniformly at random in the interval $\left[2, 5\right]$.
The covariates $\bX$ are time-independent, generated from one of the following models.
\begin{enumerate}
\item[(1)] Factor model setting: we set $\bx_i=\bB \bff_i + \bu_i$ with $K = 3$ factors, where we sample the entries of $\bB$ and $\bff_i$ independently from $\mathcal{N}(0, 1)$ and $\bu_i$ from $\mathcal{N}(0, 2)$.
\item[(2)] Equicorrelated setting: we sample $\bx_i$ independently from $\mathcal{N}({\mathbf{0}_p}, \bSigma_\rho)$, where $\bSigma_\rho$ has diagonal elements 1 and off-diagonal elements $\rho$.  This is the one-factor model with equal factor loadings. Note that, as the $\bx_i$ are multivariate Gaussian, the covariates are independent when $\rho=0$.
\end{enumerate}
We compare the following five procedures. More details about these procedures can be found in the subsequent paragraphs.
\begin{enumerate}
\item[(1)] LASSO
\begin{equation*}
\hat{\bbeta}^{\,\text{LASSO}} = \argmin\limits_{\bbeta \in \R^{p}}
\left\{
\cL(\bX \bbeta)+\lambda \|\bbeta\|_1 \right\}.
\end{equation*}
\item[(2)] FarmHazard-L
\begin{equation*}
\hat{\btheta}^{\,\text{FarmH-L}} = \argmin\limits_{\bbeta \in \R^p, \ \bgamma \in \R^K}
\left\{
\cL(\hat{\bF} \bgamma + \hat{\bU} \bbeta)+\lambda \|\bbeta\|_1 \right\},
\; \textrm{and } \; \hat{\bbeta}^{\,\text{FarmH-L}} = \hat{\btheta}^{\,\text{FarmH-L}}_{[p]}.
\end{equation*}
\item[(3)] SCAD: one-step local linear approximation (LLA) of the SCAD penalty, with the LASSO initialization
\begin{equation*}
\hat{\bbeta} = \argmin\limits_{\bbeta \in \R^{p}}
\left\{
\cL(\bX \bbeta)+ \sum_{j=1}^p p'_{\lambda}(|\hbeta^{\,\text{LASSO}}_j|) |\beta_j| \right\},
\end{equation*}
where $p'_{\lambda}$ is the derivative of the SCAD penalty
and $\hat{\bbeta}^{\,\text{LASSO}}$ is the LASSO estimator.
\item[(4)] FarmHazard-S: factor-augmented hazard regression with one-step LLA of the SCAD penalty, with the FarmHazard-L initialization
\begin{equation}
\label{FarmHazard-S}
\hat{\btheta}^{\,\text{FarmH-S}} = \argmin\limits_{\bbeta \in \R^p, \ \bgamma \in \R^K}
\left\{
\cL(\hat{\bF} \bgamma + \hat{\bU} \bbeta)+\sum_{j=1}^p p'_{\lambda}(|\hbeta^{\,\text{FarmH-L}}_j|) |\beta_j| \right\},
\end{equation}
and $\hat{\bbeta}^{\,\text{FarmH-S}} = \hat{\btheta}^{\,\text{FarmH-S}}_{[p]}$, where $\hat{\bbeta}^{\,\text{FarmH-L}}$ is the FarmHazard-L estimator.
\item[(5)] Elastic-net
\begin{equation*}
\hat{\bbeta} = \argmin\limits_{\bbeta \in \R^{p}}
\left\{
\cL(\bX \bbeta)+\lambda( \alpha \|\bbeta\|_1 + (1-\alpha) \|\bbeta\|_2^2/2) \right\},
\; \textrm{with } \; \alpha = 0.9.
\end{equation*}
\end{enumerate}
The tuning parameter $\lambda$ of each procedure is computed by 10-fold sparse generalized cross-validation \citep{Bradic_etal_2011}. The model selection performance is measured by the sign consistency rate (that is, the empirical frequency of replications such that $\sign(\hat{\bbeta})=\sign(\bbeta^{\star})$) and the average size of the selected model (that is, of $\supp(\hat{\bbeta})$).
The average estimates of the selected model size are accompanied with $\pm \;2$ standard error intervals.
The surrounding confidence intervals for the sign consistency rate are $95\%$ Wilson intervals \citep{Wilson_1927}, which are known to provide more accurate coverage for binomial proportions than $\pm \;2$ standard error intervals \citep{Brown_etal_2001}.
The results are presented in Figure~\ref{fig_factor} for the factor model setting with $p$ in the range $\left[1000, 5000\right]$, and Figure~\ref{fig_correl} for the equicorrelated setting with $p = 2000$ and $\rho$ in the range $\left[0.0, 0.8\right]$.

The results show that FarmHazard-L and FarmHazard-S outperform LASSO \citep{Tibshirani_1996, Tibshirani_1997}, SCAD \citep{Fan_1997, Fan_Li_2001, Fan_Li_2002} and Elastic-net \citep{Zou_Hastie_2005} in all settings, for both measures of model selection performance.

The derivative of the smoothly clipped absolute deviation (SCAD) penalty is given by
$p'_{\lambda}(\beta) = \lambda \left( \mathbb{I}\{\beta \leq \lambda\} + \frac{(a \lambda - \beta)_{+}}{(a - 1) \lambda} \mathbb{I}\{\beta > \lambda\} \right)$
for $a = 3.7$.
More precisely, we use the one-step local linear approximation (LLA) \citep{Zou_Li_2008, Fan_etal_2014} of the SCAD penalty. This amounts to a weighted $\ell_{1}$ penalization procedure, with weights $(p'_{\lambda}(|\beta_j|))_{j\in [p]}$ for an initialization vector $\bbeta = (\beta_1, \dots, \beta_p)$. The most-used initialization is the LASSO estimator $\hat{\bbeta}^{\,\text{LASSO}}$, which corresponds to the procedure that we denote by SCAD above.  Note that LASSO itself can be regarded as the SCAD using zero as initialization and hence this procedure is also a two-step LLA procedure starting from the zero initialization.

We perform additional experiments in the factor model setting with various values for the dimension $p$ and the true vector $\bbeta^{\star}$. The results in Table \ref{tab_factor} show that FarmHazard greatly outperforms LASSO. The sign consistency rate of FarmHazard equals $1$ or is very close to $1$ and the selected model size is almost always equal to the true support size, while the sign consistency rate of LASSO is very low and its selected model size is too large.

\newcommand{\subfigfracinthree}{0.333}
\newcommand{\subfigfracintwo}{0.5}
\newcommand{\subfigfracin}{0.49}
\newcommand{\imgspace}{.01}
\newcommand{\imgfrac}{1}
\newcommand{\imgfracnotfull}{.65}
\newcommand{\imgfrachalf}{0.5}
\newcommand{\vgap}{.01}

\begin{figure}[H]
\centering
    \begin{subfigure}{\subfigfracin\linewidth}
       \includegraphics[width=\imgfrac\linewidth]{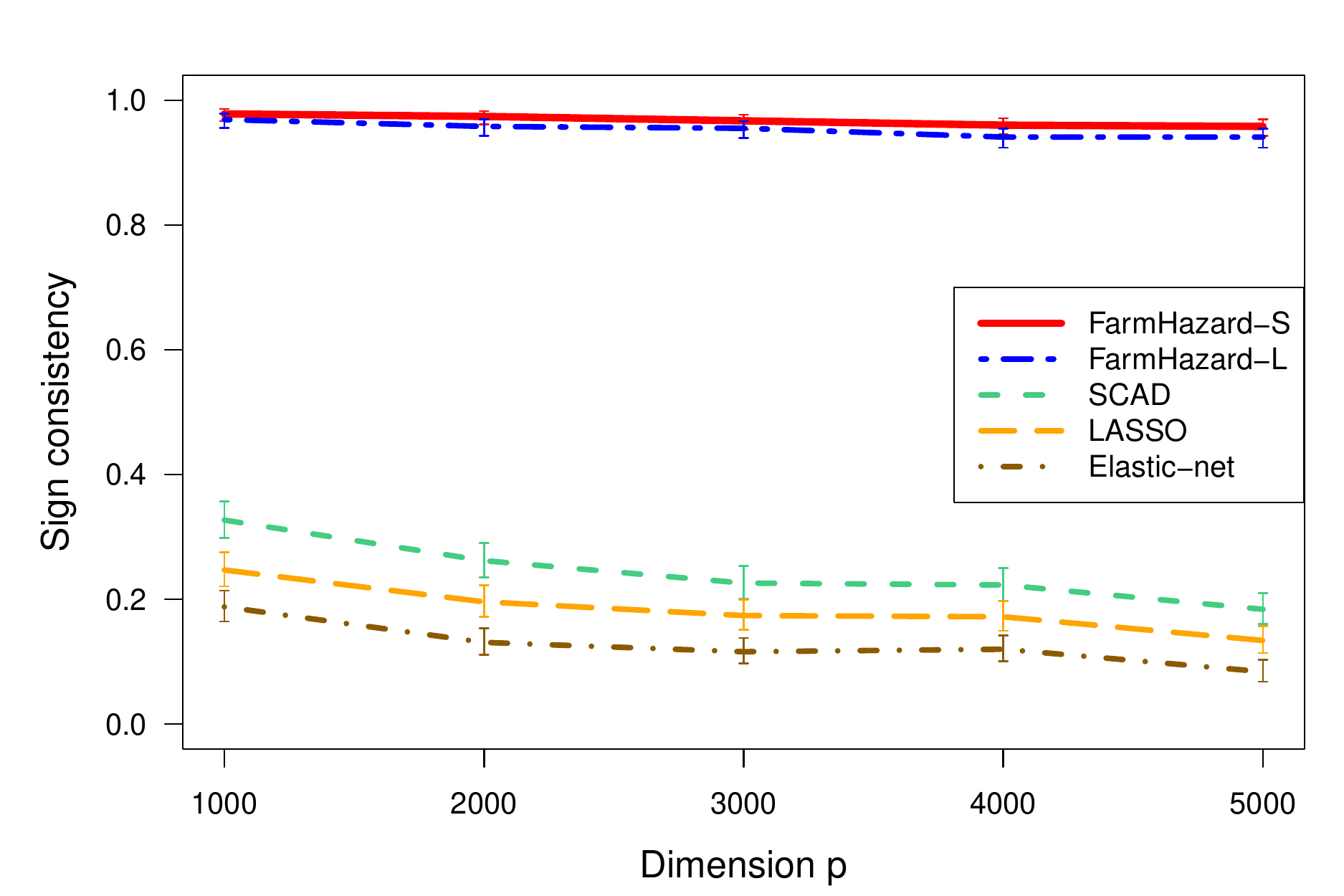}
    \end{subfigure}\hspace{\imgspace\linewidth}%
    \begin{subfigure}{\subfigfracin\linewidth}
     \centering
        \includegraphics[width=\imgfrac\linewidth]{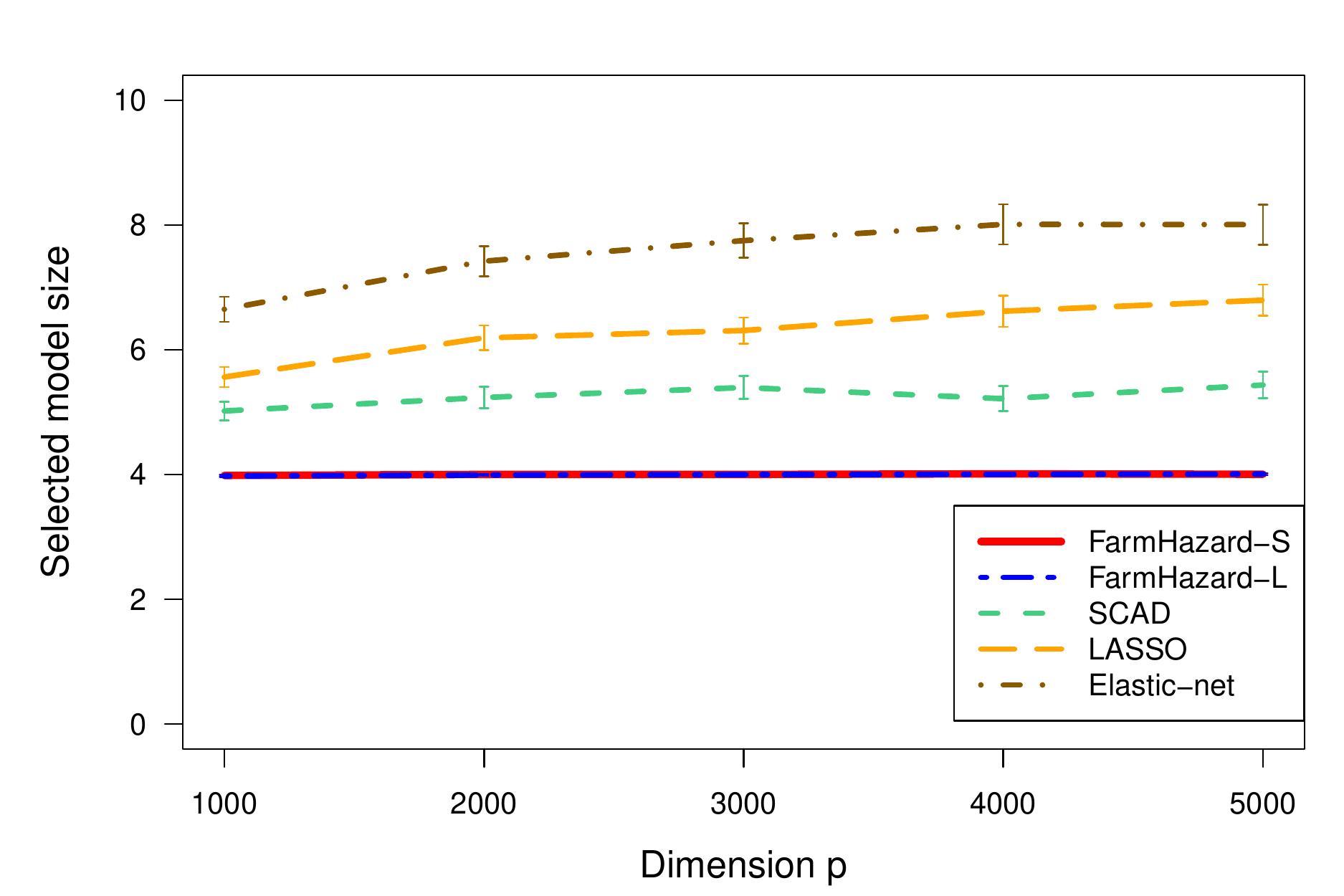}
    \end{subfigure}
    \caption{\footnotesize{Sign consistency rate (left) and selected model size (right) in the $3$-factor model setting.
}
}
    \label{fig_factor}
\end{figure}

\vspace*{-5mm}

\spacingset{1.6}
\begin{table}[H]
\scriptsize
\begin{center}
\caption{\footnotesize{
Results in the 3-factor model setting for various $p$ and $\bbeta^{\star}$.
$\bbeta^{\star}_1$ has $4$ non-zero entries, which are all $2$, and $\bbeta^{\star}_2$ has $3$ non-zero entries, which are drawn uniformly at random in $\left[0.5, 3\right]$.
Standard errors are in parentheses.
}}
\label{tab_factor}
\begin{tabular*}{\textwidth}{c@{\extracolsep{\fill}}*{6}{c}}
\hline\hline
\multicolumn{1}{c}{} & \multicolumn{2}{c}{FarmHazard-S} & \multicolumn{2}{c}{FarmHazard-L} & \multicolumn{2}{c}{LASSO}\\
\cline{2-3} \cline{4-5} \cline{6-7}
\multicolumn{7}{c}{Sign consistency rate}\\
{} & {$p =$ 500} & {$p =$ 5000} & {$p =$ 500} & {$p =$ 5000} & {$p =$ 500} & {$p =$ 5000}\\
\hline
{$\bbeta^{\star}_1$\phantom{000}} & 1.00 (0.00) & 0.97 (0.01) & 0.99 (0.003) & 0.96 (0.01) & 0.34 (0.01) & 0.15 (0.01)\\
{$\bbeta^{\star}_2$\phantom{000}} & 0.99 (0.003) & 0.97 (0.01) & 0.91 (0.01) & 0.90 (0.01) & 0.46 (0.02) & 0.25 (0.01)\\
\hline
\multicolumn{7}{c}{Selected model size}\\
{} & {$p =$ 500} & {$p =$ 5000} & {$p =$ 500} & {$p =$ 5000} & {$p =$ 500} & {$p =$ 5000}\\
\hline
{$\bbeta^{\star}_1$\phantom{000}} & 4.00 (0.00) & 4.03 (0.01) & 4.01 (0.003) & 4.04 (0.01) & 5.39 (0.06) & 7.38 (0.13)\\
{$\bbeta^{\star}_2$\phantom{000}} & 3.00 (0.003) & 3.02 (0.01) & 2.92 (0.01) & 2.92 (0.01) & 3.66 (0.04) & 4.93 (0.08)\\
\hline\hline
\end{tabular*}
\end{center}
\end{table}

\vspace*{-10mm}

\begin{figure}[H]
\centering
    \begin{subfigure}{\subfigfracin\linewidth}
       \includegraphics[width=\imgfrac\linewidth]{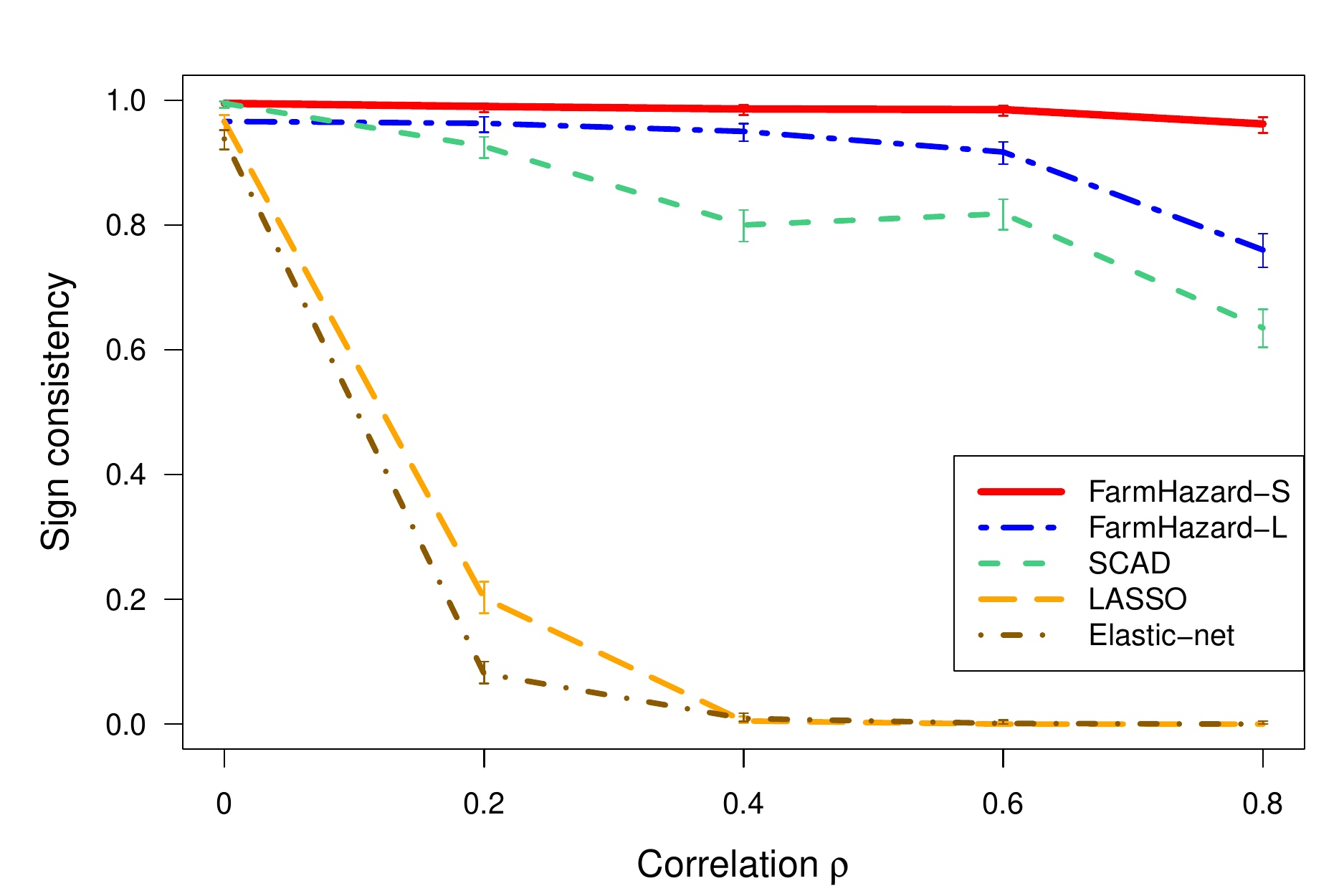}
    \end{subfigure}\hspace{\imgspace\linewidth}%
    \begin{subfigure}{\subfigfracin\linewidth}
     \centering
        \includegraphics[width=\imgfrac\linewidth]{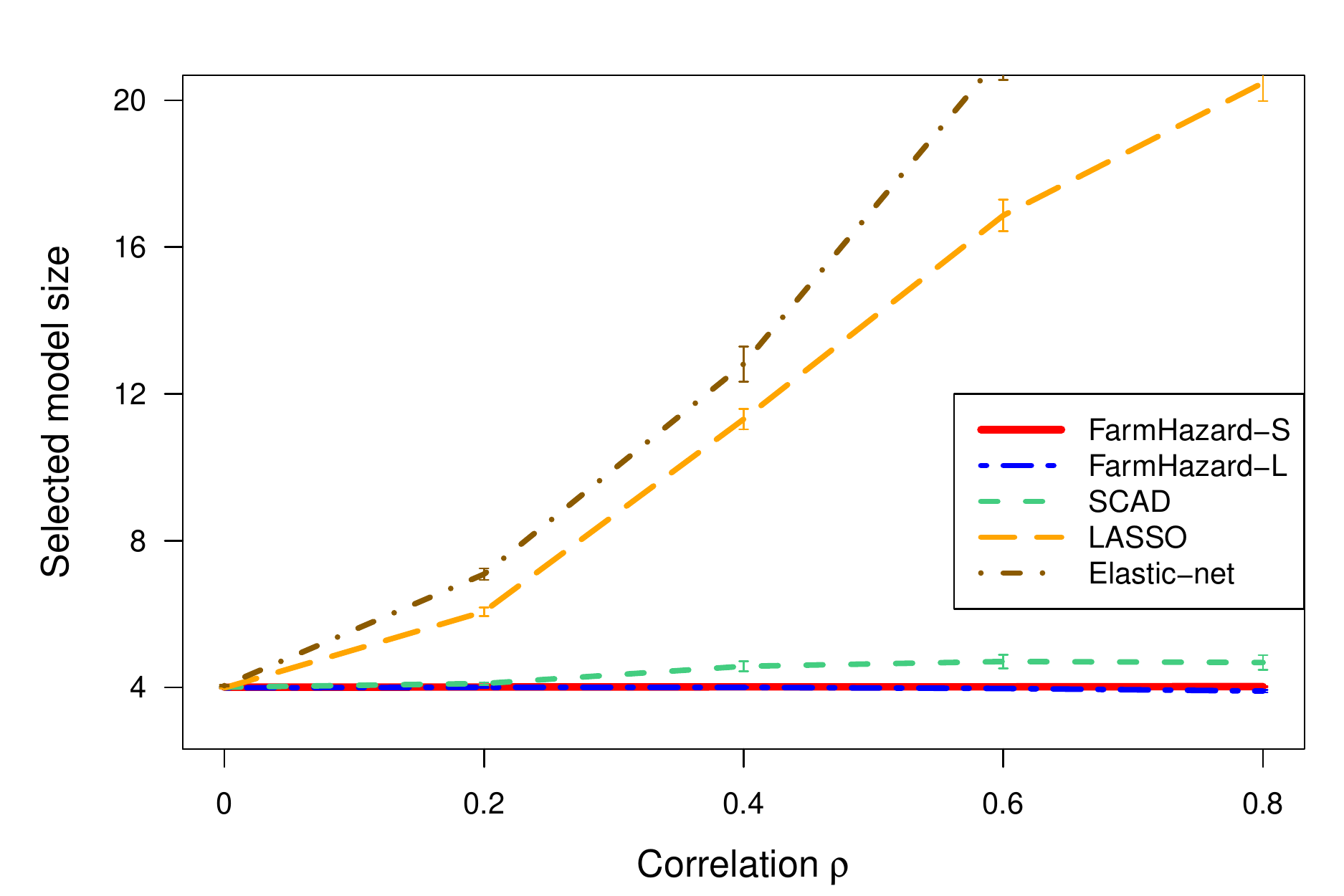}
    \end{subfigure}    
    \caption{\footnotesize{Sign consistency rate (left) and selected model size (right) in the equicorrelated setting, $p = 2000$.
}
}
    \label{fig_correl}
\end{figure}

\spacingset{1.5}

We also perform experiments in the equicorrelated setting with various values for the dimension $p$ and the correlation coefficient $\rho$. The results in Table \ref{tab_correlated} demonstrate that for any choice of dimension, FarmHazard greatly outperforms LASSO, and this becomes even more marked as the correlation $\rho$ increases. Regarding the uncorrelated setting ($\rho = 0$), FarmHazard-L and LASSO are similar, as expected, which in turn shows no price is paid by using FarmHazard-L.
For any dimension and correlation, the sign consistency rate of FarmHazard equals $1$ or is very close to $1$ and the selected model size is almost always equal to the true support size, while the sign consistency rate of LASSO decreases towards $0$ very quickly and its selected model size increases sharply when the correlation increases.

\spacingset{1.7}
\begin{table}[H]
\scriptsize
\begin{center}
\caption{\footnotesize{
Results in the equicorrelated setting for various $p$ and $\rho$.
Standard errors are in parentheses.
}}
\label{tab_correlated}
\begin{tabular*}{\textwidth}{c@{\extracolsep{\fill}}*{6}{c}}
\hline\hline
\multicolumn{1}{c}{} & \multicolumn{2}{c}{FarmHazard-S} & \multicolumn{2}{c}{FarmHazard-L} & \multicolumn{2}{c}{LASSO}\\
\cline{2-3} \cline{4-5} \cline{6-7}
\multicolumn{7}{c}{Sign consistency rate}\\
{} & {$p =$ 1000} & {$p =$ 3000} & {$p =$ 1000} & {$p =$ 3000} & {$p =$ 1000} & {$p =$ 3000}\\
\hline
{$\rho$ = 0.0\phantom{000}} & 1.00 (0.00) & 1.00 (0.00) & 0.98 (0.01) & 0.96 (0.01) & 0.98 (0.01) & 0.96 (0.01)\\
{$\rho$ = 0.4\phantom{000}} & 0.99 (0.003) & 0.99 (0.003) & 0.95 (0.01) & 0.93 (0.01) & 0.02 (0.004) & 0.01 (0.004)\\
{$\rho$ = 0.8\phantom{000}} & 0.97 (0.01) & 0.94 (0.01) & 0.78 (0.01) & 0.73 (0.01) & 0.00 (0.00) & 0.00 (0.00)\\
\hline
\multicolumn{7}{c}{Selected model size}\\
{} & {$p =$ 1000} & {$p =$ 3000} & {$p =$ 1000} & {$p =$ 3000} & {$p =$ 1000} & {$p =$ 3000}\\
\hline
{$\rho$ = 0.0\phantom{000}} & 4.00 (0.00) & 4.00 (0.00) & 3.99 (0.01) & 3.99 (0.01) & 3.99 (0.01) & 3.99 (0.01)\\
{$\rho$ = 0.4\phantom{000}} & 4.01 (0.002) & 4.01 (0.003) & 3.99 (0.01) & 4.01 (0.01) & 9.93 (0.11) & 11.78 (0.17)\\
{$\rho$ = 0.8\phantom{000}} & 4.01 (0.01) & 4.04 (0.01) & 3.83 (0.02) & 3.92 (0.02) & 18.93 (0.18) & 22.18 (0.29)\\
\hline\hline
\end{tabular*}
\end{center}
\end{table}

\spacingset{1.5}

\subsection{Variable screening}

We now illustrate the performance of our augmented variable screening procedure and compare it with sure independence screening for Cox's proportional hazards model \citep{Fan_etal_2010, Zhao_Li_2012}. The sample size is $n=200$, the dimension is $p = 10000$, and the true coefficient vector is $\bbeta^{\star} = (1,1,1,1, \mathbf{0}_{p-4}^{\top})^{\top}$. The covariates are $\bx_i=\bB \bff_i + \bu_i$ with $K = 3$ factors, where we sample the entries of $\bB$, $\bff_i$ and $\bu_i$ independently from $\mathcal{N}(0, 1)$, and the censoring mechanism is the same as in the previous section.
The screening procedures involve univariate regressions without penalization, therefore we use the package \texttt{survival} \citep{Therneau_2021}.
We run our experiments with several choices for the number of covariates selected. The screening performance is measured by the sure screening rate, the average false negative rate and the ROC curve. The surrounding confidence intervals are $95\%$ Wilson intervals for the sure screening rate, as it is a binomial proportion, and $\pm \;2$ standard error intervals for the false negative rate.
The average values and confidence intervals are computed over $1000$ replications, and Figure~\ref{fig_screening} displays the results. We experimentally confirm that the usual screening method performs poorly in this setting and that our factor-augmented variable screening remarkably improves upon it, yielding higher sure screening rate, lower false negative rate and better ROC curve.

\begin{figure}[H]
\centering
    \begin{subfigure}{\subfigfracinthree\linewidth}
    \centering
        \includegraphics[width=.9\linewidth]{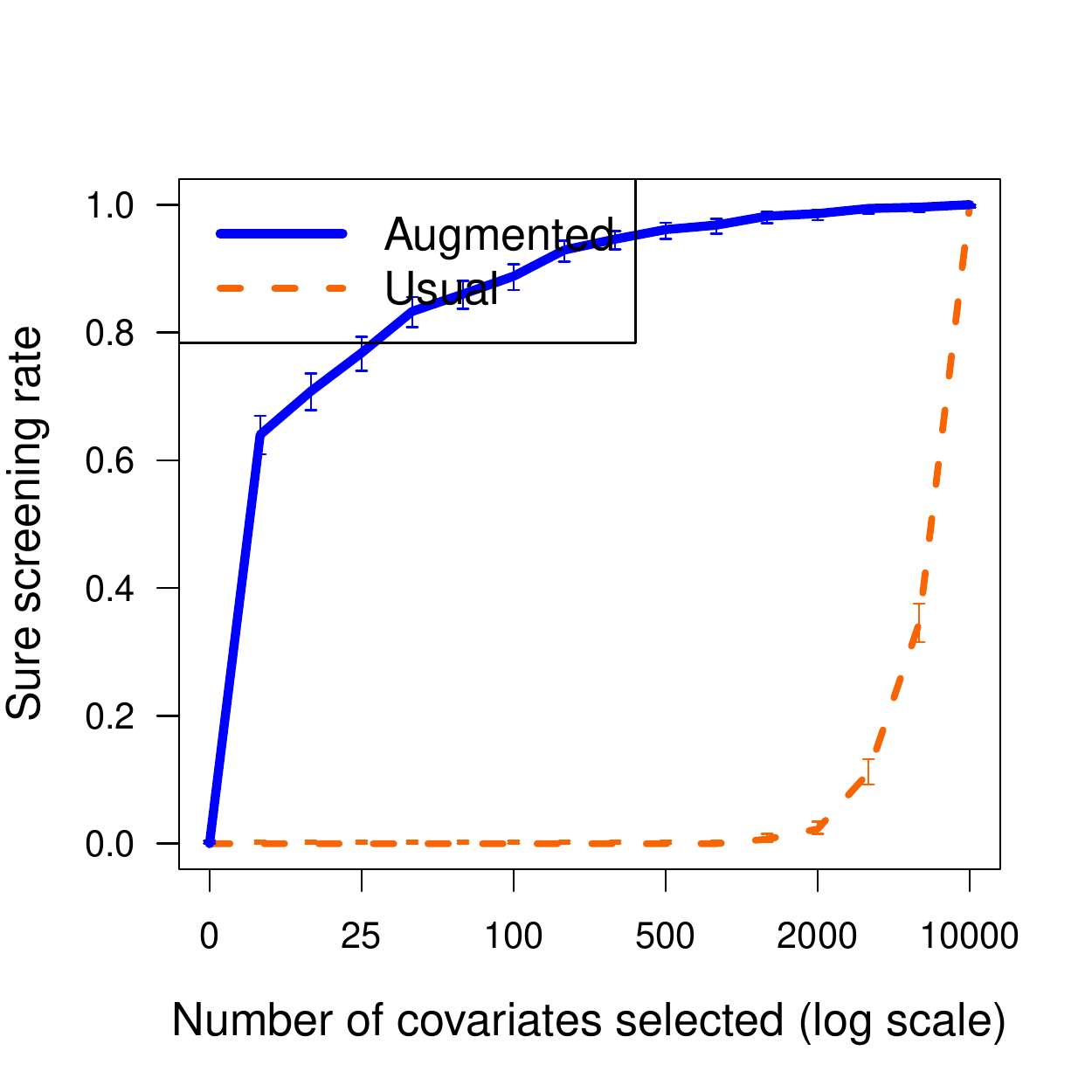}
    \end{subfigure}\hfill
     \begin{subfigure}{\subfigfracinthree\linewidth}
     \centering
         \includegraphics[width=.9 \linewidth]{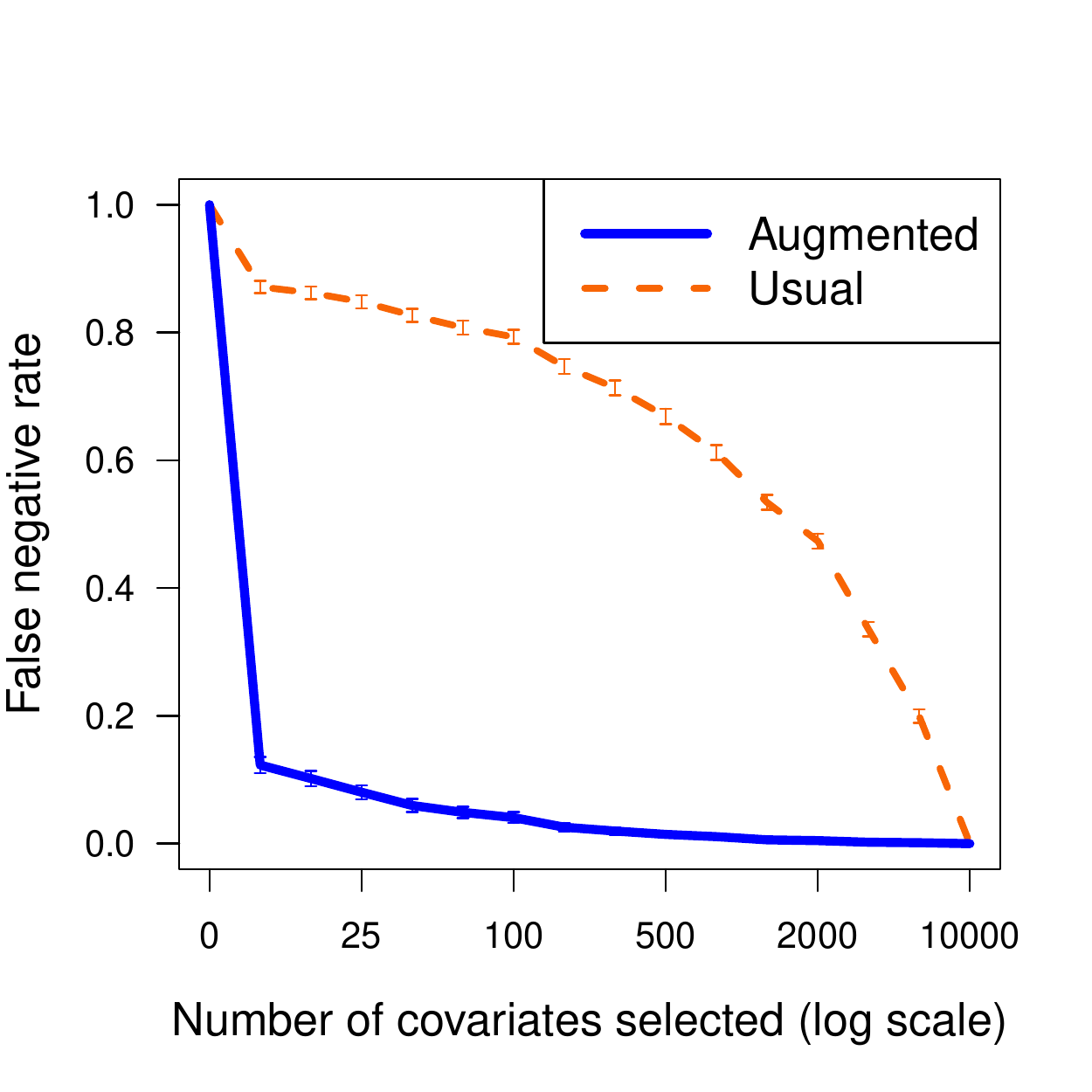}
    \end{subfigure}\hfill
    \begin{subfigure}{\subfigfracinthree\linewidth}
     \centering
         \includegraphics[width=.9 \linewidth]{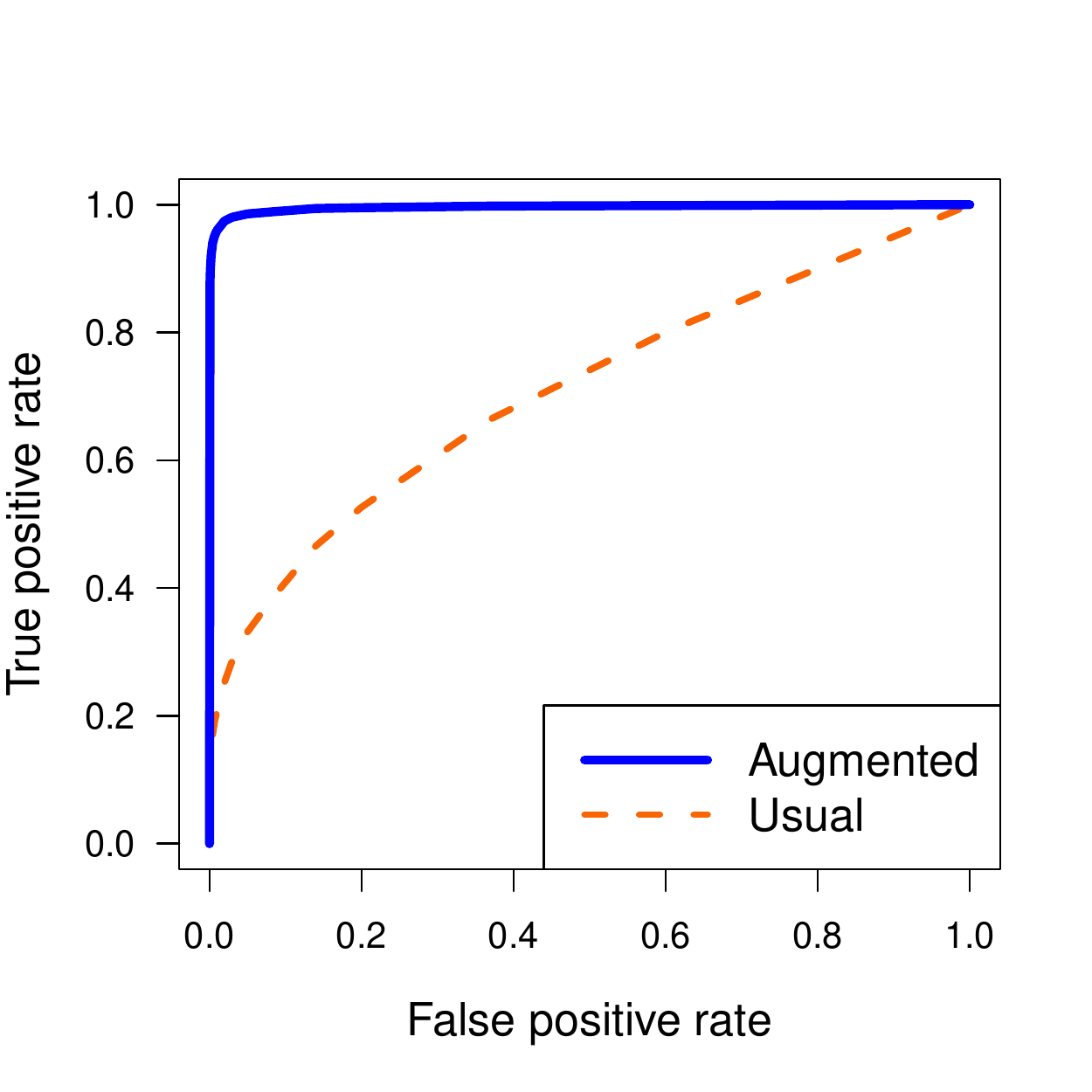}
    \end{subfigure}
    \caption{\footnotesize{Sure screening rate (left), false negative rate (middle) and ROC curve (right) of the augmented and usual procedures.
    }
    }
    \label{fig_screening}
\end{figure}

\subsection{Real data analysis}
We apply our procedure to the diffuse large-B-cell lymphoma (DLBCL) dataset\footnote{The dataset is available at \url{https://llmpp.nih.gov/DLBCL/}.} of \citet{Rosenwald_etal_2002}. Gene-expression profiles, with a total of $7399$ microarray features, are related to survival time. A sample of $240$ patients with untreated DLBCL is available in this study, of which $138$ died. We use median value imputation: the missing values for each covariate are replaced by the median of the observed values for this predictor, and we standardize the data. Five patients have follow-up time equal to zero; we remove them from the study. We screen the top $1500$ covariates out of the $7399$ present in the dataset, and hence reduce the dimensionality.
We then fit a regularized Cox model with FarmHazard-L, FarmHazard-S and LASSO to the data, the tuning parameters being computed by $10$-fold cross-validation. The genes selected by at least two procedures are presented in Table~\ref{tab_common_genes}.
The Gene column is a multi-field description of the Lymphochip microarray feature, with each field separated by a vertical bar.
The FarmHazard-L, FarmHazard-S and LASSO columns represent the estimated coefficients for these genes.
We note that the three procedures yield sign consistency of the estimated coefficients among the common selected covariates.
Additionally, we show the top $10$ coefficients of each procedure in Table~\ref{tab_top_genes}. In both tables, an X indicates a gene not selected by a procedure, i.e.~whose estimated coefficient is exactly zero. In Table~\ref{tab_top_genes}, an $<$ sign denotes a gene that does not appear in the top $10$ genes of a procedure but whose estimated coefficient is non-zero.

\spacingset{1.8}
\begin{table}[H]
\begin{center}
\caption{\small{Genes selected by at least two procedures and estimated coefficients}}
\label{tab_common_genes}
\scriptsize
\vspace{2mm}
\begin{tabular}{p{0.67\linewidth}rrr}
\hline\hline
{Gene}   &   {FarmH-L}   &   {FarmH-S}   &   {LASSO}
\\\cline{1-4}
|BC012161|*AA262133|Hs.99741|septin 1 & 0.162 & 0.219 & 0.186\\
|AF134159|*AA761323|Hs.15106|chromosome 14 open reading frame 1 & 0.105 & 0.081 & 0.096\\
||*AA808306|Hs.252627|ESTs & $-$0.080 & $-$0.085 & $-$0.135\\
||||LC\_26716 & 0.057 & 0.046 & 0.012\\
|D42043|*AA243583|Hs.79123|KIAA0084 protein & $-$0.048 & $-$0.052 & $-$0.207\\
|X53586|*R68760|Hs.227730|integrin, alpha 6 & $-$0.043 & $-$0.032 & $-$0.051\\
|M68956|$\sim$AA702254|Hs.75607|myristoylated alanine-rich protein kinase C substrate & 0.040 & 0.026 & 0.002\\
|D88532|*AI219836|Hs.372548|phosphoin ositide-3-kinase, regulatory subunit, polypeptide 3 (p55, gamma) & 0.004 & 0.015 & 0.110\\
||||LC\_32424 & 0.004 & X & 0.024\\
|U84143|~AA766589|Hs.180015|D-dopachrome tautomerase & X & 0.009 & 0.007\\
|M60527|*AA236906|Hs.709|deoxycytidine kinase & $-$0.241 & $-$0.380 & X\\
|S37431|*AI281565|Hs.181357|laminin receptor 1 (67kD, ribosomal protein SA) & $-$0.106 & $-$0.122 & X\\
||*AI392710|Hs.355401|Homo sapiens, clone IMAGE:4291796, mRNA, partial cds & 0.085 & 0.078 & X\\
|Y10659|*AA398286|Hs.285115|interleukin 13 receptor, alpha 1 & 0.063 & 0.057 & X\\
|X59812|*H98765|Hs.82568|cytochrome P450, subfamily XXVIIA (steroid 27-hydroxylase, cerebrotendinous xanthomatosis), polypeptide 1 & $-$0.047 & $-$0.065 & X\\
||*AI081246|Hs.122983|ESTs & $-$0.046 & $-$0.023 & X\\
|M24283|*R77293|Hs.168383|intercellular adhesion molecule 1 (CD54), human rhinovirus receptor & 0.045 & 0.040 & X\\
|AB020662|~AA054122|Hs.182982|golgin-67 & $-$0.043 & $-$0.073 & X\\
|BC000524||Hs.350166|ribosomal protein S6 & $-$0.036 & $-$0.022 & X\\
||*AI391477|Hs.343912|CAC-1 & 0.034 & 0.023 & X\\
|X66079|*AA490435|Hs.192861|Spi-B transcription factor (Spi-1/PU.1 related) & 0.033 & 0.053 & X\\
||||LC\_31500 & 0.023 & 0.024 & X\\
|AK023686||Hs.118183|hypothetical protein FLJ22833 & $-$0.021 & $-$0.037 & X\\
|U19970|*AA770349|Hs.51120|cathelicidin antimicrobial peptide & 0.021 & 0.033 & X\\
||||LC\_30576 & 0.018 & 0.007 & X\\
|M63438|$\sim$AA291844|Hs.156110|immunoglobulin kappa constant & $-$0.015 & $-$0.010 & X\\
||*AA805749|| & 0.014 & 0.019 & X\\
|AK001549|*AI492096|Hs.29379|hypothetical protein FLJ10687 & 0.013 & 0.028 & X
\\\hline\hline
\end{tabular}
\end{center}
\end{table}

\begin{table}[H]
\begin{center}
\caption{\small{Top $10$ coefficients of each procedure}}
\label{tab_top_genes}
\scriptsize
\vspace{2mm}
\begin{tabular}{p{0.67\linewidth}rrr}
\hline\hline
{Gene}   &   {FarmH-L}   &   {FarmH-S}   &   {LASSO}
\\\cline{1-4}
|M60527|*AA236906|Hs.709|deoxycytidine kinase & $-$0.241 & $-$0.380 & X\\
|BC012161|*AA262133|Hs.99741|septin 1 & 0.162 & 0.219 & 0.186\\
|S37431|*AI281565|Hs.181357|laminin receptor 1 (67kD, ribosomal protein SA) & $-$0.106 & $-$0.122 & X\\
|AF134159|*AA761323|Hs.15106|chromosome 14 open reading frame 1 & 0.105 & 0.081 & 0.096\\
||*AI392710|Hs.355401|Homo sapiens, clone IMAGE:4291796, mRNA, partial cds & 0.085 & 0.078 & X\\
||*AA808306|Hs.252627|ESTs & $-$0.080 & $-$0.085 & $-$0.135\\
|Y10659|*AA398286|Hs.285115|interleukin 13 receptor, alpha 1 & 0.063 & 0.057 & X\\
||||LC\_26716 & 0.057 & $<$ & $<$\\
|D42043|*AA243583|Hs.79123|KIAA0084 protein & $-$0.048 & $<$ & $-$0.207\\
|X59812|*H98765|Hs.82568|cytochrome P450, subfamily XXVIIA (steroid 27-hydroxylase, cerebrotendinous xanthomatosis), polypeptide 1 & $-$0.047 & $-$0.065 & X\\
|AB020662|$\sim$AA054122|Hs.182982|golgin-67 & $<$ & $-$0.073 & X\\
|X66079|*AA490435|Hs.192861|Spi-B transcription factor (Spi-1/PU.1 related) & $<$& 0.053 & X\\
|D88532|*AI219836|Hs.372548|phosphoin ositide-3-kinase, regulatory subunit, polypeptide 3 (p55, gamma) & $<$ & $<$ & 0.110\\
|M29536|*AI052256|Hs.12163|eukaryotic translation initiation factor 2, subunit 2 (beta, 38kD ) & X & X & 0.105\\
|AF127481|*AA262155|Hs.301946|lymphoid blast crisis oncogene & X & X & 0.103\\
|U50196|$\sim$N48691|Hs.94382|adenosine kinase & X & X & $-$0.095\\
|M20430|*AA714513|Hs.352392|major histocompatibility complex, class II, DR beta 5 & X & X & $-$0.091\\
|AF414120|*AA210929|Hs.247824|cytotoxic T-lymphocyte-associated protein 4 & X & X & $-$0.090
\\\hline\hline
\end{tabular}
\end{center}
\end{table}

\spacingset{1.5}
We use the concordance (C)-index \citep{Harrell_etal_1982, Harrell_etal_1996} to evaluate the predictive performance of the procedures. It is the proportion of all usable patient pairs whose predictions and outcomes are concordant. Note that due to censoring, patient pairs may not be usable. C-index values are between $0$ and $1$, and $0.5$ would mean that a fitted model has no predictive discrimination. We randomly split the dataset into training and testing sets in 0.8 and 0.2 proportion. We then screen the top $1500$ predictors using the training set, before applying the three procedures. The out-of-sample C-index is evaluated on the testing set. We repeat this whole procedure $1000$ times, and the average values along with standard errors are given in Table~\ref{tab_C-index}. We observe that FarmHazard yields a larger out-of-sample C-index than LASSO, hence has better predictive performance.

\spacingset{1.8}
\begin{table}[H]
\begin{center}
\caption{\small{Out-of-sample C-index.}}
\label{tab_C-index}
\footnotesize
\vspace{2mm}
\begin{tabular}{cccc}
\hline\hline
 & {FarmHazard-L} & {FarmHazard-S} & {LASSO}
\\\cline{1-4}
{Average} & 0.624 & 0.623 & 0.545\\
{Standard error} & 0.002 & 0.002 & 0.002
\\\hline\hline
\end{tabular}
\end{center}
\end{table}

\spacingset{1.5}

\section{Conclusion}
A stylized feature for high-dimensional data is the dependence of the measurements.
In this paper, we proposed FarmHazard, a new model extending Cox's proportional hazards model that is able to deal with high-dimensional correlated covariates and hence address one of the most important issues in the analysis of big data. We overcame the additional technical challenges of Cox's model emerging from censored data and time-dependent covariates. The new procedures, FarmHazard-L and FarmHazard-S, generated by the new model decompose the high-dimensional covariates via common factors and idiosyncratic components. They learn these factors and components from the data and use them as new predictors, improving upon the usual Cox's model regularization techniques.
Another considerable benefit of using this new set of predictors is the ability to perform screening in the presence of highly correlated covariates for ultra-high dimensional problems.

\newpage

\appendix

\begin{center}
\textbf{\LARGE{Appendix}}
\end{center}

\section{Proof of Lemma~\ref{lem:factor}}\label{sec:proof-lem-factor}

The first four bounds (corresponding to the general and the sub-Gaussian cases) come from a combination of Theorem 10.4 and Corollary 10.2 in \citet{SFDS}. Regarding the improved bounds when $\|\bx^{(2)}\|_{\infty}$ and $\|\bff\|_{\infty}$ are bounded and $\|\bB^{\top} \bu\|_{\infty} \leq \mathcal{C} \sqrt{p_2}$ for some $\mathcal{C}>0$, we write for each $i \in [n]$,
$\|\bx^{(2)}_i\|_2 \leq \sqrt{p_2} \|\bx^{(2)}_i\|_{\infty}$,
$\|\bff_i\|_2 \leq \sqrt{K} \|\bff_i\|_{\infty}$, and
$\|\bB^{\top} \bu_i\|_2 \leq \sqrt{K} \|\bB^{\top} \bu_i\|_{\infty}$.

\section{Proof of Theorem~\ref{thm:consistency-estimated-factors}}\label{sec:proof-thm:consistency-estimated-factors}

We prove Theorem~\ref{thm:consistency-estimated-factors-btheta} below, from which Theorem~\ref{thm:consistency-estimated-factors} naturally follows.

\begin{thm}\label{thm:consistency-estimated-factors-btheta}
	Suppose that Assumptions~\ref{assump:RSC-1}--\ref{assump:factor} hold.
	
Define $M=6 |\cS|^{3/2} \sup\limits_{t\in [0, \tau]} \|\hat{\bW}(t)\|_{\max}^3$. Let $\hat{\btheta}^{\,\star}=
((\bbeta^{\star})^{\top},
(\hat{\bgamma}^{\,\star}_2)^{\top})^{\top} \in \R^{p+K}$, where $\hat{\bgamma}^{\,\star}_2 = \hat{\bB}^{\top} \bbeta_2^{\star}$, and
\begin{align*}
\eta = \left\|\frac{1}{n} \sum_{i=1}^n \int_0^{\tau} \left\{ \hat{\bw}_i(t) - \frac{\sum_{j=1}^n Y_j(t) \hat{\bw}_j(t) \exp(\bx_j(t)^{\top} \bbeta^{\star})}{\sum_{j=1}^n Y_j(t) \exp(\bx_j(t)^{\top} \bbeta^{\star})} \right\} d N_i(t) \right\|_{\infty},	
\end{align*}		
where $\{\hat{\bw}_i(t)^{\top}\}_{i\in [n]}$ are the rows of $\hat{\bW}(t) = (\bX_1(t), \hat{\bW}_2)$.
If $\frac{7}{\mu} \eta < \lambda < \frac{\kappa_2\kappa_{\infty}\mu}{12M\sqrt{|\cS|}}$, then the solution of~\eqref{L1-ppmle} satisfies
$\supp(\hat{\btheta})\subseteq \supp(\hat{\btheta}^{\,\star})$ and
\begin{eqnarray*}
\|\hat{\btheta}-\hat{\btheta}^{\,\star}\|_{\infty}\leq \frac{6\lambda}{5\kappa_{\infty}}, \qquad  \|\hat{\btheta}-\hat{\btheta}^{\,\star}\|_2\leq \frac{4\lambda\sqrt{|\cS|}}{\kappa_2},
\qquad \|\hat{\btheta}-\hat{\btheta}^{\,\star}\|_1\leq \frac{6\lambda|\cS|}{5\kappa_{\infty}}.
\end{eqnarray*}
If there exists $C>7$ such that $\eta < \frac{\kappa_2\kappa_{\infty}\mu^2}{12CM\sqrt{|\cS|}}$ and
$\min\limits_{j \in \cS_{\bbeta^{\star}}}|\beta^{\star}_j| > \frac{6C}{5\kappa_{\infty}\mu}\eta$, then for $\frac{7}{\mu}\eta < \lambda < \frac{C}{\mu}\eta$, we have $\sign(\hat{\bbeta})=\sign(\bbeta^{\star})$.
Moreover, if $\min\limits_{j \in \supp(\hat{\bgamma}^{\,\star}_2)}|(\hat{\gamma}^{\,\star}_2)_j| > \frac{6C}{5\kappa_{\infty}\mu}\eta$, then $\sign(\hat{\btheta})=\sign(\hat{\btheta}^{\,\star})$.
\end{thm}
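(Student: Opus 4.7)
The plan is a primal--dual witness (PDW) argument adapted to the Cox partial-likelihood loss, in the spirit of the M-estimator framework of \citet{Lee_etal_2015}, with two new ingredients compared with the linear / GLM case: the estimated design $\hat{\bW}$ is perturbed from its ``population'' counterpart $\bW$ (controlled by Assumption~\ref{assump:factor}), and Cox's loss is non-quadratic, so the third-order remainder in its Taylor expansion must be absorbed, which is the role of $M$. A key simplification is the identity
\begin{equation*}
\hat{\bW}\hat{\btheta}^{\,\star}
= \bX_1\bbeta_1^\star + \hat{\bF}\hat{\bB}^\top\bbeta_2^\star + \hat{\bU}\bbeta_2^\star
= \bX\bbeta^\star = \bW\btheta^\star,
\end{equation*}
since $\hat{\bF}\hat{\bB}^\top + \hat{\bU} = \bX_2$ by construction. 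Consequently $\nabla\cL(\hat{\bW}\hat{\btheta}^{\,\star})$ is evaluated at the \emph{true} linear predictors, $\|\nabla\cL(\hat{\bW}\hat{\btheta}^{\,\star})\|_\infty = \eta$ by definition, and the Hessian perturbation between $\nabla^2\cL(\bW\btheta^\star)$ and $\nabla^2\cL(\hat{\bW}\hat{\btheta}^{\,\star})$ comes only from the covariate difference in the rank-one terms of $\bV$ in~\eqref{V}; its max-norm is bounded by $\varepsilon\bigl(2\sup_t\|\bW(t)\|_{\max} + \varepsilon\bigr)$.

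\textbf{Step 1 (Oracle estimator).} First solve the version of \eqref{L1-ppmle} restricted to vectors supported on $\cS = \supp(\hat{\btheta}^{\,\star})$, producing $\tilde{\btheta}$. A one-step Taylor expansion of the KKT condition around $\hat{\btheta}^{\,\star}$ yields
\begin{equation*}
\nabla^2_{\cS\cS}\cL(\hat{\bW}\hat{\btheta}^{\,\star})\bigl(\tilde{\btheta}_\cS - \hat{\btheta}^{\,\star}_\cS\bigr)
= -\nabla_\cS\cL(\hat{\bW}\hat{\btheta}^{\,\star}) - \lambda\bz_\cS - \bR_\cS,
\end{equation*}
where $\bz_\cS$ is a sub-gradient of $\|\cdot\|_1$ (zero on factor coordinates, since those are unpenalized) and $\bR_\cS$ is a quadratic Taylor remainder. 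Because the Cox Hessian is a weighted variance of rows of $\hat{\bW}(t)$ across the risk set, its derivative is linear in triple products of those rows, so $\|\bR_\cS\|_\infty \lesssim M\|\tilde{\btheta}_\cS - \hat{\btheta}^{\,\star}_\cS\|_2^{2}/\sqrt{|\cS|}$. Transferring Assumption~\ref{assump:RSC-1} from $\bW$ to $\hat{\bW}$ uses the Hessian-perturbation bound above (multiplied by $|\cS|$ to convert max-norm into induced $\infty$-norm) and the inequality in Assumption~\ref{assump:factor}, preserving restricted strong convexity up to a $(1\pm O(\mu))$ factor. A fixed-point / contraction argument on the linearized equation, with $\lambda > 7\eta/\mu$ absorbing the gradient term and $\lambda < \kappa_2\kappa_\infty\mu/(12M\sqrt{|\cS|})$ absorbing $\bR_\cS$, delivers existence of $\tilde{\btheta}$ together with the advertised $\ell_\infty$, $\ell_2$ and $\ell_1$ bounds.

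\textbf{Step 2 (Dual feasibility and sign consistency).} Extending $\tilde{\btheta}$ by zero on $\cS^c$, verify strict dual feasibility by expanding $\nabla_{\cS^c\cap[p]}\cL(\hat{\bW}\tilde{\btheta})$ at $\hat{\btheta}^{\,\star}$ and substituting the expression for $\tilde{\btheta}_\cS - \hat{\btheta}^{\,\star}_\cS$ from Step~1:
\begin{equation*}
\nabla_{\cS^c}\cL(\hat{\bW}\tilde{\btheta}) = \nabla_{\cS^c}\cL(\hat{\bW}\hat{\btheta}^{\,\star}) - \nabla^2_{\cS^c\cS}\cL\,[\nabla^2_{\cS\cS}\cL]^{-1}\bigl(\nabla_\cS\cL(\hat{\bW}\hat{\btheta}^{\,\star}) + \lambda\bz_\cS + \bR_\cS\bigr) + \bR_{\cS^c}.
\end{equation*}
Assumption~\ref{assump:IC} transferred to $\hat{\bW}$ (again via Assumption~\ref{assump:factor}, which turns $1-2\mu$ into $1-\mu$) bounds the $\ell_\infty$ norm of the main term by $(1-\mu)\lambda$; the gradient piece contributes at most $O(\eta)$ and the remainder pieces at most a fraction of $\mu\lambda$, so the slack $\lambda > 7\eta/\mu$ keeps the total strictly below $\lambda$, certifying the PDW construction. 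By restricted strong convexity $\hat{\btheta} = \tilde{\btheta}$ is the unique minimizer of \eqref{L1-ppmle}, hence $\supp(\hat{\btheta})\subseteq\cS$ and the three norm bounds on $\hat{\btheta}-\hat{\btheta}^{\,\star}$ inherit from $\tilde{\btheta}_\cS-\hat{\btheta}^{\,\star}_\cS$. Sign consistency follows immediately: the $\ell_\infty$ bound $\|\hat{\btheta}-\hat{\btheta}^{\,\star}\|_\infty \leq 6\lambda/(5\kappa_\infty) < 6C\eta/(5\kappa_\infty\mu)$ is strictly smaller than any nonzero true coefficient by the minimum-signal hypothesis, so no coordinate can cross zero.

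\textbf{Main obstacle.} The delicate part is the simultaneous alignment of three error scales: the gradient error $\eta$ (forced below $\mu\lambda/7$), the cubic Taylor remainder (forcing $\lambda < \kappa_2\kappa_\infty\mu/(12M\sqrt{|\cS|})$), and the factor-estimation error $\varepsilon$ (which must satisfy the quadratic inequality in Assumption~\ref{assump:factor} so that both RSC and IC transfer from $\bW$ to $\hat{\bW}$ with only a halving of the margin $\mu$). The specific numerical constants $6/5$, $7$, $12$, $C$ in the statement arise from this tuning and from closing the contraction argument with enough strict-inequality room to guarantee both the $\ell_\infty$ error bound on $\tilde{\btheta}-\hat{\btheta}^{\,\star}$ and the dual-feasibility certificate $\|\nabla_{\cS^c}\cL(\hat{\bW}\tilde{\btheta})\|_\infty < \lambda$.
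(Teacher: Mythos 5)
Your proposal is correct and follows essentially the same route as the paper: the paper also relies on the identity $\hat{\bW}\hat{\btheta}^{\,\star}=\bX\bbeta^{\star}=\bW\btheta^{\star}$, the cubic bound $M$ on the Hessian's Lipschitz constant, and the transfer of Assumptions~\ref{assump:RSC-1} and~\ref{assump:IC} from $\bW$ to $\hat{\bW}$ via the perturbation bound of Assumption~\ref{assump:factor}. The only difference is one of packaging: the paper verifies these facts as the hypotheses of the general primal--dual-witness theorem for regularized $M$-estimators in the supplement of \citet{Fan_etal_2020} and then invokes that result, whereas you sketch the oracle-estimator and dual-feasibility steps of that same argument inline.
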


We have $\|\hat{\bbeta}-\bbeta^{\star}\|= \|\hat{\btheta}_{[p]}-\hat{\btheta}^{\,\star}_{[p]}\|\leq \|\hat{\btheta}-\hat{\btheta}^{\,\star}\|$ for any norm $\|\cdot\|$, and $\supp(\hat{\bbeta})=\supp(\hat{\btheta}_{[p]})$, hence Theorem~\ref{thm:consistency-estimated-factors} will follow from Theorem~\ref{thm:consistency-estimated-factors-btheta}.
As $\hat{\bW}(t) \hat{\btheta}^{\,\star} = \bX(t) \bbeta^{\star}$ for all $t\in [0, \tau]$, we obtain $\eta = \|\nabla \cL(\hat{\bW}\hat{\btheta}^{\,\star})\|_{\infty}$. We then have $\|\nabla_{\cS} \cL(\hat{\bW}\hat{\btheta}^{\,\star})\|_{\infty}\leq\eta$, $\|\nabla_{\cS} \cL(\hat{\bW}\hat{\btheta}^{\,\star})\|_2\leq\eta\sqrt{|\cS|}$ and $\|\nabla_{\cS} \cL(\hat{\bW}\hat{\btheta}^{\,\star})\|_1\leq\eta|\cS|$.

We will prove Lemma~\ref{lem:bridge} below, which will imply that all the regularity conditions in Theorem B.1 of \citet{Fan_etal_2020} (supplement file) are satisfied, from which the results in Theorem~\ref{thm:consistency-estimated-factors-btheta} will follow. Before introducing and proving Lemma~\ref{lem:bridge}, we first introduce another useful lemma, which will be used in the subsequent proofs.

\begin{lem}[\citet{Fan_etal_2020}]\label{inverse-perturbation}
	Suppose $\bA\in\R^{q\times r}$ and $\bB,\bC\in\R^{r\times r}$ and $\|\bC\bB^{-1}\|<1$, where $\|\cdot\|$ is an induced norm. Then $\|\bA[(\bB+\bC)^{-1}-\bB^{-1}]\|\leq \frac{\|\bA\bB^{-1}\|\cdot\|\bC\bB^{-1}\|}{1-\|\bC\bB^{-1}\|}$.
\end{lem}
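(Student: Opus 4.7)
The plan is to reduce the perturbation $(\bB+\bC)^{-1} - \bB^{-1}$ to a form where the factor $\bA\bB^{-1}$ can be pulled out on the left and the remaining object is controlled by a Neumann-series argument. Specifically, I would first factor $\bB+\bC = (\bI_r + \bC\bB^{-1})\bB$, which is valid since $\bB$ is invertible (it has a well-defined $\bB^{-1}$ by hypothesis). Inverting both sides gives $(\bB+\bC)^{-1} = \bB^{-1}(\bI_r + \bC\bB^{-1})^{-1}$, provided $\bI_r + \bC\bB^{-1}$ is invertible, which is guaranteed by the assumption $\|\bC\bB^{-1}\|<1$.

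Next, I would write
\begin{equation*}
\bA\bigl[(\bB+\bC)^{-1} - \bB^{-1}\bigr] \;=\; \bA\bB^{-1}\bigl[(\bI_r + \bC\bB^{-1})^{-1} - \bI_r\bigr],
\end{equation*}
and expand the bracket via the Neumann series: since $\|\bC\bB^{-1}\|<1$, the series $(\bI_r + \bC\bB^{-1})^{-1} = \sum_{k=0}^{\infty} (-\bC\bB^{-1})^k$ converges in the induced norm, so
\begin{equation*}
(\bI_r + \bC\bB^{-1})^{-1} - \bI_r \;=\; \sum_{k=1}^{\infty} (-\bC\bB^{-1})^k.
\end{equation*}
Submultiplicativity of the induced norm and the geometric-series bound then yield $\|(\bI_r + \bC\bB^{-1})^{-1} - \bI_r\| \leq \sum_{k=1}^{\infty} \|\bC\bB^{-1}\|^k = \|\bC\bB^{-1}\|/(1-\|\bC\bB^{-1}\|)$.

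Combining these two displays and applying submultiplicativity once more to pull $\bA\bB^{-1}$ out on the left gives
\begin{equation*}
\bigl\|\bA\bigl[(\bB+\bC)^{-1} - \bB^{-1}\bigr]\bigr\| \;\leq\; \|\bA\bB^{-1}\| \cdot \frac{\|\bC\bB^{-1}\|}{1-\|\bC\bB^{-1}\|},
\end{equation*}
which is the claim. There is essentially no obstacle here: the only nontrivial point is justifying convergence of the Neumann series and the invertibility of $\bI_r + \bC\bB^{-1}$, both of which follow immediately from $\|\bC\bB^{-1}\|<1$ together with the fact that any induced norm is submultiplicative and dominates the spectral radius. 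The conformability conditions in the displayed products are immediate from the stated dimensions of $\bA$, $\bB$, $\bC$.
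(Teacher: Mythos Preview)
Your argument is correct: the factorization $(\bB+\bC)^{-1}=\bB^{-1}(\bI_r+\bC\bB^{-1})^{-1}$ followed by the Neumann-series bound is exactly the standard route, and every step is justified by the hypothesis $\|\bC\bB^{-1}\|<1$ together with submultiplicativity of induced norms. The paper does not supply its own proof of this lemma---it is quoted from \citet{Fan_etal_2020}---so there is nothing further to compare.
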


\begin{lem}\label{lem:bridge}
	Suppose that Assumptions~\ref{assump:RSC-1}--\ref{assump:factor} hold.	
	Define $M=6 |\cS|^{3/2}
	\sup\limits_{t\in [0, \tau]} \|\hat{\bW}(t)\|_{\max}^3$. Then the next four inequalities hold.
	\begin{align}
	(i) \quad &\|\nabla_{\cdot \cS}^2 \cL(\hat{\bW}\btheta)-\nabla_{\cdot \cS}^2 \cL(\hat{\bW}\hat{\btheta}^{\,\star})\|_{\infty}
	\leq M\|\btheta-\hat{\btheta}^{\,\star}\|_2,\;\text{if } \supp(\btheta)\subseteq \cS,\label{lem:bridge-1}\\
	(ii) \quad &\|(\nabla^2_{\cS \cS}\cL(\hat{\bW}\hat{\btheta}^{\,\star}))^{-1}\|_{\infty}\leq \frac{1}{2\kappa_{\infty}}\label{lem:bridge-2},\\
	(iii) \quad &\|(\nabla^2_{\cS \cS}\cL(\hat{\bW}\hat{\btheta}^{\,\star}))^{-1}\|_2\leq \frac{1}{2\kappa_2},\label{lem:bridge-3}\\
	(iv) \quad & \|\nabla^2_{\cS^{c} \cS} \cL(\hat{\bW}\hat{\btheta}^{\,\star})(\nabla^2_{\cS \cS} \cL(\hat{\bW}\hat{\btheta}^{\,\star}))^{-1}\|_{\infty}\leq 1-\mu.\label{lem:bridge-4}
	\end{align}
	
\end{lem}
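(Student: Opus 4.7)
The backbone is the identity $\hat{\bW}(t)\hat{\btheta}^{\,\star} = \bX(t)\bbeta^{\star} = \bW(t)\btheta^{\star}$, which follows from $\hat{\bU} + \hat{\bF}\hat{\bB}^{\top} = \bX_2 = \bU + \bF\bB^{\top}$ (the first equality is the definition of $\hat{\bU}$, and the second is the factor model~\eqref{factor-x}). Consequently the risk-set weights $\exp(\hat{\bw}_i(t)^{\top}\hat{\btheta}^{\,\star}) = \exp(\bw_i(t)^{\top}\btheta^{\star})$ agree sample by sample, so $S^{(0)}(\hat{\bW},\hat{\btheta}^{\,\star},t) = S^{(0)}(\bW,\btheta^{\star},t)$ and the difference between $\nabla^2\cL(\hat{\bW}\hat{\btheta}^{\,\star})$ and $\nabla^2\cL(\bW\btheta^{\star})$ comes only through the outer products $\hat{\bw}_i(t)^{\otimes\ell}$ in $S^{(1)}$ and $S^{(2)}$. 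Expanding $\hat{w}_a\hat{w}_b - w_aw_b = \Delta_aw_b + w_a\Delta_b + \Delta_a\Delta_b$ with $\|\Delta\|_{\max}\leq \varepsilon$, and using $\overline{N}(\tau)/n\leq 1$, I will first derive the master entrywise estimate
\begin{equation*}
\|\nabla^2\cL(\hat{\bW}\hat{\btheta}^{\,\star}) - \nabla^2\cL(\bW\btheta^{\star})\|_{\max} \leq 2\varepsilon\bigl(2\sup_{t\in[0,\tau]}\|\bW(t)\|_{\max}+\varepsilon\bigr),
\end{equation*}
which also applies verbatim to $\bC := \nabla^2_{\cS\cS}\cL(\hat{\bW}\hat{\btheta}^{\,\star}) - \nabla^2_{\cS\cS}\cL(\bW\btheta^{\star})$ and $\bD := \nabla^2_{\cS^c\cS}\cL(\hat{\bW}\hat{\btheta}^{\,\star}) - \nabla^2_{\cS^c\cS}\cL(\bW\btheta^{\star})$.

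For \eqref{lem:bridge-1}, I will invoke the integral form of Taylor's theorem, reducing the claim to an entrywise bound on the third derivative $\nabla^3_{abk}\cL(\hat{\bW}\btheta) = n^{-1}\int_0^{\tau}\partial_{\theta_k}V_{ab}(\hat{\bW},\btheta,t)\,d\overline{N}(t)$. Direct differentiation of $V_{ab} = S^{(2)}_{ab}/S^{(0)} - S^{(1)}_aS^{(1)}_b/(S^{(0)})^2$ produces five terms, each controlled by the elementary estimate $|S^{(\ell)}_{\cdot}/S^{(0)}|\leq \|\hat{\bW}(t)\|_{\max}^{\ell}$, so that $|\nabla^3_{abk}\cL| \leq 6\sup_t\|\hat{\bW}(t)\|_{\max}^{3}$. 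Because the factor block lies in $\cS$ (as $\bgamma_2$ is not penalized), we have $\supp(\btheta-\hat{\btheta}^{\,\star})\subseteq\cS$, and one Cauchy--Schwarz step over $k\in\cS$ followed by the $|\cS|$-term row sum over $b\in\cS$ contributes a combined factor $|\cS|^{3/2}$, producing exactly the claimed $M\|\btheta-\hat{\btheta}^{\,\star}\|_2$.

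Parts \eqref{lem:bridge-2} and \eqref{lem:bridge-3} follow from perturbation of $\bB := \nabla^2_{\cS\cS}\cL(\bW\btheta^{\star})$ by $\bC$. Since $\|\bC\|_q \leq |\cS|\|\bC\|_{\max}$ for $q\in\{2,\infty\}$, the master bound together with Assumption~\ref{assump:factor} gives $\|\bC\|_q \leq \kappa_{\infty}\mu$, and Assumption~\ref{assump:RSC-1} then yields $\|\bC\bB^{-1}\|_q \leq \mu/4 < 1/2$. Lemma~\ref{inverse-perturbation} applied with $\bA=\bI$ produces $\|(\bB+\bC)^{-1}\|_q \leq (4\kappa_q)^{-1}\bigl(1+(\mu/4)/(1-\mu/4)\bigr) \leq (2\kappa_q)^{-1}$. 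For \eqref{lem:bridge-4}, I will decompose
\begin{equation*}
\nabla^2_{\cS^c\cS}\cL(\hat{\bW}\hat{\btheta}^{\,\star})(\bB+\bC)^{-1} = \nabla^2_{\cS^c\cS}\cL(\bW\btheta^{\star})\bB^{-1} + \nabla^2_{\cS^c\cS}\cL(\bW\btheta^{\star})\bigl[(\bB+\bC)^{-1}-\bB^{-1}\bigr] + \bD(\bB+\bC)^{-1}.
\end{equation*}
Assumption~\ref{assump:IC} bounds the first summand by $1-2\mu$ in $\infty$-norm; Lemma~\ref{inverse-perturbation} applied with $\bA = \nabla^2_{\cS^c\cS}\cL(\bW\btheta^{\star})$ bounds the second by $(1-2\mu)(\mu/4)/(1-\mu/4) \leq 2\mu/7$; and the third, using part \eqref{lem:bridge-2} and the same $\kappa_{\infty}\mu$ bound on $\|\bD\|_{\infty}$, is at most $\kappa_{\infty}\mu\cdot(2\kappa_{\infty})^{-1} = \mu/2$. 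Summing and noting $\mu<1/2$ gives a total bounded by $1-\mu$, as required.

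The main obstacle is the tight constant calibration in the last two parts: the budget $\kappa_{\infty}\mu/(2|\cS|)$ in Assumption~\ref{assump:factor} is precisely what survives the $|\cS|$-factor needed to convert $\|\cdot\|_{\max}$ to induced norms and the $1/(4\kappa_{\infty})$ factor from Assumption~\ref{assump:RSC-1}, bringing $\|\bC\bB^{-1}\|$ to $\mu/4$. This leaves just enough slack in the three-term decomposition of \eqref{lem:bridge-4} for the irrepresentable bound to close below $1-\mu$; any looser budget would force Assumption~\ref{assump:factor} to be tightened.
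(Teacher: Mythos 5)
Your proposal is correct and follows essentially the same route as the paper: the same bound $6\sup_{t\in[0,\tau]}\|\hat{\bW}(t)\|_{\max}^3$ on the derivative of $\bV$ for \eqref{lem:bridge-1} with the $|\cS|^{3/2}$ factor from the $\ell_1$-to-$\ell_2$ step plus the row sum, the same cancellation $\hat{\bW}(t)\hat{\btheta}^{\,\star}=\bW(t)\btheta^{\star}$ yielding the entrywise Hessian perturbation $2\varepsilon\bigl(2\sup_{t}\|\bW(t)\|_{\max}+\varepsilon\bigr)$, and the same use of Lemma~\ref{inverse-perturbation} under the budget of Assumption~\ref{assump:factor} for \eqref{lem:bridge-2}--\eqref{lem:bridge-4}. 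The only minor deviations are that you establish \eqref{lem:bridge-3} by a direct $2$-norm perturbation of $\nabla^2_{\cS\cS}\cL(\bW\btheta^{\star})$ (which is cleaner than the paper's $\|\cdot\|_2\le\|\cdot\|_{\infty}$ shortcut) and your constant accounting in \eqref{lem:bridge-4}, namely $(1-2\mu)+2\mu/7+\mu/2\le 1-\mu$, is slightly different from but equivalent to the paper's.
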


\begin{proof}
    \underline{Proof of~\eqref{lem:bridge-1}}.
    For simplicity of notation, we define
	\begin{equation*}
	M_{\nabla V} =
	\sup_{t \in [0, \tau]} \sup_{\btheta\in \R^{p+K}} \sup_{(j,k) \in [p+K]\times \cS} \| \nabla_{\btheta} (\bV(\hat{\bW},\btheta,t)_{j,k}) \|_{\infty}.
	\end{equation*}
	For any $(j,k)\in[p+K]\times \cS$
	and $\btheta\in \R^{p+K}$ satisfying $\supp(\btheta)\subseteq \cS$,
	\begin{equation*}
	\begin{split}
	|\nabla^2_{jk}\cL(\hat{\bW}\btheta)-\nabla^2_{jk}\cL(\hat{\bW}\hat{\btheta}^{\,\star})|
	& = \left| \frac{1}{n} \int_0^{\tau} \{ \bV(\hat{\bW},\btheta,t) - \bV(\hat{\bW},\hat{\btheta}^{\,\star},t) \}_{jk} d \overline{N}(t) \right| \\
	& \leq \frac{1}{n} \int_0^{\tau} M_{\nabla V} \| \btheta - \hat{\btheta}^{\,\star} \|_1 d \overline{N}(t)	\\
	& \leq M_{\nabla V} \| \btheta - \hat{\btheta}^{\,\star} \|_1 \\
	& \leq M_{\nabla V} \sqrt{|\cS|} \| \btheta - \hat{\btheta}^{\,\star} \|_2.
	\end{split}
	\end{equation*}
	Hence,
	\begin{equation}
	\label{bound-for-(i)}
	\begin{split}
	\|\nabla^2_{\cdot \cS}\cL(\hat{\bW}\btheta)-\nabla^2_{\cdot \cS}\cL(\hat{\bW}\hat{\btheta}^{\,\star})\|_{\infty}
	& = \max_{j\in[p+K]}
	\|\nabla^2_{j \cS}\cL(\hat{\bW}\btheta)-\nabla^2_{j \cS}\cL(\hat{\bW}\hat{\btheta}^{\,\star})\|_{1}\\
	& \leq |\cS|^{3/2} M_{\nabla V} \|\btheta-\hat{\btheta}^{\,\star}\|_2.
	\end{split}
	\end{equation}
	We now prove an upper bound for $M_{\nabla V}$. First, we write
	\begin{align*}
	\forall \tilde{\btheta} \in \R^{p+K}, \bV(\hat{\bW},\tilde{\btheta},t)_{jk}
	& = \sum_{i=1}^n \alpha_i(\tilde{\btheta},t) \hat{w}_{ij}(t) \hat{w}_{ik}(t) \\
	& \qquad - \left(\sum_{i=1}^n \alpha_i(\tilde{\btheta},t) \hat{w}_{ij}(t)\right) \left(\sum_{i=1}^n \alpha_i(\tilde{\btheta},t) \hat{w}_{ik}(t)\right),
	\end{align*}
	where
	\begin{equation*}
	\alpha_i(\tilde{\btheta},t) = \frac{Y_i(t) \exp(\hat{\bw}_i(t)^{\top} \tilde{\btheta})}{\sum_{\ell=1}^n Y_{\ell}(t) \exp(\hat{\bw}_{\ell}(t)^{\top} \tilde{\btheta})}.
	\end{equation*}
	The gradient of $\alpha_i(\tilde{\btheta},t)$ with respect to $\btheta$ is
	\begin{equation*}
	\nabla_{\btheta} \alpha_i(\tilde{\btheta},t)
	= \hat{\bw}_i(t) \alpha_i(\tilde{\btheta},t) - \alpha_i(\tilde{\btheta},t) \sum_{\ell=1}^n \alpha_{\ell}(\tilde{\btheta},t) \hat{\bw}_{\ell}(t).
	\end{equation*}
	Consequently, $\nabla_{\btheta} (\bV(\hat{\bW},\tilde{\btheta},t)_{jk})$ equals the following expression
	\begin{equation}\label{eq:grad-V}
	\begin{split}
	& \sum_{i=1}^n \left(\hat{\bw}_i(t) \alpha_i(\tilde{\btheta},t) - \alpha_i(\tilde{\btheta},t) \sum_{\ell=1}^n \alpha_{\ell}(\tilde{\btheta},t) \hat{\bw}_{\ell}(t) \right) \hat{w}_{ij}(t) \hat{w}_{ik}(t) \\
	& \qquad - \left(\sum_{i=1}^n \left(\hat{\bw}_i(t) \alpha_i(\tilde{\btheta},t) - \alpha_i(\tilde{\btheta},t) \sum_{\ell=1}^n \alpha_{\ell}(\tilde{\btheta},t) \hat{\bw}_{\ell}(t)\right) \hat{w}_{ij}(t) \right) \left(\sum_{i=1}^n
	\alpha_i(\tilde{\btheta},t) \hat{w}_{ik}(t) \right) \\
	& \qquad - \left(\sum_{i=1}^n \alpha_i(\tilde{\btheta},t) \hat{w}_{ij}(t) \right) \left(\sum_{i=1}^n \left(\hat{\bw}_i(t) \alpha_i(\tilde{\btheta},t) - \alpha_i(\tilde{\btheta},t) \sum_{\ell=1}^n \alpha_{\ell}(\tilde{\btheta},t) \hat{\bw}_{\ell}(t)\right) \hat{w}_{ik}(t) \right).
	\end{split}
	\end{equation}
	This expression has three terms. The first one can be rewritten as
	\begin{equation*}
	\sum_{i=1}^n \alpha_i(\tilde{\btheta},t) \hat{\bw}_i(t) \hat{w}_{ij}(t) \hat{w}_{ik}(t)
	- \left(\sum_{i=1}^n \alpha_i(\tilde{\btheta},t) \hat{w}_{ij}(t) \hat{w}_{ik}(t)\right)\left(\sum_{\ell=1}^n \alpha_{\ell}(\tilde{\btheta},t) \hat{\bw}_{\ell}(t) \right).
	\end{equation*}
	Using the triangle inequality, we get that the supremum norm of this term is upper bounded by $2 \|\hat{\bW}(t)\|_{\max}^3$, since $\alpha_i(\tilde{\btheta},t)\geq 0$ and $\sum_{i=1}^n \alpha_i(\tilde{\btheta},t) = 1$. We study similarly the next two terms in~\eqref{eq:grad-V} to obtain $\|\nabla_{\btheta} (\bV(\hat{\bW},\tilde{\btheta},t)_{jk})\|_{\infty} \leq 6 \|\hat{\bW}(t)\|_{\max}^3$. Therefore, $M_{\nabla V}\leq 6 \sup\limits_{t\in [0, \tau]} \|\hat{\bW}(t)\|_{\max}^3$.
	Using this result together with~\eqref{bound-for-(i)}, we get~\eqref{lem:bridge-1}.
	
	\bigskip
	
	\underline{Proof of~\eqref{lem:bridge-2}}. For any $k\in[p+K]$,
	\begin{equation}\label{diff-hessian-from-V}
	\|\nabla^2_{k \cS}\cL(\hat{\bW}\hat{\btheta}^{\,\star})-\nabla^2_{k \cS}\cL(\bW\btheta^{\star})\|_{\infty}
	= \left\| \frac{1}{n} \int_0^{\tau} \{ \bV(\bW,\btheta^{\star},t) - \bV(\hat{\bW},\hat{\btheta}^{\,\star},t) \}_{k \cS} d \overline{N}(t) \right\|_{\infty}.
	\end{equation}
	As $\hat{\bW}(t) \hat{\btheta}^{\,\star} = \bX(t) \bbeta^{\star} = \bW(t) \btheta^{\star}$ for all $t\in [0, \tau]$, we can write explicitly
	\begin{equation*}
	\begin{split}
	\{ \bV(\bW,\btheta^{\star},t) - \bV(\hat{\bW},\hat{\btheta}^{\,\star},t) \}_{k \cS}
	& = \frac{\ba_1}{\sum_{\ell=1}^n Y_{\ell}(t) \exp(\bx_{\ell}(t)^{\top} \bbeta^{\star})} \\
	& \qquad - \frac{\ba_2}{\left(\sum_{\ell=1}^n Y_{\ell}(t) \exp(\bx_{\ell}(t)^{\top} \bbeta^{\star})\right)^2},
	\end{split}
	\end{equation*}		
	where
	\begin{equation*}
	\ba_1 = \sum_{i=1}^n Y_i(t) \exp(\bx_i(t)^{\top} \bbeta^{\star}) \{ w_{ik}(t) \bw_{i \cS}(t)^{\top} - \hat{w}_{ik}(t) \hat{\bw}_{i \cS}(t)^{\top} \},
	\end{equation*}
	and
	\begin{equation*}
	\begin{split}
	\ba_2
	& = \left(\sum_{i=1}^n Y_i(t) w_{ik}(t) \exp(\bx_i(t)^{\top} \bbeta^{\star})\right) \left(\sum_{i=1}^n Y_i(t) \bw_{i \cS}(t) \exp(\bx_i(t)^{\top} \bbeta^{\star})\right)^{\top} \\
	& \qquad - \left(\sum_{i=1}^n Y_i(t) \hat{w}_{ik}(t) \exp(\bx_i(t)^{\top} \bbeta^{\star})\right) \left(\sum_{i=1}^n Y_i(t) \hat{\bw}_{i \cS}(t) \exp(\bx_i(t)^{\top} \bbeta^{\star})\right)^{\top} \\
	& = \sum_{i=1}^n \sum_{j=1}^n Y_i(t) Y_j(t) \exp(\bx_i(t)^{\top} \bbeta^{\star}) \exp(\bx_j(t)^{\top} \bbeta^{\star}) \{ w_{ik}(t) \bw_{j \cS}(t)^{\top} - \hat{w}_{ik}(t) \hat{\bw}_{j \cS}(t)^{\top} \}.
	\end{split}
	\end{equation*}
	Define
	\begin{align}
	\alpha_i(\bbeta^{\star},t) & = \frac{Y_i(t) \exp(\bx_i(t)^{\top} \bbeta^{\star})}{\sum_{\ell=1}^n Y_{\ell}(t) \exp(\bx_{\ell}(t)^{\top} \bbeta^{\star})}, \label{eq:alpha_i_appendix} \\
	\alpha_{i,j}(\bbeta^{\star},t) & = \frac{Y_i(t) \exp(\bx_i(t)^{\top} \bbeta^{\star}) Y_j(t) \exp(\bx_j(t)^{\top} \bbeta^{\star})}{\left(\sum_{\ell=1}^n Y_{\ell}(t) \exp(\bx_{\ell}(t)^{\top} \bbeta^{\star})\right)^2} \nonumber.
	\end{align}
	Note that $\alpha_i(\bbeta^{\star},t) \in [0,1]$ and $\sum_{i=1}^n \alpha_i(\bbeta^{\star},t) = 1$. Similarly, $\alpha_{i,j}(\bbeta^{\star},t) \in [0,1]$ and $\sum_{i=1}^n \sum_{j=1}^n \alpha_{i,j}(\bbeta^{\star},t) = 1$.
With this notation, we can write
	\begin{equation*}
	\begin{split}
	\{ \bV(\bW,\btheta^{\star},t) - \bV(\hat{\bW},\hat{\btheta}^{\,\star},t) \}_{k \cS}
	& = \sum_{i=1}^n \alpha_i(\bbeta^{\star},t) \{ w_{ik}(t) \bw_{i \cS}(t)^{\top} - \hat{w}_{ik}(t) \hat{\bw}_{i \cS}(t)^{\top} \} \\
	& \qquad - \sum_{i=1}^n \sum_{j=1}^n \alpha_{i,j}(\bbeta^{\star},t) \{ w_{ik}(t) \bw_{j \cS}(t)^{\top} - \hat{w}_{ik}(t) \hat{\bw}_{j \cS}(t)^{\top} \}.
	\end{split}
	\end{equation*}		
	Therefore,
	\begin{equation*}
	\begin{split}
	\| \{ \bV(\bW,\btheta^{\star},t) - \bV(\hat{\bW},\hat{\btheta}^{\,\star},t) \}_{k \cS} \|_{\infty}
	& \leq \max_{i\in [n]} \| w_{ik}(t) \bw_{i \cS}(t)^{\top} - \hat{w}_{ik}(t) \hat{\bw}_{i \cS}(t)^{\top} \|_{\infty} \\
	& \qquad + \max_{i,j\in [n]} \| w_{ik}(t) \bw_{j \cS}(t)^{\top} - \hat{w}_{ik}(t) \hat{\bw}_{j \cS}(t)^{\top} \|_{\infty}.
	\end{split}
	\end{equation*}
Let $M_W = \sup\limits_{t\in [0, \tau]} \|\bW(t)\|_{\max} = \sup\limits_{t\in [0, \tau]} \|\bX_1(t)\|_{\max} \lor \|\bW_2\|_{\max}$. Then $\|\hat{\bW}(t)\|_{\max}\leq M_W + \varepsilon$.

\vspace{1mm}
	\noindent On the one hand, for any $i \in [n]$, $\|w_{ik}(t)\bw_{i \cS}(t)^{\top} - \hat{w}_{ik}(t)\hat{\bw}_{i \cS}(t)^{\top}\|_{\infty}$ is upper bounded by
	\begin{equation*}
	\begin{split}
	&| w_{ik}(t)|\cdot\|(\hat{\bw}_{i \cS}(t)-\bw_{i \cS}(t))^{\top}\|_{\infty}+|\hat{w}_{ik}(t)-w_{ik}(t)|\cdot\|\hat{\bw}_{i \cS}(t)^{\top}\|_{\infty}\\
	&\leq \|\bW(t)\|_{\max}\cdot\|(\hat{\bw}_{i \cS}(t)-\bw_{i \cS}(t))^{\top}\|_{\infty}+|\hat{w}_{ik}(t)-w_{ik}(t)|\cdot \|\hat{\bW}(t)\|_{\max}\\
	&\leq M_W\|(\hat{\bw}_{i \cS}(t)-\bw_{i \cS}(t))^{\top}\|_{\infty}+(M_W+\varepsilon)|\hat{w}_{ik}(t)-w_{ik}(t)|.
	\end{split}
	\end{equation*}
	On the other hand, for any $i,j \in [n]$, $\|w_{ik}(t)\bw_{j \cS}(t)^{\top} - \hat{w}_{ik}(t)\hat{\bw}_{j \cS}(t)^{\top}\|_{\infty}$ is upper bounded by
	\begin{equation*}
	\begin{split}
	&| w_{ik}(t)|\cdot\|(\hat{\bw}_{j \cS}(t)-\bw_{j \cS}(t))^{\top}\|_{\infty}+|\hat{w}_{ik}(t)-w_{ik}(t)|\cdot\|\hat{\bw}_{j \cS}(t)^{\top}\|_{\infty}\\
	&\leq \|\bW(t)\|_{\max}\cdot\|(\hat{\bw}_{j \cS}(t)-\bw_{j \cS}(t))^{\top}\|_{\infty}+|\hat{w}_{ik}(t)-w_{ik}(t)|\cdot \|\hat{\bW}(t)\|_{\max}\\
	&\leq M_W\|(\hat{\bw}_{j \cS}(t)-\bw_{j \cS}(t))^{\top}\|_{\infty}+(M_W+\varepsilon)|\hat{w}_{ik}(t)-w_{ik}(t)|.
	\end{split}
	\end{equation*}
Consequently,
	\begin{equation*}
	\begin{split}
	\| \{ \bV(\bW,\btheta^{\star},t) - \bV(\hat{\bW},\hat{\btheta}^{\,\star},t) \}_{k \cS} \|_{\infty}
	& \leq 2(2M_W+\varepsilon) \|\hat{\bW}(t) - \bW(t)\|_{\max} \\
	& \leq 2 \varepsilon (2M_W+\varepsilon).
	\end{split}
	\end{equation*}
    Plugging the inequality derived above into~\eqref{diff-hessian-from-V}, we get
	\begin{equation*}
	\begin{split}
	\|\nabla^2_{k \cS}\cL(\hat{\bW}\hat{\btheta}^{\,\star})-\nabla^2_{k \cS}\cL(\bW\btheta^{\star})\|_{\infty}
	& \leq \frac{1}{n} \int_0^{\tau} \| \{ \bV(\bW,\btheta^{\star},t) - \bV(\hat{\bW},\hat{\btheta}^{\,\star},t) \}_{k \cS} \|_{\infty} d \overline{N}(t) \\
	& \leq \frac{1}{n} \int_0^{\tau} 2 \varepsilon (2M_W+\varepsilon) d \overline{N}(t) \\
	& \leq 2 \varepsilon (2M_W+\varepsilon).
	\end{split}
	\end{equation*}
	Therefore,
	\begin{equation}\label{lem:B-inf}
	\begin{split}
	\|\nabla^2_{\cdot \cS}\cL(\hat{\bW}\hat{\btheta}^{\,\star})-\nabla^2_{\cdot \cS}\cL(\bW\btheta^{\star})\|_{\infty}
	&=\max_{k\in[p+K]}
	\|\nabla^2_{k \cS}\cL(\hat{\bW}\hat{\btheta}^{\,\star})-\nabla^2_{k \cS}\cL(\bW\btheta^{\star})\|_{1}\\
	&\leq 2 \varepsilon (2M_W+\varepsilon) |\cS|.
	\end{split}
	\end{equation}
	Let $\alpha=\|(\nabla^2_{\cS \cS}\cL(\bW\btheta^{\star}))^{-1}[\nabla^2_{\cS \cS}\cL(\hat{\bW}\hat{\btheta}^{\,\star})-\nabla^2_{\cS \cS}\cL(\bW\btheta^{\star})]\|_{\infty}$. Then
	\begin{equation}\label{lemma_B_3_alpha}
	\begin{split}
	\alpha
	&\leq \|(\nabla^2_{\cS \cS}\cL(\bW\btheta^{\star}))^{-1}\|_{\infty}\|\nabla^2_{\cS \cS}\cL(\hat{\bW}\hat{\btheta}^{\,\star})-\nabla^2_{\cS \cS}\cL(\bW\btheta^{\star})\|_{\infty}\\
	&\leq \frac{1}{4\kappa_{\infty}}2 \varepsilon (2M_W+\varepsilon) |\cS| \\
	&\leq\frac{\mu}{4},
	\end{split}
	\end{equation}
	the last inequality coming from Assumption~\ref{assump:factor}.
	With Lemma~\ref{inverse-perturbation}, we then obtain
	\begin{equation*}
	\begin{split}
	\|(\nabla^2_{\cS \cS}\cL(\hat{\bW}\hat{\btheta}^{\,\star}))^{-1}-(\nabla^2_{\cS \cS}\cL(\bW\btheta^{\star}))^{-1}\|_{\infty}
	&\leq \|(\nabla^2_{\cS \cS}\cL(\bW\btheta^{\star}))^{-1}\|_{\infty}\frac{\alpha}{1-\alpha}\\
    &\leq\frac{1}{4\kappa_{\infty}},
	\end{split}
	\end{equation*}
	as $\alpha \leq \mu/4 \leq 1/8 \leq 1/2$.
	Combined with Assumption~\ref{assump:RSC-1} and the triangle inequality, the inequality derived above gives
	\begin{equation}\label{lemma_B_3_tii}
	\begin{split}
	&\|(\nabla^2_{\cS \cS}\cL(\hat{\bW}\hat{\btheta}^{\,\star}))^{-1}\|_{\infty}\leq\|(\nabla^2_{\cS \cS}\cL(\bW\btheta^{\star}))^{-1}\|_{\infty}+\frac{1}{4\kappa_{\infty}}\leq \frac{1}{2\kappa_{\infty}}.
	\end{split}
	\end{equation}
	
	\bigskip
	
	\underline{Proof of~\eqref{lem:bridge-3}}. Using~\eqref{lemma_B_3_tii} and the fact that $\|\bA\|_2\leq\|\bA\|_{\infty}$ for any symmetric matrix $\bA$, we obtain
	\begin{equation*}
	\|(\nabla^2_{\cS \cS}\cL(\hat{\bW}\hat{\btheta}^{\,\star}))^{-1}\|_2\leq\frac{1}{2\kappa_{\infty}}\leq\frac{1}{2\kappa_2}.
	\end{equation*}
	
	\bigskip
	
	\underline{Proof of~\eqref{lem:bridge-4}}. We first write
	\begin{equation}
	\begin{split}
	&\|\nabla^2_{\cS^{c} \cS} \cL(\hat{\bW}\hat{\btheta}^{\,\star})(\nabla^2_{\cS \cS} \cL(\hat{\bW}\hat{\btheta}^{\,\star}))^{-1}
	-\nabla^2_{\cS^{c} \cS} \cL(\bW\btheta^{\star})(\nabla^2_{\cS \cS} \cL(\bW\btheta^{\star}))^{-1}\|_{\infty}\\
	&\leq \|\nabla^2_{\cS^{c} \cS} \cL(\hat{\bW}\hat{\btheta}^{\,\star})-\nabla^2_{\cS^{c} \cS} \cL(\bW\btheta^{\star})\|_{\infty}
	\|(\nabla^2_{\cS \cS} \cL(\hat{\bW}\hat{\btheta}^{\,\star}))^{-1}\|_{\infty}\\
	&\qquad +\|\nabla^2_{\cS^{c} \cS} \cL(\bW\btheta^{\star})[(\nabla^2_{\cS \cS} \cL(\hat{\bW}\hat{\btheta}^{\,\star}))^{-1}-(\nabla^2_{\cS \cS} \cL(\bW\btheta^{\star}))^{-1}]\|_{\infty}.
	\label{lem:bridge-4-proof-start}
	\end{split}
	\end{equation}
The first term of the right-hand side is
	\begin{equation}
	\begin{split}
	&\|\nabla^2_{\cS^{c} \cS} \cL(\hat{\bW}\hat{\btheta}^{\,\star})-\nabla^2_{\cS^{c} \cS} \cL(\bW\btheta^{\star})\|_{\infty}
	\|(\nabla^2_{\cS \cS} \cL(\hat{\bW}\hat{\btheta}^{\,\star}))^{-1}\|_{\infty}\leq\frac{1}{\kappa_{\infty}}\varepsilon (2M_W + \varepsilon) |\cS|,
	\label{4-RHS-1}
	\end{split}
	\end{equation}
	by~\eqref{lem:bridge-2} and~\eqref{lem:B-inf}.
	As for the second term, take $\bA=\nabla^2_{\cS^{c} \cS} \cL(\bW\btheta^{\star})$, $\bB=\nabla^2_{\cS \cS} \cL(\bW\btheta^{\star})$ and $\bC=\nabla^2_{\cS \cS} \cL(\hat{\bW}\hat{\btheta}^{\,\star})-\nabla^2_{\cS \cS} \cL(\bW\btheta^{\star})$. By Assumption~\ref{assump:IC}, $\|\bA\bB^{-1}\|_{\infty}\leq 1-2\mu\leq 1$. And~\eqref{lemma_B_3_alpha} gives $\|\bC\|_{\infty}\|\bB^{-1}\|_{\infty}\leq\frac{1}{4\kappa_{\infty}}2 \varepsilon (2M_W+\varepsilon) |\cS|\leq\frac{1}{2}$. Then, with Lemma~\ref{inverse-perturbation},
	\begin{equation}
	\begin{split}
	&\|\nabla^2_{\cS^{c} \cS} \cL(\bW\btheta^{\star})[(\nabla^2_{\cS \cS} \cL(\hat{\bW}\hat{\btheta}^{\,\star}))^{-1}-(\nabla^2_{\cS \cS} \cL(\bW\btheta^{\star}))^{-1}]\|_{\infty}\\
	&=\|\bA[(\bB+\bC)^{-1}-\bB^{-1}]\|_{\infty}\\
	&\leq \|\bA\bB^{-1}\|_{\infty} \frac{\|\bC\bB^{-1}\|_{\infty}}{1-\|\bC\bB^{-1}\|_{\infty}}\\
	&\leq \frac{\|\bC\|_{\infty}\|\bB^{-1}\|_{\infty}}{1-\|\bC\|_{\infty}\|\bB^{-1}\|_{\infty}}\\
	&\leq 2\|\bC\|_{\infty}\|\bB^{-1}\|_{\infty}\\
	&\leq \frac{1}{\kappa_{\infty}} \varepsilon (2M_W+\varepsilon) |\cS|.
	\label{4-RHS-2}
	\end{split}
	\end{equation}
	By~\eqref{lem:bridge-4-proof-start},~\eqref{4-RHS-1} and~\eqref{4-RHS-2}, we get
	\begin{equation*}
	\begin{split}
	&\|\nabla^2_{\cS^{c} \cS} \cL(\hat{\bW}\hat{\btheta}^{\,\star})(\nabla^2_{\cS \cS} \cL(\hat{\bW}\hat{\btheta}^{\,\star}))^{-1}-\nabla^2_{\cS^{c} \cS} \cL(\bW\btheta^{\star})(\nabla^2_{\cS \cS} \cL(\bW\btheta^{\star}))^{-1}\|_{\infty}\\
	&\leq  \frac{2}{\kappa_{\infty}}\varepsilon (2M_W + \varepsilon) |\cS|\\
	&\leq \mu,
	\end{split}
	\end{equation*}
	the second inequality coming from Assumption~\ref{assump:factor}.
	Using this inequality together with Assumption~\ref{assump:IC}, we then obtain
	\begin{equation*}
	\|\nabla^2_{\cS^{c} \cS} \cL(\hat{\bW}\hat{\btheta}^{\,\star})(\nabla^2_{\cS \cS} \cL(\hat{\bW}\hat{\btheta}^{\,\star}))^{-1}\|_{\infty}\leq (1-2\mu)+\mu=1-\mu.
	\end{equation*}
\end{proof}

\section{Proof of Lemma~\ref{lem:gradient-upper-bound}}\label{sec:proof-lem-gradient-upper-bound}

With $\hat{\btheta}^{\,\star}$ defined in Section~\ref{sec:proof-thm:consistency-estimated-factors}, we have
\begin{equation*}
\eta = \displaystyle \|\nabla \cL(\hat{\bW} \hat{\btheta}^{\,\star})\|_{\infty}
= \left\|\frac{1}{n} \sum_{i=1}^n \int_0^{\tau} \left\{ \hat{\bw}_i(t) -
\frac{S^{(1)}(\hat{\bW},\hat{\btheta}^{\,\star},t)}{S^{(0)}(\hat{\bW},\hat{\btheta}^{\,\star},t)} \right\} d N_i(t)\right\|_{\infty},
\end{equation*}
and we can write
\begin{equation*}
\nabla \cL(\hat{\bW} \hat{\btheta}^{\,\star}) = \bxi_1 + \bxi_2 + \bxi_3,
\end{equation*}
where
\begin{align*}
\bxi_1
& = -\frac{1}{n} \sum_{i=1}^n \int_0^{\tau} \left\{ \bw_i(t) - \frac{S^{(1)}(\bW,\btheta^{\star},t)}{S^{(0)}(\bW,\btheta^{\star},t)} \right\} d N_i(t),\cr
\bxi_2
& = -\frac{1}{n} \sum_{i=1}^n \int_0^{\tau} \big\{ \hat{\bw}_i(t) - \bw_i(t) \big\} d N_i(t),\cr
\bxi_3
& = -\frac{1}{n} \sum_{i=1}^n \int_0^{\tau} \left\{ \frac{S^{(1)}(\bW,\btheta^{\star},t)}{S^{(0)}(\bW,\btheta^{\star},t)} - \frac{S^{(1)}(\hat{\bW},\hat{\btheta}^{\,\star},t)}{S^{(0)}(\hat{\bW},\hat{\btheta}^{\,\star},t)} \right\} d N_i(t).
\end{align*}
Note that $\bxi_1=\nabla \cL(\bW \btheta^{\star})$. Lemma 3.3 in \citet{Huang_etal_2013} gives, for any $x>0$,
\begin{equation*}
\mathbb{P}(\|\nabla \cL(\bW \btheta^{\star})\|_{\infty} > 2 \sup\limits_{t\in [0, \tau]} \|\bW(t)\|_{\max}\, x) \leq 2(p+K) e^{-n x^2 / 2}.
\end{equation*}
Hence
\begin{align*}
\|\nabla \cL(\bW \btheta^{\star})\|_{\infty}
& = O_{\mathbb{P}}\left(\sqrt{\frac{\log (p+K)}{n}}\sup\limits_{t\in [0, \tau]} \|\bW(t)\|_{\max}\right) \\
& = O_{\mathbb{P}}\left(
\sqrt{\frac{\log (p+K)}{n}}\sup\limits_{t\in [0, \tau]} \|\bX_1(t)\|_{\max}
\lor
\|\bW_2\|_{\max}
\right).
\end{align*}
As $\hat{\bW}(t) \hat{\btheta}^{\,\star} = \bX(t) \bbeta^{\star} = \bW(t) \btheta^{\star}$ for all $t\in [0, \tau]$, we have
\begin{equation*}
\frac{S^{(1)}(\bW,\btheta^{\star},t)}{S^{(0)}(\bW,\btheta^{\star},t)} - \frac{S^{(1)}(\hat{\bW},\hat{\btheta}^{\,\star},t)}{S^{(0)}(\hat{\bW},\hat{\btheta}^{\,\star},t)}
= \sum_{i=1}^n \alpha_i(\bbeta^{\star},t) \big\{ \bw_i(t) - \hat{\bw}_i(t) \big\},
\end{equation*}
where $\alpha_i(\bbeta^{\star},t)$ is defined in~\eqref{eq:alpha_i_appendix}.
Therefore, by triangle inequality, we obtain that $\|\bxi_2\|_{\infty}$ and $\|\bxi_3\|_{\infty}$ are upper bounded by $\|\hat{\bW}_2 - \bW_2\|_{\max}$.
This is controlled with Lemma~\ref{lem:factor}, from which Remark~\ref{rem-gradient-upper-bound} follows.

\section{Proof of Lemma~\ref{lem:screening-population}}\label{sec:proof-lem-screening-population}

Recall that
\begin{equation}\label{population-par}
\bd_j(\beta,\bgamma)
= -\int_0^{\tau} \left\{ r_j^{(1)}(t) - \frac{s_j^{(1)}(\beta,\bgamma,t)}{s_j^{(0)}(\beta,\bgamma,t)} r_j^{(0)}(t)\right\} dt,
\end{equation}
and $(\beta_j, \bgamma_j)$ is the solution of $\bd_j(\beta_j,\bgamma_j) = \mathbf{0}_{1+K}$.

Let $S_T(\cdot\mid \bx)$ and $S_C(\cdot\mid \bx)$ be the conditional survival functions of the survival time $T$ and the censoring time $C$, respectively, and $F_T(\cdot\mid \bx)$ be the conditional cumulative distribution function of $T$, given the covariate vector $\bx$. The following lemma gives the first dimension of $\bd_j$ in terms of these functions.

\begin{lem}\label{lem:population-par}
The first dimension $d_{j1}(\beta, \bgamma)$ of $\bd_j(\beta,\bgamma)$ in ~\eqref{population-par} is given by
\begin{equation*}
\begin{split}
d_{j1}(\beta,\bgamma)
& = - \cov(u_{1j}, \E [ F_T(C\mid \bx) \mid \bx ]) \\
& \qquad + \int_0^{\tau} \frac{\E [ u_{1j} \exp(u_{1j} \beta + \bff_1^{\top} \bgamma) S_T(t\mid \bx) S_C(t\mid \bx) ]}{\E [ \exp(u_{1j} \beta + \bff_1^{\top} \bgamma) S_T(t\mid \bx) S_C(t\mid \bx) ]} \E [ \lambda(t\mid \bx) S_T(t\mid \bx) S_C(t\mid \bx) ] dt.
\end{split}
\end{equation*}
\end{lem}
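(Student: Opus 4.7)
The plan is to unfold each of the four quantities $r_j^{(\ell)}$ and $s_j^{(\ell)}$ appearing in the first coordinate of $\bd_j$ using iterated expectations and the conditional independence of $T$ and $C$ given $\bx$, then recognize that three of the pieces reproduce the claim's integral term verbatim while the remaining piece, after a fundamental-theorem-of-calculus manipulation, collapses to a covariance.

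First I would use the conditional independence of $T$ and $C$ given $\bx$ to compute $\E[Y_1(t)\mid\bx_1]=\PP(T_1\geq t,C_1\geq t\mid\bx_1)=S_T(t\mid\bx_1)S_C(t\mid\bx_1)$. Conditioning on the $\sigma$-algebra generated by $(\bx_1,u_{1j},\bff_1)$ and using that $T_1,C_1$ depend only on $\bx_1$, iterated expectation then yields
\begin{align*}
s_j^{(0)}(\beta,\bgamma,t)&=\E\!\bigl[\exp(u_{1j}\beta+\bff_1^\top\bgamma)\,S_T(t\mid\bx)S_C(t\mid\bx)\bigr],\\
\bigl(s_j^{(1)}(\beta,\bgamma,t)\bigr)_1&=\E\!\bigl[u_{1j}\exp(u_{1j}\beta+\bff_1^\top\bgamma)\,S_T(t\mid\bx)S_C(t\mid\bx)\bigr],\\
r_j^{(0)}(t)&=\E\!\bigl[\lambda(t\mid\bx)\,S_T(t\mid\bx)S_C(t\mid\bx)\bigr],\\
\bigl(r_j^{(1)}(t)\bigr)_1&=\E\!\bigl[u_{1j}\,\lambda(t\mid\bx)\,S_T(t\mid\bx)S_C(t\mid\bx)\bigr].
\end{align*}
Substituting the first three identities into the first coordinate of the defining expression for $\bd_j$ reproduces, term by term, the integral appearing on the right-hand side of the claim.

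Next I would evaluate $-\int_0^{\tau}\bigl(r_j^{(1)}(t)\bigr)_1\,dt$ in closed form. The only analytical input needed is the identity $\lambda(t\mid\bx)\,S_T(t\mid\bx)=f_T(t\mid\bx)$, where $f_T(\cdot\mid\bx)$ is the conditional density of $T$. Combining this with Fubini (justified by nonnegativity of $f_T S_C$ and the boundedness of $u_{1j}$ from Assumption~\ref{assump:screening-bounded}) gives
$$\int_0^{\tau}\bigl(r_j^{(1)}(t)\bigr)_1\,dt=\E\!\left[u_{1j}\int_0^{\tau}f_T(t\mid\bx)\,S_C(t\mid\bx)\,dt\right].$$
Rewriting $S_C(t\mid\bx)=\E[\mathbb{I}\{C>t\}\mid\bx]$ and swapping integrals collapses the inner integral to $\E[F_T(C\wedge\tau\mid\bx)\mid\bx]$; since $\tau$ is the study ending time and all observed $C$ lie in $[0,\tau]$, this equals $\E[F_T(C\mid\bx)\mid\bx]$.

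Finally, the factor-model normalization $\E[\bu]=\mathbf{0}$ forces $\E[u_{1j}]=0$, so
$$\int_0^{\tau}\bigl(r_j^{(1)}(t)\bigr)_1\,dt=\E\!\bigl[u_{1j}\,\E[F_T(C\mid\bx)\mid\bx]\bigr]=\cov\!\bigl(u_{1j},\,\E[F_T(C\mid\bx)\mid\bx]\bigr).$$
Carrying the leading minus sign and reassembling with the previous step produces the stated formula. The main obstacle is not computational but a bookkeeping one: I must adopt (or make explicit) the convention that in the population counterparts $s_j^{(\ell)},r_j^{(\ell)}$ the unobservable $(u_{ij},\bff_i)$ replace their plug-in estimators $(\hat u_{ij},\hat{\bff}_i)$ (otherwise the ``population'' objects would still be random), and the convention $C\leq\tau$ almost surely is needed so that the inner integral collapses cleanly without a boundary correction.
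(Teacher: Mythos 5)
Your proposal is correct and follows essentially the same route as the paper: both unfold the population score via $\E[Y(t)\mid\bx]=S_T(t\mid\bx)S_C(t\mid\bx)$ under conditional independence and then identify $\int_0^{\tau}\E[u_{1j}\lambda(t\mid\bx)S_T(t\mid\bx)S_C(t\mid\bx)]\,dt$ with $\cov(u_{1j},\E[F_T(C\mid\bx)\mid\bx])$. You actually supply more detail than the paper on that last identity (the $\lambda S_T=f_T$ step, Fubini, the $C\le\tau$ convention, and $\E[u_{1j}]=0$), which the paper asserts in a single line; your bookkeeping remark about substituting the true $(u_{ij},\bff_i)$ for the plug-in estimators in the population quantities is a legitimate clarification of a convention the paper leaves implicit.
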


\medskip

\begin{proof}
First, the function $d_{j1}(\beta, \bgamma)$ is defined for any $(\beta, \bgamma) \in \R^{1+K}$ as
\begin{equation*}
d_{j1}(\beta, \bgamma) = \int_0^{\tau} \left\{ - \E [ Y(t) u_{1j} \lambda(t\mid \bx) ] + \frac{\E [ Y(t) u_{1j} \exp(u_{1j} \beta + \bff_1^{\top} \bgamma) ]}{\E [ Y(t) \exp(u_{1j} \beta + \bff_1^{\top} \bgamma) ]} \E [ Y(t) \lambda(t\mid \bx) ] \right\} dt.
\end{equation*}
By definition of $Y(t)$ and independence of $C$ and $T$ conditional on $\bx$, we get
\begin{align*}
& \E [ Y(t)\mid \bx ]
= S_T(t\mid \bx) S_C(t\mid \bx),\\
& \E [ Y(t) \lambda(t\mid \bx) ]
= \E [ \lambda(t\mid \bx) S_T(t\mid \bx) S_C(t\mid \bx) ],\\
& \E [ Y(t) u_{1j} \lambda(t\mid \bx) ]
= \E [u_{1j} \lambda(t\mid \bx) S_T(t\mid \bx) S_C(t\mid \bx) ],\\
& \E [ Y(t) u_{1j} \exp(u_{1j} \beta + \bff_1^{\top} \bgamma) ]
= \E [u_{1j} \exp(u_{1j} \beta + \bff_1^{\top} \bgamma) S_T(t\mid \bx) S_C(t\mid \bx) ],\\
& \int_0^{\tau} \E [ Y(t) u_{1j} \lambda(t\mid \bx) ] dt
= \int_0^{\tau} \E [ u_{1j} \lambda(t\mid \bx) S_T(t\mid \bx) S_C(t\mid \bx) ] dt
= \cov(u_{1j}, \E [ F_T(C\mid \bx)\mid \bx ]).
\end{align*}
Hence we obtain the desired result.
\end{proof}

\medskip

As $d_{j1}(\beta_j, \bgamma_j) = 0$, we obtain
\begin{equation}
\label{eq:lem-screening-mvt}
| d_{j1}(0,\bgamma_j) | = | d_{j1}(\beta_j, \bgamma_j) - d_{j1}(0,\bgamma_j) |
= \left| \frac{\partial d_{j1}}{\partial \beta}(\check{\beta}_j, \bgamma_j) \cdot \beta_j \right|,
\end{equation}
for some $\check{\beta}_j$ between zero and $\beta_j$, by the mean value theorem. On the one hand, we can use the following majorations
\begin{equation*}
\begin{split}
\left| \frac{\partial d_{j1}}{\partial \beta}(\check{\beta}_j, \bgamma_j) \right|
& = \left| \int_0^{\tau} \left\{ \frac{\E [ Y(t) u_{1j}^2 e^{u_{1j} \check{\beta}_j + \bff_1^{\top} \bgamma_j} ]}{\E [ Y(t) e^{u_{1j} \check{\beta}_j + \bff_1^{\top} \bgamma_j} ]} - \left(\frac{\E [ Y(t) u_{1j} e^{u_{1j} \check{\beta}_j + \bff_1^{\top} \bgamma_j} ]}{\E [ Y(t) e^{u_{1j} \check{\beta}_j + \bff_1^{\top} \bgamma_j} ]} \right)^2 \right\} \E [ Y(t) \lambda(t\mid \bx) ] dt \right| \\
& \leq 2 M_0^2 \int_0^{\tau} \E [ Y(t) \lambda(t\mid \bx) ] dt \\
& = 2 M_0^2 \E [ \E [ S_C(T\mid \bx) \mid \bx ] ].
\end{split}
\end{equation*}
As $S_C \leq 1$, this gives
\begin{equation}
\label{eq:lem-screening-upper}
\left| \frac{\partial d_{j1}}{\partial \beta}(\check{\beta}_j, \bgamma_j) \right|
\leq 2 M_0^2.
\end{equation}
On the other hand, we have
\begin{equation*}
\begin{split}
d_{j1}(0, \bgamma_j)
& = - \cov(u_{1j}, \E [ F_T(C\mid \bx) \mid \bx ]) \\
& \qquad + \int_0^{\tau} \frac{\E [ u_{1j} \exp(\bff_1^{\top} \bgamma_j) S_T(t\mid \bx) S_C(t\mid \bx) ]}{\E [\exp(\bff_1^{\top} \bgamma_j) S_T(t\mid \bx) S_C(t\mid \bx) ]} \E [ \lambda(t\mid \bx) S_T(t\mid \bx) S_C(t\mid \bx) ] d t.
\end{split}
\end{equation*}
$S_T(t\mid \bx) S_C(t\mid \bx)$ is the probability of being at risk at time $t$, and $\cov(u_{1j}, \exp(\bff_1^{\top} \bgamma_j) S_T(t\mid \bx) S_C(t\mid \bx))$ and $\cov(u_{1j}, \E [ F_T(C\mid \bx) \mid \bx ])$ have opposite signs for each $j \in \supp(\bbeta^{\star})$ and $t \in [0, \tau]$, as in appendix D of \citet{Zhao_Li_2012}.
Therefore,
\begin{equation}
\label{eq:lem-screening-lower}
\begin{split}
|d_{j1}(0, \bgamma_j)|
&\geq \left|\cov(u_{1j}, \E [ F_T(C\mid \bx) \mid \bx ]) \right|.
\end{split}
\end{equation}
With~\eqref{eq:lem-screening-mvt},~\eqref{eq:lem-screening-upper} and~\eqref{eq:lem-screening-lower}, we therefore get
\begin{equation*}
|\beta_j| \geq \frac{1}{2} M_0^{-2} | \cov (u_{1j}, \E[ F_T(C\mid \bx) \mid \bx ]) |.
\end{equation*}

\section{Proof of Theorem~\ref{thm:screening}}\label{sec:proof-thm-screening}

Using Lemma~\ref{lem:screening-population} and the condition $\xi \leq \nu \min_{j \in \supp(\bbeta^{\star})} \frac{1}{2} M_0^{-2} | \cov (u_{1j}, \E[ F_T(C\mid \bx) \mid \bx ]) |$,
we first obtain
\begin{equation}
\label{eq:proof-screening-1}
\xi \leq \nu \min_{j \in \supp(\bbeta^{\star})} |\beta_j|.
\end{equation}
Under the assumptions in Remark~\ref{rem-gradient-upper-bound} and the condition
\begin{equation*}
\min_{j \in \supp(\bbeta^{\star})}
| \cov (u_{1j}, \E[ F_T(C\mid \bx) \mid \bx ]) | \gg \sqrt{(\log p)/n} + 1/\sqrt{p},
\end{equation*}
we can prove that
\begin{align*}
\max_{j \in \supp(\bbeta^{\star})} \| \bD_j(\beta_j,\bgamma_j) \|_{\infty}
& = O_{\mathbb{P}}\left( \sqrt{(\log p)/n} + 1/\sqrt{p} \right) \\
& = o_{\mathbb{P}} \Big(\min_{j \in \supp(\bbeta^{\star})}
| \cov (u_{1j}, \E[ F_T(C\mid \bx) \mid \bx ]) | \Big),
\end{align*}
where $\bD_j$ is defined in~\eqref{estim-equ}.
Moreover, the same strategy as in the proof of Theorem~\ref{thm:consistency-estimated-factors} gives the existence of a constant $C'>0$ such that
\begin{align*}
| \hat\beta_j - \beta_j |
\leq C' \| \bD_j(\beta_j,\bgamma_j) \|_{\infty},\; \forall j\in [p].
\end{align*}
Further using Lemma~\ref{lem:screening-population}, we then obtain
\begin{equation}
\label{eq:proof-screening-2}
\max_{j \in \supp(\bbeta^{\star})} |\hat\beta_j - \beta_j|
\leq C' \max_{j \in \supp(\bbeta^{\star})}\| \bD_j(\beta_j,\bgamma_j) \|_{\infty}
= o_{\mathbb{P}} \Big(\min_{j \in \supp(\bbeta^{\star})} |\beta_j|\Big).
\end{equation}
Then,~\eqref{eq:proof-screening-1} and~\eqref{eq:proof-screening-2} yield $\mathbb{P}(\supp(\bbeta^{\star}) \subseteq \{ j : | \hat\beta_j | \geq \xi \}) \to 1$.

\newpage

\bibliographystyle{apalike}
\bibliography{../references}

\end{document}